\newcommand{\definef}[1]{{\color{blue} #1}} 
\newcommand{\commentout}[1]{}
\newcommand{\cH}{{\cal H}}
\newcommand{\dY}{\downarrow$$Y}
\newcommand{\cT}{{\cal T}}
\newcommand{\cE}{{\cal E}}
\newcommand{\tb}{{\sf tb}}
\newcommand{\tw}{{\sf tw}}
\newcommand{\tl}{{\sf tl}}
\newcommand{\w}{{\sf w}}
\begin{document}
\title{Collective Additive Tree Spanners of Bounded Tree-Breadth Graphs with Generalizations and Consequences
}


\author{Feodor F. Dragan \and  Muad Abu-Ata
}

\institute{Algorithmic Research Laboratory, Department of Computer
Science, \\ Kent State University, Kent, OH 44242, USA  \\
{\em \{dragan, mabuata\}@cs.kent.edu}}

\maketitle

\begin{abstract}
In this paper, we study collective additive tree spanners for
families of graphs enjoying special Robertson-Seymour's
tree-decompositions, and demonstrate 
interesting consequences of obtained results. We say that a graph
$G$ {\em admits a system of $\mu$ collective additive tree
$r$-spanners} (resp., {\em multiplicative tree $t$-spanners}) if
there is a system $\cT(G)$ of at most $\mu$ spanning trees of $G$
such that for any two vertices $x,y$ of $G$ a spanning tree $T\in
\cT(G)$ exists such that $d_T(x,y)\leq d_G(x,y)+r$ (resp.,
$d_T(x,y)\leq t\cdot d_G(x,y)$). When $\mu=1$ one gets the notion of
{\em additive tree $r$-spanner} (resp., {\em
multiplicative tree $t$-spanner}). It is known 
that if a graph $G$ has a multiplicative tree $t$-spanner, then $G$
admits a Robertson-Seymour's tree-decomposition with bags of radius
at most $\lceil{t/2}\rceil$ in $G$. We use this to demonstrate that
there is a polynomial time algorithm that, given an $n$-vertex graph
$G$ admitting a multiplicative tree $t$-spanner, constructs a system
of at most $\log_2 n$ collective additive tree $O(t\log n)$-spanners
of $G$. That is, with a slight increase in the number of trees and
in the stretch, one can ``turn" a multiplicative tree spanner into a
small set of collective additive tree spanners. We extend this
result by showing that if a graph $G$ admits a multiplicative
$t$-spanner with tree-width $k-1$, then $G$ admits a
Robertson-Seymour's tree-decomposition each bag of which can be
covered with at most $k$ disks of $G$ of radius at most
$\lceil{t/2}\rceil$ each. This is used to demonstrate that, for
every fixed $k$, there is a polynomial time algorithm that, given an
$n$-vertex graph $G$ admitting a multiplicative $t$-spanner with
tree-width $k-1$, constructs a system of at most $k(1+ \log_2 n)$
collective additive tree $O(t\log n)$-spanners of $G$.

\medskip
\noindent{\bf Keywords:} {\em graph algorithms; approximation
algorithms; tree spanner problem; collective tree spanners; spanners
of bounded tree-width;  Robertson-Seymour's tree-decomposition;
balanced separators.}
\end{abstract}

\section{Introduction}

One of the basic questions  in the design of routing schemes for
communication  networks is to construct a spanning network (a
so-called spanner) which has
 two (often conflicting) properties: it should have simple
structure and nicely approximate distances in the network. This
problem  fits in a larger framework of combinatorial and algorithmic
problems that are concerned with distances in a finite metric space
induced by a graph. An arbitrary metric space (in particular a
finite metric defined by a graph) might not have enough structure to
exploit algorithmically. A powerful technique that has been
successfully used recently in this context is to embed the given
metric space in a simpler metric space such that the distances are
approximately preserved in the embedding. New and improved
algorithms have resulted from this idea for several important
problems (see, e.g.,
\cite{Bartal96,BartalBBT97,CharikarCGGP98,ElkinEST08,GuptaKR04,LinialLR95}).

There are several ways to measure the quality of this approximation,
two of them leading to the notion of a spanner. For $t\geq 1$, a
spanning subgraph $H$ of $G=(V,E)$ is called a
\emph{(multiplicative) $t$-spanner} of $G$ if $d_H(u,v)\leq t\cdot
d_G(u,v)$ for all $u,v\in V$ \cite{Chew,PeSc,PelUll}. If $r\geq 0$
and $d_H(u,v)\leq  d_G(u,v) + r$, for all $u,v\in V$, then $H$ is
called an \emph{additive $r$-spanner} of $G$
\cite{LieShe,Pri96,Kratsch}. The parameter $t$ is called the {\em
stretch} (or {\em stretch factor}) of $H$, while the parameter $r$
is called the {\em surplus} of $H$. In what follows, we will often
omit the word ``multiplicative" when we refer to multiplicative
spanners.

Tree metrics are a very natural class of simple metric spaces since
many algorithmic problems become tractable on them. A
\emph{(multiplicative) tree $t$-spanner}  of a graph $G$ is a
spanning tree with a stretch $t$ \cite{CaiC95}, and an
\emph{additive tree $r$-spanner}  of $G$ is a spanning tree with a
surplus $r$ \cite{Pri96}. If we approximate the graph by a tree
spanner, we can solve the problem on the tree and the solution
interpret on the original graph. The {\sc tree $t$-spanner} problem
asks, given a graph $G$ and a positive number $t$, whether $G$
admits a tree $t$-spanner. Note that the problem of finding a tree
$t$-spanner of $G$ minimizing $t$ is known in literature also as the
Minimum Max-Stretch spanning Tree problem (see, e.g., \cite{EmekP04}
and literature cited therein).

Unfortunately, not many graph families admit good tree spanners.
This motivates the study of sparse spanners, i.e., spanners with a
small amount of edges. There are many applications of spanners in
various areas; especially, in distributed systems and communication
networks. In \cite{PelegU89}, close relationships were established
between the quality of spanners (in terms of stretch factor and the
number of spanner edges), and the time and communication
complexities of any synchronizer for the network based on this
spanner. Another example is the usage of tree $t$-spanners in the
analysis of arrow distributed queuing protocols
\cite{HeKu06,PeRe01}. Sparse spanners are very useful in message
routing in communication networks; in order to maintain succinct
routing tables, efficient routing schemes can use only the edges of
a sparse spanner \cite{PelegU88}. The {\sc Sparsest $t$-Spanner}
problem asks, for a given graph $G$ and a number $t$, to find a
$t$-spanner of $G$ with the smallest number of edges. We refer to
the survey paper  of Peleg \cite{Peleg02} for an overview on
spanners.

Inspired by ideas from works of Alon et al. \cite{NoKaPe}, Bartal
\cite{Bartal96,Ba-5},  Fakcharoenphol et al.
\cite{FakcharoenpholRT03}, and to extend those ideas to designing
compact and efficient routing and distance labeling schemes in
networks,  in \cite{CollTrSp},  a new notion of {\em collective tree
spanners}\footnote{Independently, Gupta et al. in \cite{GuptaKR04}
introduced a similar concept which is called {\em  tree covers}
there.} were introduced. This notion slightly {\sl weaker} than the
one of a tree spanner and slightly {\sl stronger} than the notion of
a sparse spanner. We say that a graph $G=(V,E)$ {\em admits a system
of $\mu$ collective additive tree $r$-spanners} if there is a system
$\cT(G)$ of at most $\mu$ spanning trees of $G$ such that for any
two vertices $x,y$ of $G$ a spanning tree $T\in \cT(G)$ exists such
that $d_T(x,y)\leq d_G(x,y)+r$ (a multiplicative variant of this
notion can be defined analogously). Clearly, if $G$ admits a system
of $\mu$ collective additive tree $r$-spanners, then $G$ admits an
additive $r$-spanner with at most $\mu\times (n-1)$ edges (take the
union of all those trees), and if $\mu=1$ then $G$ admits an
additive tree $r$-spanner.

Recently, in \cite{DrFoGo}, {\em spanners of bounded tree-width}
were introduced, motivated by the fact that many algorithmic
problems are tractable on graphs of bounded tree-width, and a
spanner $H$ of $G$ with small tree-width can be used to obtain an
approximate solution to a problem on $G$. In particular, efficient
and compact distance and routing labeling schemes are available for
bounded tree-width graphs (see, e.g., \cite{CollTrSpPar,GuptaKR04}
and papers cited therein), and they can be used to compute
approximate distances and route along paths that are close to
shortest in $G$. The \textsc{$k$-Tree-width $t$-spanner} problem
asks, for a given graph $G$, an integers $k$ and a positive number
$t\geq 1$, whether $G$ admits a $t$-spanner of tree-width at most
$k$. Every connected graph with $n$ vertices and at most $n-1+m$
edges is of tree-width at most $m+1$ and hence this problem is a
generalization of the {\sc Tree $t$-Spanner} and the {\sc Sparsest
$t$-Spanner} problems. Furthermore,  $t$-spanners of bounded
tree-width have much more structure to exploit algorithmically than
sparse $t$-spanners (which have a small number of edges but may lack
other nice structural properties).

\subsection{Related work}
\paragraph{Tree spanners.} Substantial work has been done on the {\sc tree $t$-spanner} problem
on unweighted graphs. Cai and Corneil  \cite{CaiC95} have shown
that, for a given graph $G$, the problem to decide whether $G$ has a
tree $t$-spanner is NP-complete for any fixed $t\geq 4$ and is
linear time solvable for $t=1,2$ (the status of the case $t=3$ is
open for general graphs)\footnote{When $G$ is an unweighted graph,
$t$ can be assumed to be an integer.}. The NP-completeness result
was further strengthened in \cite{BrDrLeLe} and \cite{BrDrLeLeUe},
where Branst\"adt et al. showed that the problem remains NP-complete
even for the class of chordal graphs (i.e., for graphs where each
induced cycle has length 3) and every fixed $t\geq 4$, and for the
class of chordal bipartite graphs (i.e., for bipartite graphs where
each induced cycle has length 4) and every fixed $t\geq 5$.

The {\sc tree $t$-spanner} problem on planar graphs was studied in
\cite{DrFoGo,FeketeK01}. 
In \cite{FeketeK01}, Fekete  and Kremer  proved that the {\sc tree
$t$-spanner} problem  on planar graphs is NP-complete (when  $t$ is
part of the input) and polynomial time solvable for $t=3$.
For fixed $t\geq 4$,  the complexity of the {\sc tree $t$-spanner}
problem on arbitrary planar graphs was left as an open problem in
\cite{FeketeK01}. This open problem was recently resolved in
\cite{DrFoGo} by Dragan et al., where it was shown that the {\sc
tree $t$-spanner} problem  is linear time solvable for every fixed
constant $t$ on the class of apex-minor-free graphs which includes
all planar graphs and all graphs of bounded genus.
Note also that a number of particular graph classes (like interval
graphs, permutation graphs, asteroidal-triple--free graphs, strongly
chordal graphs, dually chordal graphs, and others) admit additive
tree $r$-spanners for small values of $r$ (we refer reader to
\cite{BrChDr,BrDrLeLe,BrDrLeLeUe,CaiC95,FeketeK01,LiWu08,Peleg02,PeRe01,Pri96,Kratsch}
and papers cited therein).

The first  $O(\log n)$-approximation algorithm for the minimum value
of $t$ for the {\sc tree $t$-spanner} problem was developed by  Emek
and Peleg in \cite{EmekP04} (where $n$ is the number of vertices in a graph). 
Recently, another logarithmic approximation algorithm for the
problem  was proposed in \cite{APPROX2011} (we elaborate more on
this in Subsection \ref{our-res}). Emek and Peleg also established
in \cite{EmekP04} that unless P = NP, the problem cannot be
approximated additively by any $o(n)$ term. Hardness of
approximation is established also in \cite{LiWu08}, where it was
shown that approximating  the minimum value of $t$ for the {\sc tree
$t$-spanner} problem within factor better than 2 is NP-hard (see
also \cite{PeRe01} for an earlier result).


\paragraph{Sparse spanners.}
Sparse $t$-spanners were introduced by Peleg, Sch{\"a}ffer and
Ullman in \cite{PeSc,PelegU89} and since that time were studied
extensively. It was shown by Peleg and  Sch{\"a}ffer in \cite{PeSc}
that the problem of deciding whether a graph $G$ has a $t$-spanner
with at most $m$ edges is NP-complete. Later, Kortsarz
\cite{Kortsarz01} showed that for every $t\geq 2$ there is a
constant $c<1$ such that it is NP-hard to approximate the sparsest
$t$-spanner within the ratio $c\cdot\log n$, where $n$ is the number
of vertices in the graph. On the other hand, the problem admits a
$O(\log n)$-ratio approximation for $t=2$
\cite{KortsarzP94,Kortsarz01} and a $O(n^{2/(t+1)})$-ratio
approximation for $t>2$ \cite{ElPe}. For some other
inapproximability and approximability results for the {\sc Sparsest
$t$-Spanner} problem on general graphs we refer the reader to
\cite{BBMRY,BGJRW-09,DKR-ICALP2012,DK-STOC2011,ElkinP00b,ElkinP00a,ElPe,ThorupZ05}
and papers cited therein. It is interesting to note also that any
(even weighted) $n$-vertex graph admits an $O(2k-1)$-spanner with at
most $O(n^{1+1/k})$ edges for any $k\geq 1$, and such a spanner can
be constructed in polynomial time
\cite{mspanner1,mspanner2,ThorupZ05}.

On planar graphs the {\sc Sparsest $t$-Spanner} problem was studied
as well. Brandes and Handke have shown that the decision version of
the problem remains NP-complete on planar graphs for every fixed
$t\geq 5$ (the case $2\leq t\leq 4$ is open) \cite{BrandesH98}.
Duckworth, Wormald, and  Zito  \cite{DuckworthWZ03} have shown that
the problem of finding a sparsest $2$-spanner of a $4$-connected
planar triangulation admits a polynomial time approximation scheme
(PTAS). Dragan et al.~\cite{DrFoGoPTAS} proved that the {\sc
Sparsest $t$-Spanner} problem admits PTAS for graph classes of
bounded local tree-width (and therefore for planar and bounded genus
graphs).

Sparse additive spanners were considered in
\cite{aspanner1,aspanner2,aspanner3,LieShe,Woodruff-ICALP10}. It is
known that every $n$-vertex graph admits an additive 2-spanner with
at most $\Theta(n^{3/2})$ edges \cite{aspanner2,aspanner3}, an
additive 6-spanner with at most $O(n^{4/3})$ edges \cite{aspanner1},
and an additive $O(n^{(1-1/k)/2})$-spanner with at most
$O(n^{1+1/k})$ edges for any $k\geq 1$ \cite{aspanner1}. All those
spanners can be constructed in polynomial time. We refer the reader
to paper \cite{Woodruff-ICALP10} for a good summary of the state of
the art results on the sparsest additive spanner problem in general
graphs.

\paragraph{Collective tree spanners.}
The problem of finding ``small'' systems of collective additive tree
$r$-spanners for small values of $r$ was examined on special classes
of graphs in \cite{CDKY,CoDrYa,CollTrSpPar,CollTrSp,UDGmy}. For
example, in \cite{CDKY,CollTrSp}, sharp results were obtained for
unweighted chordal graphs and $c$-chordal graphs (i.e., the graphs
where each induced cycle has length at most $c$): every $c$-chordal
graph admits a system of at most $\log_{2} n$ collective additive
tree $(2\lfloor c/2\rfloor)$--spanners, constructible in polynomial
time; no system of constant number of collective additive tree
$r$-spanners can exist for chordal  graphs (i.e., when $c=3$) and
$r\leq 3$, and no system of constant number of collective additive
tree $r$-spanners can exist for outerplanar graphs for any constant
$r$.

Only papers \cite{CollTrSpPar,GuptaKR04,UDGmy} have investigated
collective (multiplicative or additive) tree spanners in {\sl
weighted graphs}. It was shown that any weighted $n$-vertex planar
graph admits a system of $O(\sqrt{n})$ collective multiplicative
tree 1-spanners (equivalently, additive tree 0-spanners)
\cite{CollTrSpPar,GuptaKR04} and a system of at most $2\log_{3/2} n$
collective multiplicative tree $3$--spanners \cite{GuptaKR04}.
Furthermore, any weighted graph with genus at most $g$ admits a
system of $O(\sqrt{gn})$ collective additive tree $0$--spanners
\cite{CollTrSpPar,GuptaKR04}, any weighted graph with tree-width at
most $k-1$ admits a system of at most $k \log_{2} n$ collective
additive tree $0$--spanners \cite{CollTrSpPar,GuptaKR04}, any
weighted graph $G$ with clique-width at most $k$ admits a system of
at most  $k \log_{3/2} n$ collective additive tree $(2\w)$--spanners
\cite{CollTrSpPar}, any weighted $c$-chordal graph $G$ admits a
system of $\log_{2} n$ collective additive tree $(2\lfloor
c/2\rfloor\w)$--spanners \cite{CollTrSpPar} (where $\w$
denotes the maximum edge weight in $G$). 

Collective tree spanners of   Unit Disk Graphs (UDGs) (which often
model wireless ad hoc networks) were investigated in \cite{UDGmy}.
It was shown that every $n$-vertex UDG $G$ admits a system ${\cal
T}(G)$ of at most $2 \log_{\frac{3}{2}} n+2$ spanning trees of $G$
such that, for any two vertices $x$ and $y$ of $G$, there exists a
tree $T$ in ${\cal T}(G)$ with $d_T(x,y)\leq 3\cdot d_G(x,y) +12$.
That is, the distances in any UDG  can be approximately represented
by the distances in at most $2 \log_{\frac{3}{2}} n+2$ of its
spanning trees. Based on this result a new {\em  compact and low
delay routing labeling scheme} was proposed for Unit Disk Graphs.

\paragraph{Spanners with bounded tree-width.}
The \textsc{$k$-Tree-width $t$-spanner} problem was considered in
\cite{DrFoGo} and \cite{boud-deg}. It was shown that the problem is
linear time solvable for every fixed constants $t$ and $k$ on the
class of apex-minor-free graphs \cite{DrFoGo}, which includes all
planar graphs and all graphs of bounded genus,  and on the graphs
with bounded degree \cite{boud-deg}.

\subsection{\bf Our results and their place in the context of the previous
results.} \label{our-res}
This paper was inspired by few recent results from
\cite{DoDrGaYa,APPROX2011,ElPe,EmekP04}.
Elkin and Peleg in \cite{ElPe}, among other results, described a
polynomial time algorithm that, given an $n$-vertex  graph $G$
admitting a tree $t$-spanner, constructs a $t$-spanner of $G$ with
$O(n\log n)$ edges.
Emek and Peleg in \cite{EmekP04} presented the first $O(\log
n)$-approximation algorithm for the minimum value of $t$ for the
{\sc tree $t$-spanner} problem. They described a polynomial time
algorithm that, given an $n$-vertex  graph $G$ admitting a tree
$t$-spanner, constructs a tree $O(t \log n)$-spanner of $G$. Later,
a simpler and faster $O(\log n)$-approximation algorithm for the
problem was given by Dragan and K\"ohler \cite{APPROX2011}. Their
result uses a new necessary condition for a graph to have a tree
$t$-spanner: if a graph $G$ has a tree $t$-spanner, then $G$ admits
a Robertson-Seymour's tree-decomposition with bags of radius at most
$\lceil{t/2}\rceil$ in $G$.

To describe the results of \cite{DoDrGaYa} and to elaborate more on
the Dragan-K\"ohler's approach, we need  to recall definitions of a
few graph parameters. They all are based on the notion of
tree-decomposition introduced by Robertson and Seymour in their work
on graph minors \cite{RobSey86}.

A \emph{tree-decomposition} of a graph $G=(V,E)$ is a  pair
$(\{X_i|i\in I\},T=(I,F))$ where $\{X_i|i\in I\}$ is a collection of
subsets of $V$, called {\em bags}, and $T$ is a tree. The nodes of
$T$ are the bags $\{X_i|i\in I\}$ satisfying the following three
conditions:
 \begin{enumerate}\vspace*{-2mm}
   \item $\bigcup_{i\in I}X_i=V$;
   \item for each edge $uv\in E$, there is a bag $X_i$ such that $u,v \in
   X_i$;
   \item for all $i,j,k \in I$, if $j$ is on the path from $i$ to $k$ in $T$, then $X_i \bigcap X_k\subseteq X_j$.
   Equivalently, this condition could be stated as follows: for all vertices $v\in V$, the set of bags $\{i\in I| v\in X_i\}$ induces a connected subtree $T_v$ of $T$.
 \end{enumerate}\vspace*{-2mm}
For simplicity we denote a tree-decomposition $\left(\{X_i|i\in
I\},T=(I,F)\right)$ of a graph $G$ by $T(G)$.

Tree-decompositions were used to define several graph parameters to
measure how close a given graph is to some known graph class (e.g.,
to trees or to  chordal graphs) where many algorithmic problems
could be solved efficiently.
The \emph{width} of a tree-decomposition $T(G)=(\{X_i|i\in
I\},T=(I,F))$ is $max_{i\in I}|X_i|-1$. The {\em tree-width} of a
graph $G$, denoted by $\tw(G)$, is the minimum width, over all
tree-decompositions $T(G)$ of $G$ \cite{RobSey86}. The trees are
exactly the graphs with tree-width 1. The {\em length} of a
tree-decomposition $T(G)$ of a graph $G$ is $\lambda:=\max_{i\in
I}\max_{u,v\in X_i}d_G(u,v)$ (i.e., each bag $X_i$ has diameter at
most $\lambda$ in $G$). The {\em tree-length} of $G$, denoted by
$\tl(G)$, is the minimum of the length, over all tree-decompositions
of $G$ \cite{DoGa2007}. The chordal graphs  are exactly the graphs
with tree-length 1. Note that these two graph parameters are not
related to each other. For instance, a clique
on $n$ vertices has tree-length 1 and tree-width $n-1$,
whereas a cycle on $3n$ vertices has tree-width 2 and tree-length
$n$. 
%
%
In \cite{APPROX2011}, yet another graph parameter was introduced,
which is very similar to the notion of tree-length and, as it turns
out,  is related to the {\sc tree $t$-spanner} problem. The {\em
breadth} of a tree-decomposition $T(G)$ of a graph $G$ is the
minimum integer $r$ such that for every $i\in I$ there is a vertex
$v_i\in V(G)$ with $X_i\subseteq D_r(v_i,G)$ (i.e., each bag $X_i$
can be covered by a disk $D_r(v_i,G):=\{u\in V(G)| d_G(u,v_i)\leq r
\}$ of radius at most $r$ in $G$). Note that vertex $v_i$ does not
need to belong to $X_i$. The {\em tree-breadth} of $G$, denoted by
$\tb(G)$, is the minimum of the breadth, over all
tree-decompositions of $G$. Evidently, for any graph $G$, $1\leq
\tb(G)\leq \tl(G)\leq 2 \tb(G)$ holds. Hence, if one parameter is
bounded by a constant for a graph $G$ then the other parameter is
bounded for $G$ as well.

We say that a family of graphs ${\cal G}$ is {\em of bounded
tree-breadth} ({\em of bounded tree-width, of bounded tree-length})
if there is a constant $c$ such that for each graph $G$ from ${\cal
G}$, $\tb(G)\leq c$ (resp., $\tw(G)\leq c$, $\tl(G)\leq c$).

\medskip

It was shown in \cite{APPROX2011} that if a graph $G$ admits a tree
$t$-spanner then its tree-breadth is at most $\lceil{t/2}\rceil$ and
its tree-length is at most $t$. Furthermore, any graph $G$ with
tree-breadth $\tb(G)\leq \rho$ admits a tree $(2\rho\lfloor\log_2
n\rfloor)$-spanner that can be constructed in polynomial time. Thus,
these two results gave a new $\log_2 n$-approximation algorithm for
the {\sc tree $t$-spanner} problem on general (unweighted) graphs
(see \cite{APPROX2011} for details). The algorithm of
\cite{APPROX2011} is conceptually simpler than the previous $O(\log
n)$-approxima\-tion algorithm proposed  for the problem by Emek and
Peleg \cite{EmekP04}.

Dourisboure et al. in \cite{DoDrGaYa} concerned with the
construction of additive spanners with few edges for $n$-vertex
graphs having a tree-decomposition into bags of diameter at most
$\lambda$, i.e., the tree-length $\lambda$ graphs. For such graphs
they construct additive $2\lambda$-spanners with $O(\lambda n+n \log
n)$ edges, and additive $4\lambda$-spanners with $O(\lambda n)$
edges. Combining these results with the results of
\cite{APPROX2011}, we
obtain the following interesting fact (
in a sense,  turning a multiplicative stretch into an additive
surplus without much increase in the number of edges).

\begin{theorem} \label{th:easy} (combining \cite{DoDrGaYa} and \cite{APPROX2011}) If a graph $G$ admits a (multiplicative) tree
$t$-spanner then it has an additive $2t$-spanner with $O(t n+n \log
n)$ edges  and an additive $4t$-spanner with $O(tn)$ edges, both
constructible in polynomial time.
\end{theorem}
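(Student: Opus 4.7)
\medskip

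\noindent\textbf{Proof plan.} The plan is to read Theorem~\ref{th:easy} as a one-line composition of the two cited black-box results. The first step is to invoke the fact recalled just above from~\cite{APPROX2011}: whenever $G$ admits a tree $t$-spanner, the tree-length $\tl(G)$ is at most $t$. I would then appeal to the algorithmic part of~\cite{APPROX2011} to obtain, in polynomial time, an actual Robertson--Seymour tree-decomposition of $G$ whose bags have diameter at most $t$ in $G$. This replaces the existential assumption on the tree spanner with a concrete, usable tree-decomposition that carries a tight bound on bag diameter.

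The second step is to feed this tree-decomposition into the construction of Dourisboure et al.~\cite{DoDrGaYa}: from any tree-decomposition of $G$ with bags of diameter at most $\lambda$, they build in polynomial time an additive $2\lambda$-spanner with $O(\lambda n+n\log n)$ edges and an additive $4\lambda$-spanner with $O(\lambda n)$ edges. Substituting $\lambda := t$ yields precisely the additive $2t$-spanner with $O(tn+n\log n)$ edges and the additive $4t$-spanner with $O(tn)$ edges asserted in the statement. Polynomial running time for the combined procedure follows by composing the two polynomial-time algorithms.

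The only delicate point, and the one I would flag as the main obstacle, is the interface between the two cited results: one must verify that the tree-decomposition produced by~\cite{APPROX2011} truly fits the input format expected by~\cite{DoDrGaYa} (a genuine Robertson--Seymour tree-decomposition with an explicit bound on bag diameter in $G$, rather than, say, a looser tree-cover), and that the length bound coming out of the first paper is the same parameter $\lambda$ that drives the edge counts in the second. Once this bookkeeping is settled, no additional calculation is required: the stretch values $2t$ and $4t$ and the edge bounds $O(tn+n\log n)$ and $O(tn)$ are immediate substitutions, and the hypothesis that $G$ admits a tree $t$-spanner is used only through the tree-length bound $\tl(G)\leq t$.
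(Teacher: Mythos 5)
Your proposal matches the paper's intent exactly: the paper presents this theorem as a direct combination of the tree-length bound $\tl(G)\le t$ from \cite{APPROX2011} (which is Lemma~\ref{lm:tsp-tb} in the body of the paper, restated in Subsection~\ref{our-res}) with the Dourisboure et al.\ construction from \cite{DoDrGaYa} of additive $2\lambda$- and $4\lambda$-spanners for tree-length-$\lambda$ graphs, with $\lambda:=t$. The interface point you flag --- that the spanner algorithm of \cite{DoDrGaYa} needs to be run on a graph for which a suitable tree-decomposition (or the information it encodes) is effectively available, not merely one known to exist --- is indeed the only place where any care is required, and you have correctly identified it; the paper itself treats it as absorbed into the two black boxes and offers no further proof.
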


This fact rises few intriguing questions.
Does a polynomial time algorithm exist that, given an $n$-vertex
graph $G$
admitting a (multiplicative) tree $t$-spanner, constructs an
additive $O(t)$-spanner of $G$ with $O(n)$ or $O(n\log n)$ edges
(where the number of edges in the spanner is independent of $t$)? Is
a result similar to one presented by Elkin and Peleg in \cite{ElPe}
possible? Namely, does a polynomial time algorithm exist that, given
an $n$-vertex graph $G$ admitting a (multiplicative) tree
$t$-spanner, constructs an additive $(t-1)$-spanner\footnote{Recall
that any additive $(t-1)$-spanner is a multiplicative $t$-spanner.}
of $G$ with $O(n\log n)$
edges? 
If we allow to use more trees (like in collective tree spanners),
does a polynomial time algorithm exist that, given an $n$-vertex
graph $G$ admitting a (multiplicative) tree $t$-spanner, constructs
a system of $\tilde{O}(1)$ collective additive tree
$\tilde{O}(t)$-spanners of $G$ (where $\tilde{O}$ is similar to
Big-$O$ notation up to a poly-logarithmic factor)?
%
%
Note that an interesting question whether a multiplicative tree
spanner can be turned into an additive tree spanner with a slight
increase in the stretch is (negatively) settled already in
\cite{EmekP04}: if there exist some $\delta=o(n)$ and $\epsilon>0$
and a polynomial time algorithm that for any graph admitting a tree
$t$-spanner constructs a tree $((6/5-\epsilon)t+\delta)$-spanner,
then P=NP.

We give some partial answers to these questions in Section
\ref{sec:k=1}.
We investigate there a more general  question whether a graph with
bounded tree-breadth admits a small system of collective additive
tree spanners. We show that any $n$-vertex graph $G$
has a system of at most $\log_2 n$ collective additive tree
$(2\rho\log_2 n)$-spanners, where $\rho\leq \tb(G)$. This settles
also an open question from \cite{DoDrGaYa} whether a graph with
tree-length $\lambda$ admits a small system of collective additive
tree $\tilde{O}(\lambda)$-spanners.

As a consequence, we obtain that there is a polynomial time
algorithm that, given an $n$-vertex graph $G$ admitting a
(multiplicative) tree $t$-spanner, constructs:
\begin{enumerate}\vspace*{-2mm}
   \item[-] a system of at most $\log_2 n$ collective additive tree $O(t\log n)$-spanners of
   $G$ (compare with \cite{APPROX2011,EmekP04} where a multiplicative tree $O(t\log n)$-spanner
   was constructed for $G$ in polynomial time; thus, we ``have turned" a multiplicative tree $O(t\log n)$-spanner
   into at most $\log_2 n$ collective additive tree $O(t\log n)$-spanners);
  \item[-] an additive $O(t\log n)$-spanner of $G$ with at most $n\log_2 n$
  edges (compare with Theorem \ref{th:easy}).
\end{enumerate}\vspace*{-2mm}

In Section \ref{sec:any-k} we generalize the method of Section
\ref{sec:k=1}. We define a new notion which combines both the
tree-width and the tree-breadth of a graph.

 The \emph{$k$-breadth} of a tree-decomposition $T(G)=\left(\{X_i| i\in I\}, T=(I,F)\right)$
of a graph $G$ is the minimum integer $r$ such that for each bag
$X_i,i\in I$, there is a set of at most $k$ vertices $C_i=\{v^i_j|
v^i_j \in V(G), j=1,\ldots,k\}$ such that for each $u\in X_i$, we
have $d_G(u,C_i)\leq r$ (i.e., each bag $X_i$ can be covered with at
most $k$ disks of $G$ of radius at most $r$ each; $X_i \subseteq
D_r(v^i_1,G)\cup \ldots \cup D_r(v^i_k,G)$).
%
The \emph{$k$-tree-breadth } of a graph $G$, denoted by $\tb_k(G)$,
is the minimum of the $k$-breadth, over all tree-decompositions of
$G$. We say that a family of graphs $\mathcal{G}$ is {\em of bounded
$k$-tree-breadth,} if there is a constant $c$ such that for each
graph $G$ from $\mathcal{G}$, $\tb_k(G)\leq c$.
Clearly, for every graph $G$, $\tb(G)=\tb_1(G)$, and $\tw(G)\leq
k-1$ if and only if $\tb_k(G)=0$. Thus, the notions of the
tree-width and the tree-breadth are particular cases of the
$k$-tree-breadth.

In Section \ref{sec:any-k}, we show that any $n$-vertex graph $G$
with $\tb_k(G)\leq \rho$ has a system of at most $k(1+ \log_2 n)$
collective additive tree $(2\rho(1+\log_2 n))$-spanners. In Section
\ref{sec:conseq}, we extend a result from  \cite{APPROX2011} and
show that if a graph $G$ admits a (multiplicative) $t$-spanner $H$
with $\tw(H)=k-1$ then its $k$-tree-breadth is at most
$\lceil{t/2}\rceil$. As a consequence, we obtain that, for every
fixed $k$, there is a polynomial time algorithm that, given an
$n$-vertex graph $G$ admitting a (multiplicative) $t$-spanner with
tree-width at most $k-1$, constructs:
\begin{enumerate}\vspace*{-2mm}
   \item[-] a system of at most $k(1+ \log_2 n)$ collective additive tree $O(t\log n)$-spanners of
   $G$;
  \item[-] an additive $O(t\log n)$-spanner of $G$ with at most $O(k n\log n)$
  edges.
\end{enumerate}\vspace*{-2mm}

 We conclude the paper with few open questions.

%
\section{Preliminaries}\label{section:definition}
All graphs occurring in this paper are connected, finite,
unweighted, undirected, loopless and without multiple edges. We call
$G=(V,E)$ an {\em $n$-vertex $m$-edge graph} if $|V|=n$ and $|E|=m$.
A {\em clique} is a set of pairwise adjacent vertices of $G$. By
$G[S]$ we denote a subgraph of $G$ induced by vertices of
$S\subseteq V$. Let also $G\setminus S$ be the graph $G[V\setminus
S]$ (which is not necessarily connected). A set $S\subseteq V$ is
called a {\em separator} of $G$ if the graph $G[V\setminus S]$ has
more than one connected component, and $S$ is called a {\em balanced
separator} of $G$ if each connected component of  $G[V\setminus S]$
has at most $|V|/2$ vertices. A set $C\subseteq V$ is called a {\em
balanced clique-separator} of $G$ if $C$ is both a clique and a
balanced separator of $G$. For a vertex $v$ of $G$, the sets
$N_G(v)=\{w\in V| vw\in E\}$ and $N_G[v]=N_G(v)\cup\{v\}$ are called
the \emph{open neighborhood} and the \emph{closed neighborhood} of
$v$, respectively.


In a graph $G$ the {\em length} of a path from a vertex $v$ to a
vertex $u$ is the number of edges in the path. The {\em distance}
$d_G(u,v)$ between vertices $u$ and  $v$ is the length of a shortest
path connecting $u$ and $v$ in $G$. The {\em diameter} in $G$ of a
set $S\subseteq V$ is $\max_{x,y\in S}d_G(x,y)$ and its {\em radius}
in $G$ is $\min_{x\in V}\max_{y\in S}$ $d_G(x,y)$ (in some papers
they are called the {\em weak diameter} and the {\em weak radius} to
indicate that the distances are measured in $G$ not in $G[S]$). The
\emph{disk} of $G$ of radius $k$ centered at vertex $v$ is the set
of all vertices at distance at most $k$ to $v$: $D_k(v,G)=\{w\in V|
d_G(v, w)\leq k\}.$ A disk $D_k(v,G)$ is called a {\em balanced
disk-separator} of $G$ if the set $D_k(v,G)$ is a balanced separator
of $G$.

It is easy to see that the $t$-spanners can equivalently be defined
as follows.
\begin{proposition}\label{prop:zero}
Let $G$ be a connected graph and $t$ be a positive number. A
spanning subgraph $H$ of $G$ is a $t$-spanner of $G$ if and only if
for every edge $xy$ of $G$, $d_H(x,y)\leq t$ holds.
\end{proposition}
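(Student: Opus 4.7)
The plan is to prove both directions of the equivalence, with the forward direction being essentially immediate from the definition and the reverse direction relying on a shortest-path decomposition combined with the triangle inequality.

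For the forward direction, I would simply observe that if $H$ is a $t$-spanner of $G$ and $xy$ is an edge of $G$, then $d_G(x,y)=1$, whence by definition $d_H(x,y)\leq t\cdot d_G(x,y) = t$. This step requires no work beyond unpacking the definition of a multiplicative $t$-spanner and noting that edges of $G$ have unit length in an unweighted graph.

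For the reverse direction, I would fix an arbitrary pair of vertices $u,v\in V(G)$ and consider a shortest path $u=u_0,u_1,\ldots,u_k=v$ in $G$, where $k=d_G(u,v)$. Each consecutive pair $u_iu_{i+1}$ is an edge of $G$, so by hypothesis $d_H(u_i,u_{i+1})\leq t$. Applying the triangle inequality in $H$ along this path gives
\[
d_H(u,v) \;\leq\; \sum_{i=0}^{k-1} d_H(u_i,u_{i+1}) \;\leq\; k\cdot t \;=\; t\cdot d_G(u,v),
\]
which is exactly the condition that $H$ be a $t$-spanner of $G$.

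There is really no main obstacle here: the statement is a folklore characterization and the proof is a three-line exercise. The only thing worth making explicit is the use of connectivity of $G$ (guaranteeing that $d_G(u,v)$ is finite) and the fact that $H$ is a \emph{spanning} subgraph (so it contains both endpoints of the path in $G$, and the distance $d_H(u,v)$ is well-defined once we know $d_H(u_i,u_{i+1})\leq t$ for each edge of $G$, which in particular implies that $H$ is connected).
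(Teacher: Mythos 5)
Your proof is correct and is exactly the standard argument; the paper itself offers no proof, introducing the proposition only with ``It is easy to see that\ldots''. Both directions (unpacking the definition on an edge for necessity, and concatenating shortest-path segments with the triangle inequality in $H$ for sufficiency) are precisely what the authors are implicitly invoking, and your remark about connectivity of $H$ being forced by the hypothesis is a worthwhile detail that justifies treating $d_H$ as finite.
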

This proposition implies that the  stretch of a spanning subgraph of
a graph $G$ is always obtained on a pair of vertices that form an
edge in $G$. Consequently, throughout this paper, $t$ can be
considered as an integer which is greater than 1 (the case $t=1$ is
trivial since $H$ must be $G$ itself).

It is also known that every additive $r$-spanner of $G$ is a
(multiplicative) $(r+1)$-spanner of $G$.
\begin{proposition}\label{prop:zero+}
Every additive $r$-spanner of $G$ is a (multiplicative)
$(r+1)$-spanner of $G$. The converse is generally not true.
\end{proposition}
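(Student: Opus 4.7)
The plan is to handle the two sentences of the proposition separately. For the first sentence the approach is a one-line algebraic estimate that exploits the fact that graph distances are integers at least $1$. Fix any two distinct vertices $u,v$ of $G$; because $H$ is a spanning subgraph of $G$ with the same vertex set, $d_G(u,v)\geq 1$. Using the additive hypothesis $d_H(u,v)\leq d_G(u,v)+r$, I would simply write
$d_H(u,v)\leq d_G(u,v)+r \leq d_G(u,v)+r\cdot d_G(u,v) = (r+1)\,d_G(u,v)$,
which is exactly the multiplicative $(r+1)$-spanner condition; the case $u=v$ is trivial.

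For the second sentence the right approach is to exhibit a small explicit counterexample. The guiding observation is that the estimate above is sharp only when $d_G(u,v)=1$: on a pair $(x,y)$ with $d_G(x,y)=d\geq 2$, a detour of length $d+s$ in $H$ contributes multiplicative stretch only $(d+s)/d$, which may remain below $r+1$ even when $s>r$, provided $d$ is large enough relative to $r$. So I would look for a graph $G$ containing a pair $(x,y)$ whose $d_G$-distance is realized via short chord edges, together with a sparse spanning subgraph $H$ that drops those chords and so forces a long $x$--$y$ detour, while still keeping every individual edge of $G$ stretched by a factor at most $r+1$ in $H$; by Proposition~\ref{prop:zero}, this local condition is enough to certify the global multiplicative bound.

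A concrete witness with $r=1$: let $G$ be the graph on vertices $\{1,\ldots,6\}$ with edge set $\{12,23,34,45,56,13,46\}$, and let $H$ be the Hamiltonian path on these six vertices, obtained from $G$ by deleting the two chord edges $13$ and $46$. Then $d_H(1,3)=d_H(4,6)=2$, so every edge of $G$ is at $d_H$-distance at most $2$; by Proposition~\ref{prop:zero}, $H$ is a multiplicative $2$-spanner of $G$. On the other hand $d_G(1,6)=3$ (via $1,3,4,6$), whereas $d_H(1,6)=5$, so $H$ fails to be an additive $1$-spanner, which is the required failure of the converse.

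The only real work is the routine edge-by-edge case verification in the counterexample; there is no deeper structural obstacle here, because the separation between the two notions is inherent in the fact that multiplicative stretch scales with $d_G(u,v)$ while additive stretch does not.
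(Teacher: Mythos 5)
Your argument is correct and complete. The paper does not actually give a proof of Proposition~\ref{prop:zero+} — it is stated as a known fact, with ``It is also known that\ldots'' preceding it — so there is no authorial proof to compare against. Your first half is exactly the standard one-line estimate that everyone has in mind (bounding $r \le r\cdot d_G(u,v)$ using $d_G(u,v)\ge 1$, with the $u=v$ case trivial), and your counterexample is a clean witness of the second half: in $H$ every edge of $G$ is stretched at most $2$, so by Proposition~\ref{prop:zero} $H$ is a multiplicative $2$-spanner, while $d_H(1,6)-d_G(1,6)=5-3=2>1$ shows it is not an additive $1$-spanner. (Even the minimal triangle example — $G=K_3$, $H$ a path of length $2$ — would already serve for $r=1$, since $d_H$ equals $d_G$ on the two path edges and $d_H=2=2\cdot d_G$ on the removed edge, yet additivity would require $d_H\le d_G+1$ on the larger $K_4$-style detours; your $6$-vertex example has the virtue of exhibiting the failure on a pair at $G$-distance $\ge 2$, which makes the conceptual point about scaling more transparent.) Nothing is missing.
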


\section{Collective Additive Tree Spanners and the Tree-Breadth of a Graph} \label{sec:k=1}
In this section, we show that every $n$-vertex graph $G$
has a system of at most $\log_2 n$ collective additive tree
$(2\rho\log_2 n)$-spanners, where $\rho\leq \tb(G)$. We also discuss
consequences of this result. Our method is a generalization of
techniques used in \cite{CollTrSp} and \cite{APPROX2011}. We will
assume that $n\geq 4$ since any connected graph with at most 3
vertices has an additive tree 1-spanner.


Note that we do not assume here that a tree-decomposition $T(G)$ of
breadth $\rho$ is given for $G$ as part of the input. Our method
does not need to know $T(G)$, our algorithm works directly on $G$.
For a given graph $G$ and an integer $\rho$, even checking whether
$G$ has a tree-decomposition of breadth $\rho$ could be a hard
problem. For example, while graphs with tree-length 1 (as they are
exactly the chordal graphs) can be recognized in linear time, the
problem of determining whether a given graph has tree-length at most
$\lambda$ is NP-complete for every fixed $\lambda >1$ (see
\cite{Daniel10}).

We will need the following  results proven in \cite{APPROX2011}.

\begin{lemma}[\cite{APPROX2011}] \label{lm:gch-sep}
Every graph $G$ 
has a balanced disk-separator $D_r(v,G)$ centered at some vertex
$v$, where $r\leq \tb(G)$.
\end{lemma}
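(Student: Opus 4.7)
My plan is to fix any tree-decomposition $(\{X_i\mid i\in I\}, T=(I,F))$ of $G$ of breadth $\rho:=\tb(G)$, locate a bag $X_c$ that is already a balanced separator of $G$ in the classical graph-theoretic sense, and then inflate $X_c$ to the enclosing disk guaranteed by the breadth. The inflation step is automatic: by definition of breadth there is a vertex $v_c\in V(G)$ and a radius $r\le\rho$ with $X_c\subseteq D_r(v_c,G)$, and any superset of a balanced separator remains a balanced separator (removing more vertices can only break components into smaller ones).

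\textbf{Centroid on the decomposition tree.} The crux is therefore to produce $X_c$. The structural ingredient is the standard ``edge-separation'' property of tree-decompositions: removing any edge $X_iX_j$ of $T$ splits $T$ into two subtrees whose bag-unions $A\ni X_i$ and $B\ni X_j$ satisfy $V=A\cup B$ and $A\cap B\subseteq X_i\cap X_j$, because the set of bags containing any fixed vertex of $G$ induces a connected subtree of $T$. Consequently each bag of $T$ is already a separator of $G$, and the two sides of the edge have cardinalities satisfying $|B\setminus X_i|+|A\setminus X_j|\le n$. I then orient each edge of $T$ towards whichever side has strictly more than $n/2$ vertices (leaving the edge unoriented if neither side does); by the inequality just stated no edge can be oriented both ways, so the resulting partial orientation is acyclic on the tree $T$. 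A sink $X_c$ of this orientation exists, and by construction every subtree of $T\setminus\{X_c\}$ carries at most $n/2$ vertices of $G$ in its bag-union outside $X_c$. Since every connected component of $G\setminus X_c$ is confined to one such subtree (again by the non-crossing property), each such component has at most $n/2$ vertices, so $X_c$ is a balanced separator of $G$ and the lemma follows.

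\textbf{Main obstacle.} The only subtle point is to set up the centroid accounting so that overlapping bags are handled correctly. The naive weighting ``number of vertices of $G$ in the bag-union on one side of the edge'' double-counts the vertices that lie in $X_i\cap X_j$, which is precisely what prevents a vanilla centroid lemma from being applied off the shelf. This is why one has to measure each side of an edge $X_iX_j$ as $B\setminus X_i$ and $A\setminus X_j$ rather than as $B$ and $A$, and why the bound $|B\setminus X_i|+|A\setminus X_j|\le n$ (rather than $|A|+|B|\le n$, which is false) is the correct inequality to drive the argument. Once this bookkeeping is in place, the remainder of the proof is mechanical.
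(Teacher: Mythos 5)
Your proof is correct, but it takes a genuinely different route from the one used in the paper. The paper (in the generalized Lemma~\ref{disks_sep}, which subsumes this statement, and in the cited source~\cite{APPROX2011}) argues through acyclic hypergraph theory: the bag hypergraph of a tree-decomposition is acyclic, hence its 2-section $G^*$ is a chordal supergraph of $G$; one then invokes the Gilbert--Rose--Edenbrandt balanced clique-separator theorem for chordal graphs, observes that by conformality the resulting clique lies inside a single bag, covers that bag by a disk of radius $\tb(G)$, and notes that a superset of a balanced separator of a supergraph remains a balanced separator of the original graph. You instead work directly on the decomposition tree with a centroid argument, orienting each tree edge toward a side with more than $n/2$ vertices and taking a sink bag; your bookkeeping with $|B\setminus X_i|$ and $|A\setminus X_j|$ (rather than $|A|,|B|$) is exactly the right way to avoid double-counting the overlap $X_i\cap X_j$, and the sink-existence argument on the tree is sound because no edge can be oriented both ways. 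Your approach is more self-contained and elementary — it does not need chordal graph theory or the hypergraph-acyclicity characterization — whereas the paper's approach leverages well-known machinery and packages the balanced-separator fact as a black box; both arguments generalize verbatim to the $k$-tree-breadth version in Lemma~\ref{disks_sep}, since what is produced in either case is a bag that is a balanced separator.
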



\begin{lemma}[\cite{APPROX2011}] \label{prop:gch-sep-time}
For an arbitrary graph $G$ with $n$ vertices and $m$ edges a
balanced disk-separator $D_r(v,G)$   with minimum $r$ can be found
in $O(n m)$ time.
\end{lemma}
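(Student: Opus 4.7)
The plan is brute force over all possible centers: for each $v\in V(G)$, compute the minimum radius $r_v$ for which $D_{r_v}(v,G)$ is a balanced separator, then return the pair $(v,r_v)$ achieving the overall minimum. By Lemma \ref{lm:gch-sep} such a pair exists with $r_v\leq \tb(G)$, so the search is well-defined. To meet the $O(nm)$ bound I will process each candidate center in $O(m)$ amortized time, and then take a minimum across the $n$ centers.

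For a fixed $v$, first run a BFS from $v$ in $O(m)$ time, producing the layers $L_i:=\{u:d_G(v,u)=i\}$ for $i=0,1,\ldots,R$, where $R$ is the eccentricity of $v$. The key observation is a monotonicity: as $r$ decreases, the set $V\setminus D_r(v,G)$ only grows (by adding the layer $L_{r+1}$), and the connected components of $G[V\setminus D_r(v,G)]$ can only gain vertices or appear as new singletons; they never split or merge with each other except by inserting a new vertex. This suggests a \emph{reverse sweep}: initialize $r=R$, so the separator equals $V$ and no vertex is alive; then for $r=R-1,R-2,\ldots,0$ insert every vertex of $L_{r+1}$ as a singleton in a union-find structure and union it with any already-alive neighbor. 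Track the running size $s$ of the largest alive component; the desired $r_v$ is the smallest $r$ for which $s\leq n/2$ at the end of step $r$, equivalently $r_v=r^*+1$ where $r^*$ is the largest $r$ at which $s$ first exceeds $n/2$ (and $r_v=0$ if this never occurs).

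To keep each center within $O(m)$ time, pre-bucket each edge $uw$ by $\max(d_G(v,u),d_G(v,w))$ so that the moment layer $L_{r+1}$ is uncovered only its alive-to-alive incident edges are scanned, ensuring every edge is examined a constant number of times across the whole sweep. With standard union-find (union by size, path compression) all unions cost $O(m\,\alpha(n))$, and the running maximum component size can be updated in $O(1)$ per union by comparing against the size returned by the union operation. Summed over the $n$ centers this yields the claimed $O(nm)$ bound, suppressing the inverse-Ackermann factor as is customary in this setting.

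The main obstacle is justifying that the reverse sweep faithfully simulates the nested family $\{D_r(v,G)\}_r$ of candidate separators; this follows from the observation that $L_{r+1}$ is precisely the set of vertices that leave the separator when the radius drops from $r+1$ to $r$, so after processing step $r$ the union-find components are exactly the connected components of $G[V\setminus D_r(v,G)]$. A secondary subtlety is the within-layer edges (both endpoints in $L_{r+1}$): these are handled by inserting all vertices of $L_{r+1}$ first as singletons and only then scanning the layer's bucketed edges, so every union is performed exactly once regardless of processing order within the layer.
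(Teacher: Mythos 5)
Your overall plan is the right one and matches the outer structure the paper evokes when it cites \cite{APPROX2011}: the cited reference gives an $O(m)$ procedure for a \emph{fixed} center $v$ that returns the minimum $r$ with $D_r(v,G)$ a balanced separator, and the $O(nm)$ bound then follows by iterating over all $n$ choices of $v$. Your reverse sweep is a sensible realization of that subroutine: BFS to get the layers, uncover layers from far to near, maintain components of the uncovered set incrementally, and track the running maximum component size. The monotonicity and threshold analysis (largest $r$ at which the max component first exceeds $n/2$, then $r_v = r^*+1$; $r_v=0$ if this never happens) is correct.

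Two small slips are worth flagging. First, the edge bucketing should be by $\min\bigl(d_G(v,u),d_G(v,w)\bigr)$, not by $\max$. In a BFS layering every edge spans at most one layer, so an edge $uw$ has both endpoints alive for the first time exactly when $\min(d_G(v,u),d_G(v,w))=r+1$ is uncovered; with $\max$-bucketing every inter-layer edge is examined one step too early (one endpoint is still inside the separator), and if you merely discard it then it is never correctly processed at the later step. The cleanest fix is either to bucket by $\min$, or simply scan every edge once from each endpoint as that endpoint becomes alive and perform the union only on the second visit; this touches each edge at most twice and needs no buckets at all. Second, union--find with path compression gives $O(m\,\alpha(n))$ per center, not $O(m)$; since the entire union sequence for a given $v$ is determined once BFS is done, one can invoke the static-tree disjoint-set-union machinery of Gabow and Tarjan to get genuine $O(m)$, but for the $O(nm)$ claim this is a cosmetic point that most authors (and likely \cite{APPROX2011}) quietly absorb. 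Neither issue undermines the argument: the idea and the $O(nm)$ bound stand once the bucketing is corrected.
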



\subsection{Hierarchical decomposition of a  graph with bounded
tree-breadth}\label{ssec:dec-tree} In this subsection, following
\cite{APPROX2011}, we show how to decompose a graph with bounded
tree-breadth and build a hierarchical decomposition tree for it.
This hierarchical decomposition tree is used later for construction
of collective additive tree spanners for such a graph.

Let $G=(V,E)$ be an arbitrary connected $n$-vertex $m$-edge graph
with a disk-separator $D_r(v,G)$. Also, let $G_1,\dots,G_q$ be the
connected components of $G[V\setminus D_{r}(v,G)]$. Denote by
$S_i:=\{x\in V(G_i)| d_G(x,D_{r}(v,G))=1\}$ the neighborhood of
$D_{r}(v,G)$ with respect to $G_i$. Let also $G_i^+$ be the graph
obtained from component $G_i$ by adding a vertex $c_i$ ({\em
representative} of $D_{r}(v,G)$) and making it adjacent to all
vertices of $S_i$, i.e., for a vertex $x\in V(G_i)$, $c_ix\in
E(G_i^+)$ if and only if there is a vertex $x_D\in D_{r}(v,G)$ with
$xx_D\in E(G)$. See Fig. \ref{fig:ch-dec} for an illustration. In
what follows, we will call vertex $c_i$ a {\em meta vertex
representing disk $D_r(v,G)$ in graph $G_i^+$}.  Given a graph $G$
and its disk-separator $D_r(v,G)$, the graphs $G_1^+,\dots,G_q^+$
can be constructed in total time $O(m)$. Furthermore, the total
number of edges in the graphs $G_1^+,\dots,G_q^+$ does not exceed
the number of edges in $G$, and the total number of vertices
(including $q$ meta vertices) in those graphs does not exceed the
number of vertices in $G[V\setminus D_{r}(v,G)]$ plus $q$.

\begin{figure}[htb]
\begin{center} \vspace*{-5mm}
 \begin{minipage}[b]{15cm}
    \begin{center} \hspace*{10mm}
\includegraphics[height=5.cm]{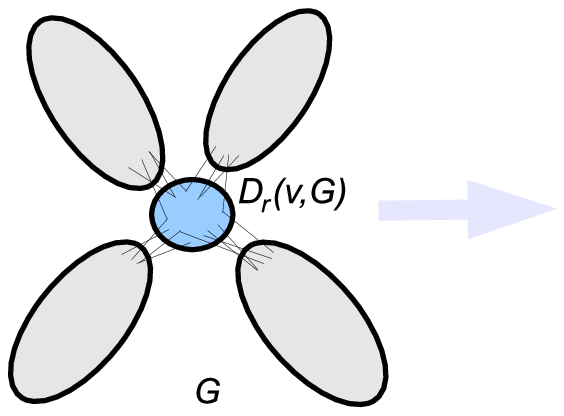}\hspace*{-25mm}
\includegraphics[height=5.cm]{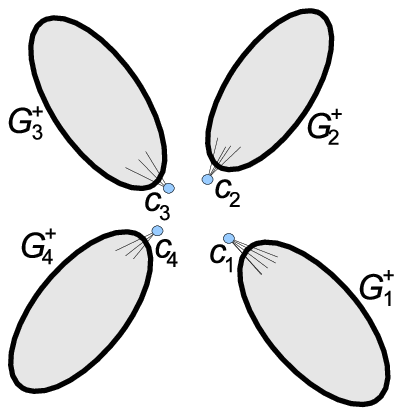}
    \end{center} \vspace*{-18mm}
    \caption{\label{fig:ch-dec} A graph $G$ with a disk-separator $D_r(v,G)$ and the corresponding graphs $G^+_1,\dots, G^+_4$ obtained from $G$.
    $c_1,\dots,c_4$ are meta vertices representing the disk $D_r(v,G)$ in the corresponding graphs. } %
 \end{minipage}
\end{center}
\vspace*{-6mm}
\end{figure}

Denote by $G_{/e}$ the graph obtained from $G$ by contracting its
edge $e$. Recall that edge $e$ contraction is an operation which
removes $e$ from $G$ while simultaneously merging together the two
vertices $e$ previously connected.  If a contraction  results in
multiple edges, we delete duplicates of an edge to stay within the
class of simple graphs. The operation may be performed on a set of
edges by contracting each edge (in any order). The following lemma
guarantees that the tree-breadths of the graphs $G^+_i$,
$i=1,\dots,q$,  are no larger than the tree-breadth of $G$.

\begin{lemma}[\cite{APPROX2011}]\label{lem:G+hereditary}
For any graph $G$ and its edge $e$, $\tb(G)\leq \rho$ implies
$\tb(G_{/e})\leq \rho$. Consequently, for any graph $G$ with
$\tb(G)\leq \rho$,   $\tb(G_i^+)\leq \rho$ holds for each
$i=1,\dots,q$.
\end{lemma}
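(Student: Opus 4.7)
For the first statement, the plan is to take any tree-decomposition $(T,\{X_j\}_{j\in I})$ of $G$ of breadth $\rho$ and construct one for $G_{/e}$ on the same tree $T$. Writing $e=xy$ and $z$ for the identified vertex, I would define $X_j' := (X_j\setminus\{x,y\})\cup\{z\}$ whenever $X_j$ meets $\{x,y\}$, and $X_j':=X_j$ otherwise. The first two tree-decomposition axioms are immediate; the subtree axiom for $z$ reduces to showing that $T_x\cup T_y$ is connected in $T$, which follows from the existence of a bag containing both endpoints of the edge $xy$. For the breadth, I would pick $v_j':=v_j$ if $v_j\notin\{x,y\}$ and $v_j':=z$ otherwise, and invoke the general fact that contraction does not increase distances, i.e., $d_{G_{/e}}(u',w')\le d_G(u,w)$ for all $u,w\in V(G)$; this is proved by projecting a shortest $u$-$w$ path of $G$ through the quotient map and noting that any step along $e$ collapses to a single vertex. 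The inequality immediately yields $d_{G_{/e}}(v_j',u)\le\rho$ for every $u\in X_j'$.

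For the consequence, my strategy is to write $G_i^+$ as the composition of two operations whose effect on tree-breadth I understand: iterated edge contraction (controlled by the first statement) followed by restriction to the neighborhood of a cut vertex. The preliminary observation I would use is that balls are always connected as induced subgraphs, since any shortest path from $v$ to a vertex of $D_r(v,G)$ stays inside $D_r(v,G)$. Picking a spanning tree of $G[D_r(v,G)]$ and contracting its edges one by one, I obtain a graph $G'$ with $\tb(G')\le\rho$ in which $D_r(v,G)$ has been collapsed to a single vertex $c$. Removing $c$ from $G'$ leaves exactly the components $G_1,\ldots,G_q$, so $c$ is a cut vertex of $G'$, and $G_i^+$ is precisely the induced subgraph $G'[V(G_i)\cup\{c\}]$ with $c$ playing the role of $c_i$.

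The main obstacle is the remaining step, establishing $\tb(G_i^+)\le\tb(G')$. I would restrict each bag $X_j$ of the tree-decomposition of $G'$ to $X_j'' := X_j\cap(V(G_i)\cup\{c\})$; the three tree-decomposition axioms for $G_i^+$ follow routinely because $G_i^+$ is induced. The delicate point is the breadth condition, since the original center $v_j$ of $X_j$ may lie in some component $V(G_k)$ with $k\neq i$. Here I would exploit the cut-vertex property of $c$ twice. First, every $G'$-path between two vertices of $V(G_i)\cup\{c\}$ must stay inside $V(G_i)\cup\{c\}$, since otherwise $c$ would have to appear at least twice on the path; consequently $d_{G_i^+}(u,w)=d_{G'}(u,w)$ for all $u,w\in V(G_i)\cup\{c\}$. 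Second, when $v_j\in V(G_k)$ with $k\neq i$, every $G'$-path from $v_j$ to any $u\in X_j\cap V(G_i)$ is forced through $c$, so $d_{G'}(c,u)\le d_{G'}(v_j,u)\le\rho$; hence $c$ itself serves as a valid center for $X_j''$ in $G_i^+$. When $v_j$ already lies in $V(G_i)\cup\{c\}$, it remains a valid center by the same distance-equality argument.
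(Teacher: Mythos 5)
Your proof of the first statement (contraction preserves tree-breadth) is the same argument the paper uses for its $k$-tree-breadth analog, Lemma~\ref{lem:hereditary-tbk}: replace $x,y$ by the merged vertex in every bag, keep the same underlying tree, verify the subtree axiom for the merged vertex using the bag that contains the edge $xy$, and invoke the fact that contraction does not increase distances. For the ``consequently'' part, however, you take a genuinely longer route than the paper. The paper simply observes that $G_i^+$ is obtained from $G$ \emph{entirely} by edge contractions (contract, in any order, every edge of $G$ not incident to $V(G_i)$ --- note that $G\setminus V(G_i)$ is connected, being the disk together with the other components, each attached to the disk), and then iterates the first statement. You instead contract only a spanning tree of the disk to reach an intermediate graph $G'$ with $\tb(G')\le\rho$ and a cut vertex $c$, and then pass from $G'$ to $G_i^+$ as an \emph{induced subgraph} of $G'$ at the cut vertex, which forces you to develop a separate argument that restricting a tree-decomposition to $V(G_i)\cup\{c\}$ and re-centering bags at $c$ keeps the breadth bounded. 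That argument is correct (your use of the cut-vertex property to show $d_{G_i^+}=d_{G'}$ on $V(G_i)\cup\{c\}$ and to justify $c$ as a fallback center is sound), but it is unnecessary: from $G'$ one can keep contracting --- collapse each $V(G_j)$, $j\neq i$, and then its edge to $c$ --- and land on $G_i^+$ by contractions alone, so the first statement already finishes the job. What your approach buys is a slightly more general fact (an ``attach at a cut vertex'' lemma for tree-breadth), at the cost of an extra page of argument the paper's one-line observation avoids.
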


Clearly, one can get $G_i^+$ from $G$ by repeatedly contracting (in
any order) edges of $G$ that are not incident to vertices of $G_i$.
In other words, $G_i^+$ is a minor of $G$. Recall that a graph $G'$
is a {\em minor} of $G$ if $G'$ can be obtained from $G$ by
contracting some edges, deleting some edges, and deleting some
isolated vertices. The order in which a sequence of such
contractions and deletions is performed on $G$ does not affect the
resulting graph $G'$.

Let $G=(V,E)$ be a connected  $n$-vertex, $m$-edge graph and assume
that $\tb(G)\leq\rho$. Lemma \ref{lm:gch-sep} and Lemma
\ref{prop:gch-sep-time} guarantee that $G$ has a balanced
disk-separator $D_{r}(v,G)$ with $r\leq \rho$, which can be found in
$O(n m)$ time by an algorithm that works directly on graph $G$ and
does not require construction of a tree-decomposition of $G$ of
breadth $\leq \rho$.  Using these and Lemma \ref{lem:G+hereditary},
we can build a (rooted) {\em hierarchical tree} $\cH(G)$ for $G$ as
follows. If $G$ is a connected graph with at most 5 vertices,
then $\mathcal{H}(G)$ is one node tree with root node 
$(V(G), G)$. 
Otherwise, find a balanced disk-separator $D_{r}(v,G)$ in $G$ with
minimum $r$ (see Lemma ~\ref{prop:gch-sep-time}) and construct the
corresponding graphs $G_1^+, G_2^+,\dots,G_q^+$. For each graph
$G_i^+$ ($i=1,\dots,q$) (by Lemma \ref{lem:G+hereditary},
$\tb(G_i^+)\leq \rho$), construct a hierarchical tree
$\mathcal{H}(G_i^+)$ recursively and build $\mathcal{H}(G)$ by
taking the pair $(D_{r}(v,G),G)$ to be the root and connecting the
root of each tree $\mathcal{H}(G_i^+)$ as a child of
$(D_{r}(v,G),G)$.

The depth of this tree $\cH(G)$ is the smallest integer $k$ such
that $$\frac{n}{2^k}+\frac{1}{2^{k-1}}+\dots +\frac{1}{2}+1\leq 5,$$
that is, the depth 
is at most $\log_2 n -1$.
\commentout{
{\bf ====== Calculations  =====}

Since $$\frac{n}{2^{k-1}}+\frac{1}{2^{k-2}}+\dots +\frac{1}{2}+1\geq
6$$ and $$\frac{1}{2^{k-2}}+\dots +\frac{1}{2}+1<2,$$ we get
$$\frac{n}{2^{k-1}}>6-2=4,$$ $$n> 2^{k+1},$$   $$k+1< \log_2 n,$$
$$k< \log_2 n -1.$$

{\bf ========================} }
%
%

It is also easy to see that, given a graph $G$ with $n$ vertices and
$m$ edges, a hierarchical tree $\cH(G)$
can be constructed in $O(nm\log^2 n)$ total time. There are  at most
$O(\log n)$ levels in  $\cH(G)$, and one needs to do at most
$O(nm\log n)$ operations per level since the total number of edges
in the graphs of each level is at most $m$ and the total number of
vertices in those graphs can not exceed $O(n\log n)$.

For an internal (i.e., non-leaf) node $Y$ of $\mathcal{H}(G)$, since
it is associated with a pair $(D_{r'}(v',G'),G')$, where $r'\leq
\rho$, $G'$ is a minor of $G$ and $v'$ is the center of disk
$D_{r'}(v',G')$ of $G'$, it will be convenient, in what follows, to
denote $G'$ by $G(\downarrow Y)$, $v'$ by $c(Y)$, $r'$ by $r(Y)$,
and $D_{r'}(v',G')$ by $Y$ itself. Thus,
$(D_{r'}(v',G'),G')=(D_{r(Y)}(c(Y),G(\downarrow Y)),G(\downarrow
Y))=(Y, G(\downarrow Y))$ in these notations, and we identify node
$Y$ of $\mathcal{H}(G)$ with the set $Y=D_{r(Y)}(c(Y),G(\downarrow
Y))$ and associate with this node also the graph $G(\downarrow Y)$.
See Fig. \ref{fig:dec-tree} for an illustration.
Each leaf $Y$ of $\mathcal{H}(G)$, since it 
corresponds to a pair $(V(G'),G')$, we identify with the set
$Y=V(G')$ and use, for a convenience, the notation $G(\downarrow Y)$
for $G'$.

\begin{figure}[htb]
\begin{center} \vspace*{-20mm}
 \begin{minipage}[b]{16cm}
    \begin{center} \hspace*{5mm}
\includegraphics[height=8.cm]{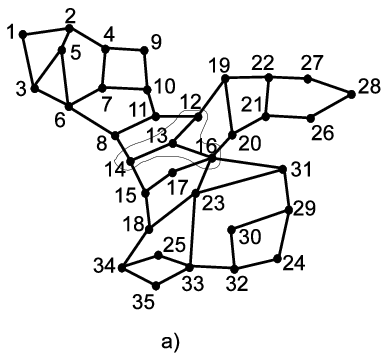}\hspace*{-60mm}
\includegraphics[height=10.cm]{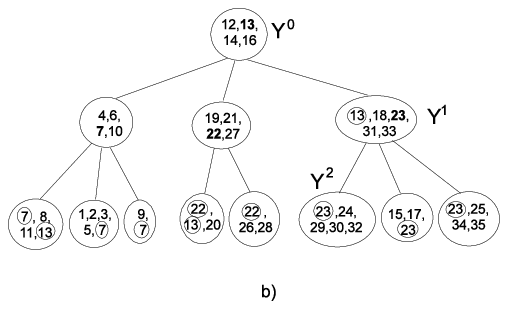}
    \end{center} \vspace*{-52mm}
    \begin{center} \hspace*{30mm}
\includegraphics[height=8.cm]{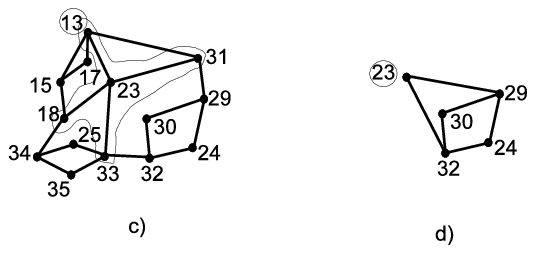}
    \end{center} \vspace*{-30mm}
    \caption{\label{fig:dec-tree} a) A graph $G$ and its balanced disk-separator $D_1(13,G)$. b) A hierarchical tree $\cH(G)$ of $G$. We have $G=G(\downarrow Y^0)$,
    $Y^0=D_1(13,G)$. Meta vertices are shown circled, disk centers are shown in bold.
    c) The graph $G(\downarrow Y^1)$ with its balanced disk-separator $D_1(23,G(\downarrow Y^1))=Y^1$. $G(\downarrow Y^1)$ is a minor of $G(\downarrow Y^0)$.
    d) The graph $G(\downarrow Y^2)$, a minor of $G(\downarrow Y^1)$ and of $G(\downarrow Y^0)$. $Y^2=V(G(\downarrow Y^2))$ is a leaf of $\cH(G)$.} %
 \end{minipage}
\end{center}
\vspace*{-6mm}
\end{figure}

If now $(Y^0,Y^1,\dots, Y^h)$ is the path of $\cH(G)$ connecting the
root $Y^0$ of $\cH(G)$ with a node $Y^h$, then the vertex set of the
graph $G(\downarrow Y^h)$ consists of some (original) vertices of
$G$ plus at most $h$ meta vertices representing the disks
$D_{r(Y)}(c(Y^i),G(\downarrow Y^i))=Y^i$, $i=0,1,\dots,h-1$. Note
also that each (original) vertex of $G$ belongs to exactly one node
of $\mathcal{H}(G)$.


\subsection{Construction of collective additive tree spanners}
Unfortunately, the class of graphs of bounded tree-breadth is not
hereditary, i.e., induced subgraphs of a graph with tree-breath
$\rho$ are not necessarily of tree-breadth at most $\rho$ (for
example, a cycle of length $\ell$ with one extra vertex adjacent to
each vertex of the cycle has tree-breadth 1, but the cycle itself
has tree-breadth $\ell/3$). Thus, the method presented in
\cite{CollTrSp}, for constructing collective additive tree spanners
for hereditary classes of graphs admitting balanced disk-separators,
cannot be applied directly to the graphs of bounded tree-breadth.
Nevertheless, we will show that, with the help of Lemma
\ref{lem:G+hereditary}, the notion of hierarchical tree from
previous subsection and a careful analysis of distance changes (see
Lemma \ref{lem:closest-center}), it is possible to generalize the
method of \cite{CollTrSp} and construct in polynomial time for every
$n$-vertex graph $G$ a system of at most $\log_2 n$ collective
additive tree $(2\rho\log_2 n)$-spanners, where $\rho\leq \tb(G)$.
Unavoidable presence of meta vertices in the graphs resulting from a
hierarchical decomposition of the original graph $G$ complicates the
construction and the analysis. Recall that, in \cite{CollTrSp}, it
was shown that if every induced subgraph of a graph $G$ enjoys a
balanced disk-separator with radius at most $r$, then $G$ admits a
system of at most $\log_2 n$ collective additive tree $2r$-spanners.



Let $G=(V,E)$ be a connected  $n$-vertex, $m$-edge graph and assume
that $\tb(G)\leq\rho$. Let $\cH(G)$ be a hierarchical tree of $G$.
Consider an arbitrary internal node  $Y^h$ of  $\cH(G)$, and let
$(Y^0,Y^1,\dots, Y^h)$ be the path of $\cH(G)$ connecting
the root $Y^0$ of $\cH(G)$ with $Y^h$. 
Let $\widehat{G}(\dY^j)$ be the graph obtained from $G(\dY^j)$ by
removing all its meta vertices (note that $\widehat{G}(\dY^j)$ may
be disconnected).

\begin{lemma}\label{lem:closest-center}
For any vertex $z$ from $Y^h\cap V(G)$  there exists an index $i\in
\{0,1,\dots,h\}$ such that the vertices $z$ and $c(Y^i)$ can be
connected in the graph $\widehat{G}(\downarrow Y^i)$ by a path of
length at most $\rho(h+1)$. In particular, $d_G(z,c(Y^i))\leq
\rho(h+1)$ holds.
%
%
\end{lemma}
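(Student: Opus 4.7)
My plan is to induct on $h$, using the hierarchical structure of $\cH(G)$ to translate a short path in the minor $G(\downarrow Y^h)$ into a short path in a meta-vertex-free graph $\widehat{G}(\downarrow Y^i)$ for an appropriate ancestor $Y^i$. The base case $h=0$ is immediate: $G(\downarrow Y^0)=G$ has no meta vertices, so $\widehat{G}(\downarrow Y^0)=G$, and since $z\in Y^0\subseteq D_{r(Y^0)}(c(Y^0),G)$ with $r(Y^0)\le\rho$, a shortest $(z,c(Y^0))$-path in $G$ witnesses $i=0$ with length at most $\rho=\rho(h+1)$.

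For the inductive step I would pick a shortest $(z,c(Y^h))$-path $P=(z=u_0,\ldots,u_k=c(Y^h))$ in $G(\downarrow Y^h)$, noting that $k\le r(Y^h)\le\rho$. If every $u_j$ is an original vertex of $G$, then $P$ already lies in $\widehat{G}(\downarrow Y^h)$ and the choice $i=h$ works immediately. Otherwise let $j^*$ be the smallest index with $u_{j^*}$ a meta vertex and set $u:=u_{j^*-1}$; the prefix $z\to u$ has length $j^*-1\le\rho-1$ and lies in $\widehat{G}(\downarrow Y^h)$. The hard part will be to bridge the next edge $u\to u_{j^*}$, since $u_{j^*}$ is missing from $\widehat{G}(\downarrow Y^h)$.

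To handle this I would descend through the decomposition. The meta vertex $u_{j^*}$ represents some ancestor disk $Y^{i_1}$ with $i_1<h$; because no new edges are ever attached to a meta vertex after its creation, the edge $(u,u_{j^*})$ already exists in the graph $G(\downarrow Y^{i_1+1})=(G(\downarrow Y^{i_1})_{\bullet})^+$ at the moment $u_{j^*}$ was introduced, forcing $u$ to be adjacent in $G(\downarrow Y^{i_1})$ to some $v_1\in Y^{i_1}$. If $v_1$ is original I stop; otherwise $v_1$ is itself a meta vertex representing some $Y^{i_2}$ with $i_2<i_1$, and I repeat the same argument inside $G(\downarrow Y^{i_2})$ to produce $v_2\in Y^{i_2}$ adjacent to $u$ in $G(\downarrow Y^{i_2})$, and so on. The strictly decreasing chain $i_1>i_2>\cdots$ must terminate, and since $Y^0\subseteq V(G)$ contains no meta vertices the final vertex $v^*$ is an original vertex of $Y^{i^*}\cap V(G)$ for some $i^*<h$, with the edge $u\,v^*$ lying in $\widehat{G}(\downarrow Y^{i^*})$.

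I would then apply the inductive hypothesis to $v^*$ at depth $i^*<h$ to obtain $i\le i^*$ and a $(v^*,c(Y^i))$-path in $\widehat{G}(\downarrow Y^i)$ of length at most $\rho(i^*+1)$. The original-vertex set shrinks along the root-to-$Y^h$ path in $\cH(G)$, giving the nesting $\widehat{G}(\downarrow Y^h)\subseteq\widehat{G}(\downarrow Y^{i^*})\subseteq\widehat{G}(\downarrow Y^i)$; hence the concatenation $z\to u\to v^*\to c(Y^i)$ stays in $\widehat{G}(\downarrow Y^i)$, with total length bounded by $(j^*-1)+1+\rho(i^*+1)\le\rho(i^*+2)\le\rho(h+1)$ because $i^*\le h-1$. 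The ``in particular'' statement follows since $\widehat{G}(\downarrow Y^i)$ is an induced subgraph of $G$.
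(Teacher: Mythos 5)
Your proof is correct and follows essentially the same idea as the paper: identify the meta vertex on the $(z,c(Y^h))$-path closest to $z$, descend to a strictly lower level of $\cH(G)$ while adding at most $\rho$ to the path length, and terminate once no meta vertex remains. The paper packages this as iterated ``unfolding'' of the closest meta vertex into a path inside the corresponding ancestor disk, whereas you express the same descent as an induction on $h$ combined with a minor-chase to an original (non-meta) neighbor $v^{*}\in Y^{i^{*}}$ of $u$; this is a sound reformulation that yields the same length accounting $\rho(i^{*}+2)\le\rho(h+1)$.
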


\begin{proof}
Set $G_h:=G(\downarrow Y^h)$, $c:=c(Y^h)$, and let $SP^{G_h}_{c,z}$
be a shortest path of $G_h$ connecting vertices $c$ and $z$. We know
that this path has at most $r(Y^h)\leq \rho$ edges. If
$SP^{G_h}_{c,z}$ does not contain any meta vertices, then this path
is a path of $\widehat{G}(\downarrow Y^h)$ and of $G$ and therefore
$d_G(c,z)\leq \rho$ holds.

Assume now that $SP^{G_h}_{c,z}$ does contain meta vertices and let
$\mu'$ be the closest to $z$ meta vertex in $SP_{c,z}^{G_h}$. See
Fig.~\ref{fig:cc} for an illustration. Let
$SP^{G_h}_{c,z}=(c,\dots,a',\mu',b',\dots,z)$. By construction of
$\cH(G)$, meta vertex $\mu'$ was created at some earlier recursive
step to represent disk $Y^{i'}$ of graph $G_{i'}:=G(\downarrow
Y^{i'})$ for some 
$i'\in \{0,\dots,h-1\}$. Hence, there is a path
$P^{G_{i'}}_{c',z}=(c',\dots,b',\dots,z)$ of length at most $2\rho$
in $G_{i'}$ with $c':=c(Y^{i'})$. Again, if $P^{G_{i'}}_{c',z}$ does
not contain any meta vertices, then this path is a path of
$\widehat{G}(\downarrow Y^{i'})$ and of $G$ and therefore
$d_G(c',z)\leq 2\rho$ holds. If $P^{G_{i'}}_{c',z}$ does contain
meta vertices then again, ``unfolding" a meta vertex $\mu''$ of
$P^{G_{i'}}_{c',z}$ closest to $z$, we obtain a path
$P^{G_{i''}}_{c'',z}$ of length at most $3\rho$ in
$G_{i''}:=G(\downarrow Y^{i''})$ with $c'':=c(Y^{i''})$ for some
$i''\in \{0,\dots,i'-1\}$.

By continuing ``unfolding" this way meta vertices closest to $z$,
after at most $h$ steps, we will arrive at the situation when,  for
some index $i^*\in \{0,1,\dots,h\}$, a path of length at most
$\rho(h+1)$ will connect vertices $z$ and $c(Y^{i^*})$ in the graph
$\widehat{G}(\downarrow Y^{i^*})$. \qed

\begin{figure}[htb]
\begin{center} 
 \begin{minipage}[b]{14cm}
    \begin{center} \hspace*{-10mm}
\includegraphics[height=5.cm]{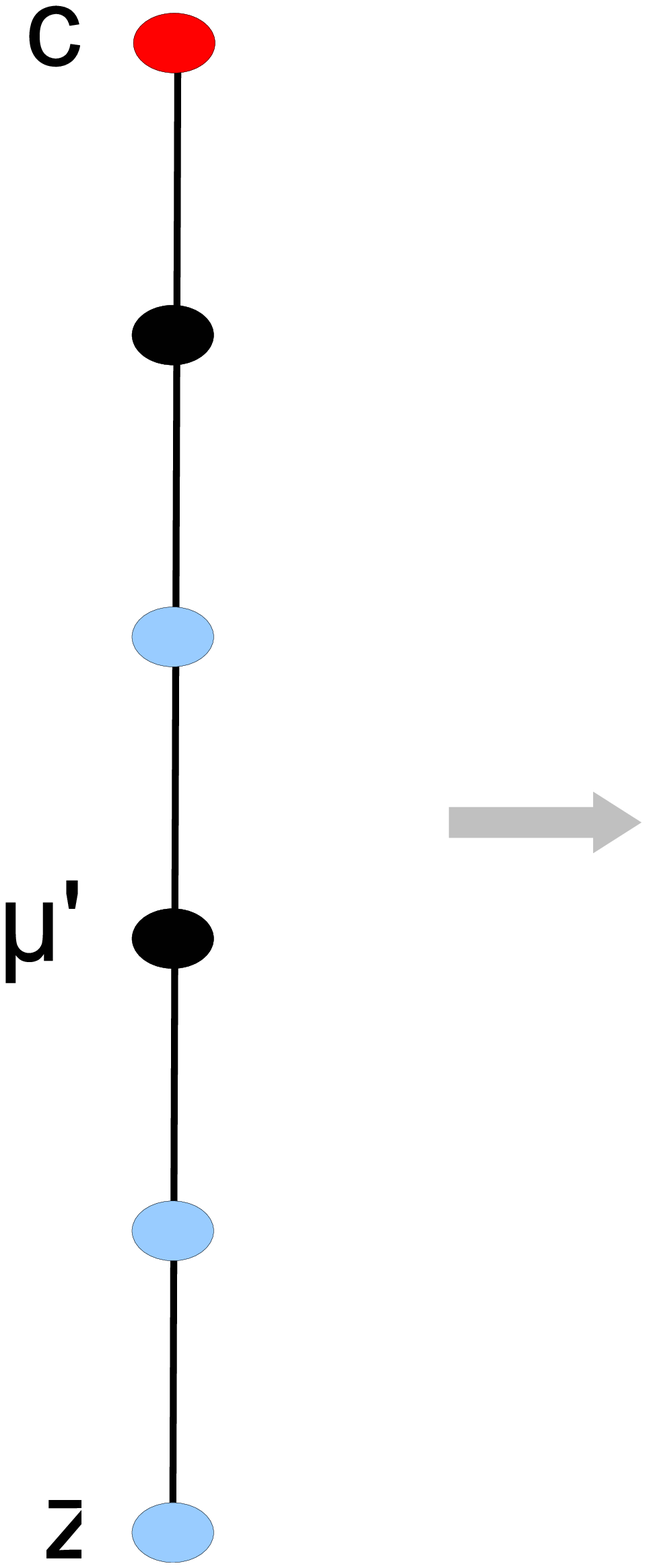}
\includegraphics[height=5.cm]{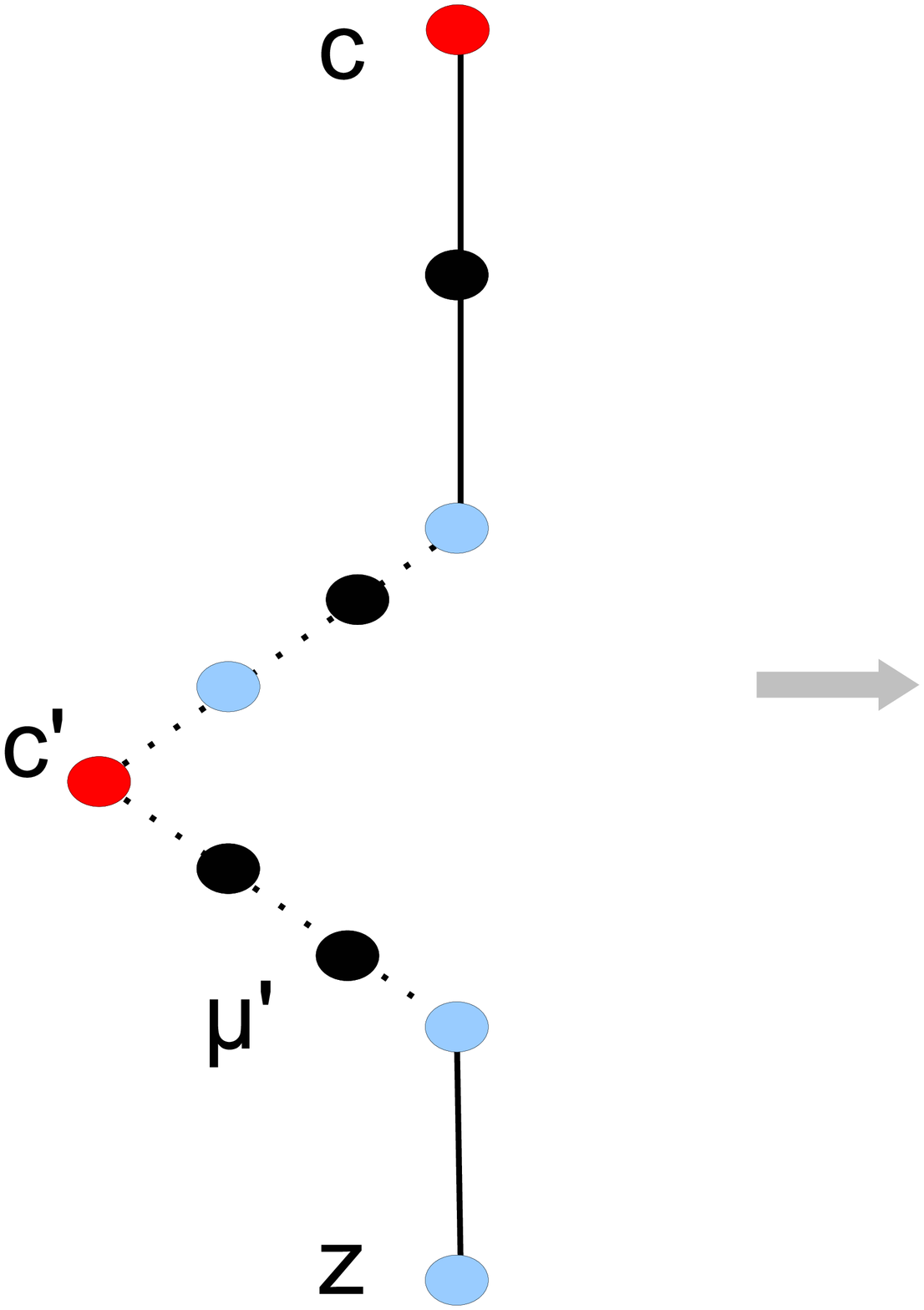}\hspace*{-5mm}
\includegraphics[height=5.cm]{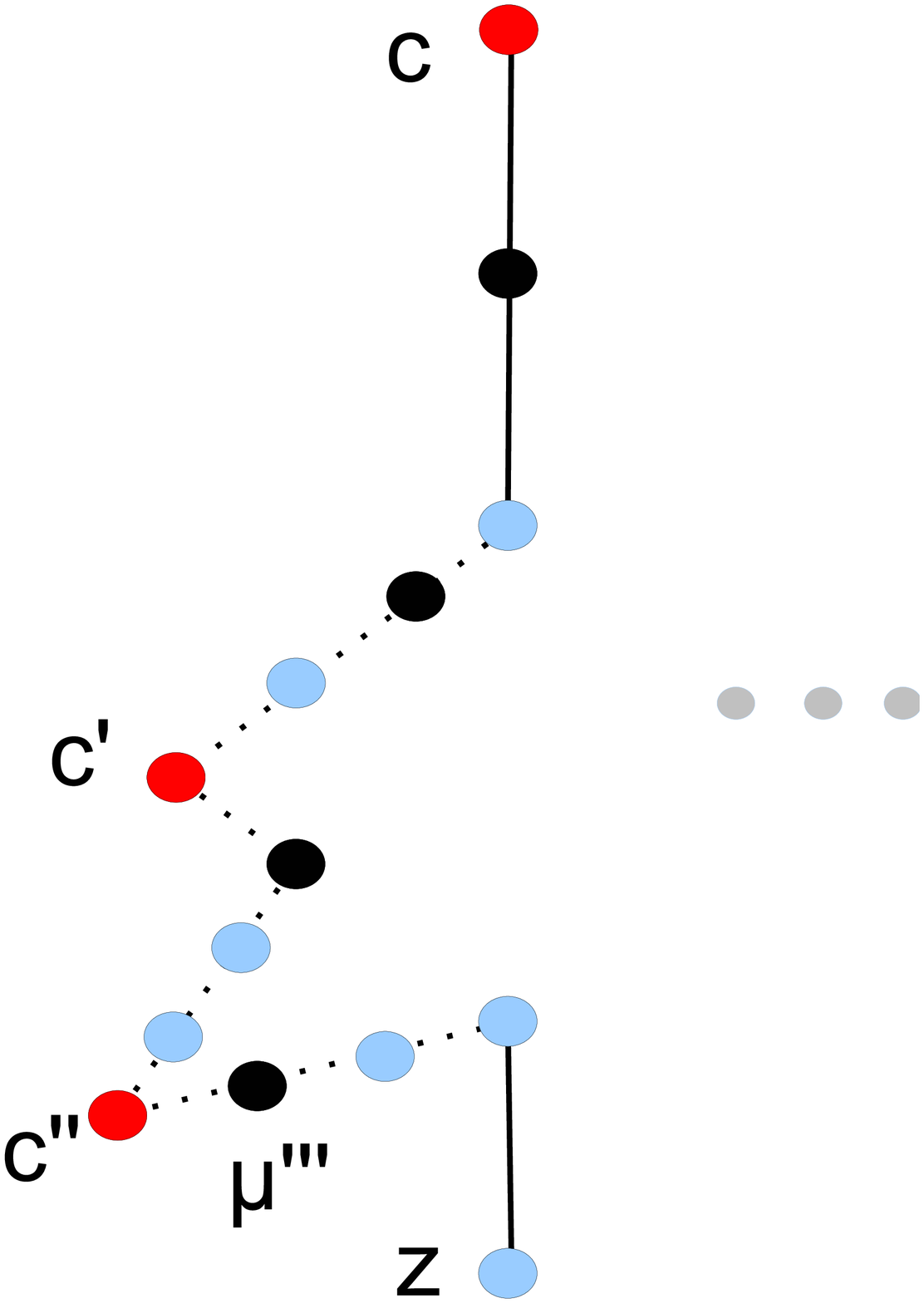}\hspace*{5mm}
\includegraphics[height=5.cm]{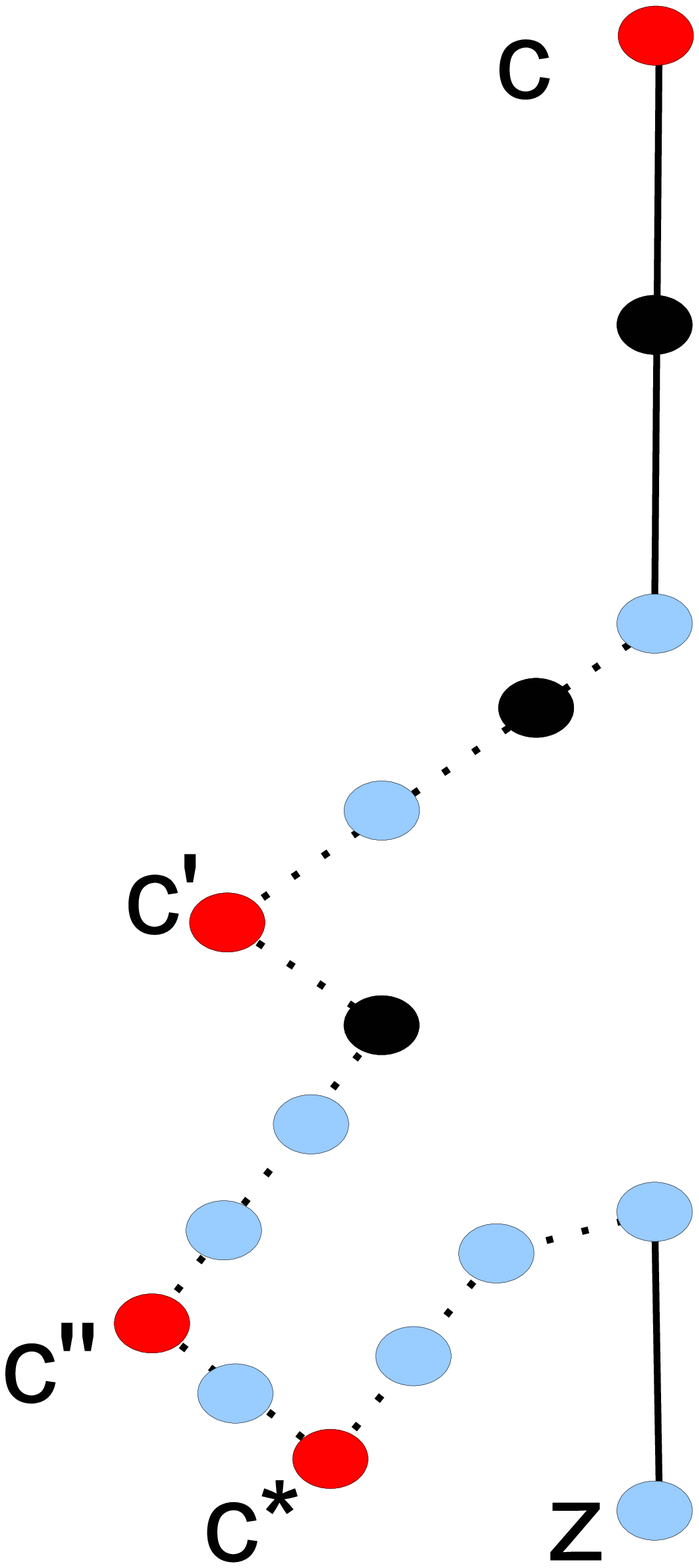}
    \end{center} \vspace*{-5mm}
    \caption{\label{fig:cc} Illustration to the proof of Lemma \ref{lem:closest-center}: ``unfolding" meta vertices. } %
 \end{minipage}
\end{center}
\vspace*{-6mm}
\end{figure}

\end{proof}

Consider two arbitrary vertices $x$ and $y$ of $G$, and let $S(x)$
and $S(y)$ be the nodes of $\cH(G)$ containing $x$ and $y$,
respectively. Let also $NCA_{\cH(G)}(S(x),S(y))$ be the nearest
common ancestor of nodes $S(x)$ and $S(y)$ in $\cH(G)$ and
$(Y^0,Y^1,\dots, Y^h)$ be the path of $\cH(G)$ connecting the root
$Y^0$ of $\cH(G)$ with $NCA_{\cH(G)}(S(x),S(y))=Y^h$ (in other
words, $Y^0,Y^1,\dots,Y^h$ are the common ancestors of $S(x)$ and
$S(y)$).

\begin{lemma} \label{lm:decomp}
Any path $P^G_{x,y}$ connecting vertices $x$ and $y$ in $G$ contains
a vertex from  $Y^0\cup Y^1\cup \dots \cup Y^h$.
\end{lemma}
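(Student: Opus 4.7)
The plan is to argue by contradiction: suppose $P$ avoids every vertex in $Y^0\cup Y^1\cup\dots\cup Y^h$. The trivial subcases where $x$ or $y$ itself lies in some $Y^i$ can be dispatched immediately (then that endpoint already witnesses the conclusion), so I first reduce to the case where neither $x$ nor $y$ is in $\bigcup_{i=0}^{h} Y^i$. Since each original vertex $v$ of $G$ lies in exactly one node of $\cH(G)$, namely $S(v)$, this means $S(x)$ and $S(y)$ are strict descendants of $Y^h$. Being the nearest common ancestor, $Y^h$ then separates $S(x)$ and $S(y)$ into subtrees rooted at two distinct children of $Y^h$.

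The core of the argument is an induction on $i=0,1,\dots,h$ showing that the entire path $P$ is contained in the set of original $G$-vertices of $V(G(\downarrow Y^i))$. The base case $i=0$ is immediate since $G(\downarrow Y^0)=G$. For the inductive step, the key observation is that the subgraph of $G(\downarrow Y^i)$ induced on its original vertices coincides with the subgraph of $G$ induced on the same set: in the construction of the hierarchy, meta vertices are only adjacent to vertices of the surviving component, and contractions affect only edges incident to removed vertices, so $G$-edges between surviving original vertices are preserved. Consequently $P$, which uses only original vertices of $V(G(\downarrow Y^i))$, is a genuine path in $G(\downarrow Y^i)$; since it avoids $Y^i$, it must lie in one connected component $C$ of $G(\downarrow Y^i)\setminus Y^i$. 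Because $S(x)$ and $S(y)$ both sit in the subtree of $\cH(G)$ rooted at $Y^{i+1}$, the component $C$ must be exactly the child-component represented by $Y^{i+1}$, and hence $V(C)\cap V(G)=V(G(\downarrow Y^{i+1}))\cap V(G)$, closing the induction.

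Applying the induction at $i=h$ gives that $P$ lies in a single connected component of $G(\downarrow Y^h)\setminus Y^h$. But from the initial reduction, $S(x)$ and $S(y)$ lie in the subtrees of two different children of $Y^h$, so $x$ and $y$ lie in two different components of $G(\downarrow Y^h)\setminus Y^h$. This contradicts the existence of a path from $x$ to $y$ in a single component, and the lemma follows.

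The main obstacle I anticipate is the bookkeeping around meta vertices and the minors $G(\downarrow Y^i)$: one has to verify carefully that a $G$-path staying inside the appropriate original-vertex set can be re-interpreted as a path in the minor $G(\downarrow Y^i)$, and that the components of $G(\downarrow Y^i)\setminus Y^i$ match up exactly with the children of $Y^i$ in $\cH(G)$ in such a way that the original-vertex part of the component containing $x$ and $y$ agrees with the original-vertex part of $V(G(\downarrow Y^{i+1}))$. Once these correspondences are read off from the construction of the hierarchical tree, the remainder is a standard separator argument.
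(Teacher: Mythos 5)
The paper states Lemma~\ref{lm:decomp} without a proof (it is treated as a routine consequence of the separator-hierarchy construction, following \cite{CollTrSp} and \cite{APPROX2011}), so there is no in-paper argument to compare against. Your proof is correct and is the standard argument one would give: you reduce to the case $S(x),S(y)\neq Y^h$, then run an induction pushing the path $P$ down the branch $Y^0,\dots,Y^h$ using the facts that (a) the induced subgraph of $G(\downarrow Y^i)$ on its surviving original vertices agrees with the corresponding induced subgraph of $G$, and (b) the children of $Y^i$ partition those surviving original vertices into the components of $G(\downarrow Y^i)\setminus Y^i$; at $i=h$ the path would have to live in a single component, contradicting that $S(x)$ and $S(y)$ hang from distinct children of the nearest common ancestor. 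The only step you might spell out a bit more is the reduction: from ``each original vertex lies in exactly one node of $\cH(G)$'' together with ``$Y^h$ is an ancestor of both $S(x)$ and $S(y)$,'' conclude that $x\notin\bigcup_{i\le h}Y^i$ forces $S(x)\neq Y^h$, hence $S(x)$ is a strict descendant of $Y^h$ (and likewise for $y$); as written this inference is slightly compressed but clearly what you intend, and the argument is sound.
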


Let $SP^G_{x,y}$ be a shortest path of $G$ connecting vertices $x$
and $y$, and  let $Y^i$ be the node of the path
$(Y^0,Y^1,\dots,Y^h)$ with the smallest index such that
$SP^G_{x,y}\bigcap Y^i\neq \emptyset$ in $G$. The following lemma
holds.

\begin{lemma} \label{lm:isom}
For each $j=0,\dots,i$, we have $d_G(x,y)=d_{G'}(x,y)$, where
$G':=\widehat{G}(\dY^j)$.
\end{lemma}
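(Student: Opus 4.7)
The plan is to prove both inequalities $d_{\widehat{G}(\downarrow Y^j)}(x,y) \geq d_G(x,y)$ and $d_{\widehat{G}(\downarrow Y^j)}(x,y) \leq d_G(x,y)$. The first is immediate, since by construction $\widehat{G}(\downarrow Y^j)$ is the induced subgraph $G[V(G(\downarrow Y^j)) \cap V(G)]$ of $G$, so any path in it is also a path in $G$ and distances cannot be smaller there than in $G$.

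For the reverse inequality I would establish the following structural claim, which is where the real work sits. Write $Z_j := V(\widehat{G}(\downarrow Y^j))$ and $\tilde{Y}^k := Y^k \cap V(G)$. The claim is that $Z_j$ is a union of connected components of $G\setminus(\tilde{Y}^0\cup\tilde{Y}^1\cup\ldots\cup\tilde{Y}^{j-1})$. I would prove this by induction on $j$. The base $j=0$ is trivial since $Z_0=V(G)$. For the inductive step, one unpacks the definition of $G(\downarrow Y^j)$ from $G(\downarrow Y^{j-1})$: after removing $Y^{j-1}$ from $G(\downarrow Y^{j-1})$, the set of non-meta vertices of any one connected component is a union of connected components of $G[Z_{j-1}\setminus \tilde{Y}^{j-1}]$. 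Indeed, if the meta vertices surviving in this component are all deleted (i.e.\ lie in $Y^{j-1}$), the restriction to original vertices already yields a component of $G[Z_{j-1}\setminus \tilde{Y}^{j-1}]$; if a meta vertex $c$ representing some earlier disk survives, it can glue several such components together via its boundary neighborhood, but the result is still a union of components of $G[Z_{j-1}\setminus \tilde{Y}^{j-1}]$. Combining with the inductive hypothesis that $Z_{j-1}$ is a union of components of $G\setminus(\tilde{Y}^0\cup\ldots\cup\tilde{Y}^{j-2})$, the claim follows.

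Given the claim, the lemma is a short chain of inclusions. Because $i$ is the smallest index with $SP^G_{x,y}\cap Y^i\neq\emptyset$ in $G$, the path $SP^G_{x,y}$ avoids $\tilde{Y}^0\cup\ldots\cup\tilde{Y}^{i-1}$, and for every $j\leq i$ it therefore lies in $G\setminus(\tilde{Y}^0\cup\ldots\cup\tilde{Y}^{j-1})$. Both $x$ and $y$ belong to $Z_h\subseteq Z_j$ (since $Y^h$ is a descendant of $Y^j$ in $\mathcal{H}(G)$), so they sit in the same connected component of $G\setminus(\tilde{Y}^0\cup\ldots\cup\tilde{Y}^{j-1})$; by the claim this component is contained in $Z_j$, and hence every vertex of $SP^G_{x,y}$ also lies in $Z_j$. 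Thus $SP^G_{x,y}$ is a path of $\widehat{G}(\downarrow Y^j)$, yielding $d_{\widehat{G}(\downarrow Y^j)}(x,y)\leq d_G(x,y)$, and combined with the first inequality, equality.

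The main obstacle is the structural claim, because of the delicate interplay between meta vertices and connectivity: a meta vertex that survives the deletion of $Y^{j-1}$ can short-circuit components that would be separate in $G$, so one cannot hope to show that $Z_j$ is a single component of $G\setminus(\tilde{Y}^0\cup\ldots\cup\tilde{Y}^{j-1})$; the right statement is that it is a union of components, and establishing this cleanly requires carefully tracking when each meta vertex is deleted versus when it is retained as one walks down the hierarchical tree.
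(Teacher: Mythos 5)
Your proposal is essentially correct, and incidentally the paper states Lemma~\ref{lm:isom} without any proof at all, so there is nothing to compare against; the structural invariant you isolate is exactly the right ingredient. Two remarks. First, your observation that $\widehat{G}(\downarrow Y^j)$ equals $G[Z_j]$ is correct but itself deserves a one-line induction: every edge added when forming a $G_i^+$ is incident to the new meta vertex, so no ``fake'' edge is ever introduced between two original vertices, hence the induced subgraph on original vertices is preserved level by level. Second, the sentence ``Both $x$ and $y$ belong to $Z_h\subseteq Z_j\ldots$, \emph{so} they sit in the same connected component of $G\setminus(\tilde Y^0\cup\ldots\cup\tilde Y^{j-1})$'' attributes the conclusion to the wrong premise: membership in $Z_j$ alone does not put $x$ and $y$ in the same component (your claim only says $Z_j$ is a \emph{union} of components). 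The actual reason is the one you state in the preceding sentence, namely that $SP^G_{x,y}$ is a connected subgraph entirely contained in $G\setminus(\tilde Y^0\cup\ldots\cup\tilde Y^{j-1})$. The membership $x\in Z_j$ is what you then need, together with the claim, to conclude that this common component is inside $Z_j$. With this repair the chain is airtight: the shortest path lies in a single component $C$ of $G\setminus(\tilde Y^0\cup\ldots\cup\tilde Y^{j-1})$, the vertex $x$ certifies $C\subseteq Z_j$, hence $SP^G_{x,y}$ lives in $\widehat{G}(\downarrow Y^j)=G[Z_j]$, giving $d_{\widehat{G}(\downarrow Y^j)}(x,y)\le d_G(x,y)$, and the reverse inequality is immediate from $\widehat{G}(\downarrow Y^j)$ being an induced subgraph of $G$.
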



Let now $B^i_1,\dots, B^i_{p_i}$ be the nodes at depth $i$ of the
tree $\cH(G)$. For each node $B^i_j$ that is not a leaf of $\cH(G)$,
consider its (central) vertex $c_j^i:=c(B_j^i)$. If $c_j^i$ is an
original vertex of $G$ (not a meta vertex created during the
construction of $\cH(G)$), then define a  connected graph $G_j^i$
obtained from $G(\downarrow B_j^i)$ by removing all its meta
vertices. If removal of those meta vertices produced few connected
components, choose as $G_j^i$ that component which contains the
vertex $c_j^i$. Denote by $T_j^i$ a BFS--tree of graph $G_j^i$
rooted at vertex $c_j^i$ of
$B_j^i$. 
If $B^i_j$ is a leaf of $\cH(G)$, then $B^i_j$ has at most 5
vertices. In this case, remove all meta vertices from $G(\downarrow
B_j^i)$ and for each connected component of the resulting graph
construct an additive tree spanner with optimal surplus $\leq 3$.
Denote the resulting subtree (forest) by $T_j^i$.

The trees $T_j^i$ ($i=0,1,\dots,depth(\cH(G))$, $j=1,2,\dots,p_i$),
obtained this way, are called {\em local subtrees} of $G$. Clearly,
the construction of these local subtrees can be incorporated into
the procedure of constructing hierarchical tree $\cH(G)$ of $G$ and
will not increase the overall $O(nm\log^2 n)$ run-time (see
Subsection \ref{ssec:dec-tree}).

\begin{lemma} \label{lm:distT}
For any two vertices $x,y \in V(G)$, there exists a local subtree
$T$ such that $d_{T}(x,y)\leq d_{G}(x,y)+2\rho \log_2 n - 1$.
\end{lemma}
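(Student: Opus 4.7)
The plan is a case analysis. Let $Y^h=NCA_{\cH(G)}(S(x),S(y))$ and let $(Y^0,Y^1,\dots,Y^h)$ be the root-to-$Y^h$ path in $\cH(G)$. I fix a shortest $x$-$y$ path $SP^G_{x,y}$ in $G$; by Lemma~\ref{lm:decomp} it meets $Y^0\cup\dots\cup Y^h$, so I let $i$ be the smallest index with $SP^G_{x,y}\cap Y^i\neq\emptyset$ and pick $s\in SP^G_{x,y}\cap Y^i$ (necessarily an original vertex of $G$, since $SP^G_{x,y}$ is a path in $G$). If $Y^i$ is a leaf of $\cH(G)$ (which forces $i=h$ and $Y^h$ to be a leaf), Lemma~\ref{lm:isom} gives $d_G(x,y)=d_{\widehat{G}(\downarrow Y^h)}(x,y)$, so $x$ and $y$ lie in the same component of $\widehat{G}(\downarrow Y^h)$; the local tree $T$ built for that component is an additive $3$-spanner of it, hence $d_T(x,y)\leq d_G(x,y)+3\leq d_G(x,y)+2\rho\log_2 n-1$ (using $n\geq 4$ and $\rho\geq 1$).

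When $Y^i$ is internal I would first establish, by induction on $j=0,\dots,i$, that $SP^G_{x,y}\subseteq V(\widehat{G}(\downarrow Y^j))$. The base $\widehat{G}(\downarrow Y^0)=G$ is trivial. In the step, the passage from $\widehat{G}(\downarrow Y^{j-1})$ to $\widehat{G}(\downarrow Y^j)$ amounts to deleting the separator $Y^{j-1}$ from $G(\downarrow Y^{j-1})$ and retaining the component containing $x$ and $y$; since $SP^G_{x,y}$ avoids $Y^{j-1}$ (as $j-1<i$) and by the induction hypothesis already lies in $\widehat{G}(\downarrow Y^{j-1})$, it remains in the retained component. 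I then apply Lemma~\ref{lem:closest-center} to $s$ along the chain $(Y^0,\dots,Y^i)$ to obtain an index $j\in\{0,\dots,i\}$ for which $c(Y^j)$ is an original vertex and $s,c(Y^j)$ are joined in $\widehat{G}(\downarrow Y^j)$ by a path of length at most $\rho(i+1)$.

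Let $G_\star$ denote the connected component of $\widehat{G}(\downarrow Y^j)$ that contains $c(Y^j)$ and $T$ the associated local BFS-subtree rooted at $c(Y^j)$; by the previous paragraph and the induction, all four of $x,y,s,c(Y^j)$ lie in $G_\star$. Since $T$ preserves distances from any vertex of $G_\star$ to $c(Y^j)$, and the $x$-to-$s$ and $s$-to-$y$ subpaths of $SP^G_{x,y}$ are $G$-geodesics lying inside $G_\star$,
\begin{eqnarray*}
 d_T(x,y) &\leq& d_T(x,c(Y^j))+d_T(c(Y^j),y) \\
          &\leq& \bigl(d_G(x,s)+\rho(i+1)\bigr)+\bigl(d_G(y,s)+\rho(i+1)\bigr)  =   d_G(x,y)+2\rho(i+1).
\end{eqnarray*}
Because $Y^i$ is internal it sits strictly above the leaves of $\cH(G)$, so $i\leq\log_2 n-2$ and $i+1\leq\log_2 n-1$; combined with $\rho\geq 1$ this yields $d_T(x,y)\leq d_G(x,y)+2\rho\log_2 n-2\rho\leq d_G(x,y)+2\rho\log_2 n-1$.

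The hard part will be the bookkeeping around meta vertices. On the one hand, the induction above must be checked carefully so that $SP^G_{x,y}$ actually lives in the induced subgraph $\widehat{G}(\downarrow Y^j)$ where $T$ is defined, for otherwise $G$-distances need not be realized by $G_\star$-distances and the estimate breaks. On the other hand, the index $j$ produced by Lemma~\ref{lem:closest-center} must identify an \emph{original} center $c(Y^j)$, so that a local subtree at $c(Y^j)$ has in fact been constructed; this is implicit in the lemma, since a path that lives in the meta-free graph $\widehat{G}(\downarrow Y^j)$ cannot end at a meta vertex. The extra $-1$ in the claimed bound is squeezed out from the fact that an internal $Y^i$ lies at least one level above every leaf of $\cH(G)$, together with $\rho\geq 1$.
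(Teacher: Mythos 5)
Your proof is correct and follows essentially the same route as the paper: split on whether the first ancestor bag $Y^i$ hit by a fixed shortest $x$--$y$ path is a leaf or internal, invoke Lemma~\ref{lem:closest-center} at the hit vertex to reach a nearby original center $c(Y^j)$, and combine the distance-preserving property of the BFS local subtree with the triangle inequality through $c(Y^j)$, finishing with the depth bound $i\leq\log_2 n-2$. The only cosmetic difference is that you re-derive by induction the fact that $SP^G_{x,y}$ survives into $\widehat{G}(\downarrow Y^j)$ for all $j\leq i$, whereas the paper simply cites Lemma~\ref{lm:isom} for both the leaf and the internal case; your explicit remark that the center produced by Lemma~\ref{lem:closest-center} must be an original vertex (because the connecting path lives in the meta-free graph) is a detail the paper leaves implicit.
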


\begin{proof}
We know, by Lemma \ref{lm:isom}, that a shortest path
$SP_{x,y}^{G}$, intersecting $Y^i$ and not intersecting any $Y^l$
($l<i$), lies entirely in $G':=\widehat{G}(\downarrow Y^i)$. Thus,
$d_G(x,y)=d_{G'}(x,y)$. If $Y^i$ is a leaf of $\cH(G)$ then for a
local subtree $T'$ (it could be a forest) of $G$ constructed for
$G'$ the following holds: $d_{T'}(x,y)\leq
d_{G'}(x,y)+3=d_{G}(x,y)+3\leq d_{G}(x,y)+2\rho\log_2 n-1$ (since
$n\geq 4$ and $\rho\geq 1$).

Assume now that $Y^i$ is an internal node of $\cH(G)$. We have
$i\leq \log_2 n -2$, since the depth of $\cH(G)$ is at most $\log_2
n -1$. Let $z \in Y^i$ be a vertex on the shortest path
$SP_{x,y}^{G}$. By Lemma \ref{lem:closest-center}, there exists an
index $j\in \{0,1,\dots,i\}$ such that the vertices $z$ and $c(Y^j)$
can be connected in the graph $\widehat{G}(\downarrow Y^j)$ by a
path of length at most $\rho(i+1)$.
Set
$G'':=\widehat{G}(\downarrow Y^j)$ and $c:=c(Y^j)$. By Lemma
\ref{lm:isom}, $d_G(x,y)=d_{G'}(x,y)=d_{G''}(x,y)$. Let $T''$ be the
local tree constructed for graph $G''=\widehat{G}(\downarrow Y^j)$,
i.e., a BFS--tree of a connected component of the graph
$G''=\widehat{G}(\downarrow Y^j)$ and rooted at vertex $c=c(Y^j)$.

We have $d_{T''}(x,c)=d_{G''}(x,c)$ and $d_{T''}(y,c)=d_{G''}(y,c)$.
By the triangle inequality, $d_{T''}(x,c)=d_{G''}(x,c) \leq
d_{G''}(x,z)+d_{G''}(z,c)$ and $d_{T''}(y,c)=d_{G''}(y,c) \leq
d_{G''}(y,z)+d_{G''}(z,c).$ That is, $d_{T''}(x,y)\leq
d_{T''}(x,c)+d_{T''}(y,c) \leq
                   d_{G''}(x,z)+d_{G''}(y,z)+2d_{G''}(z,c) =d_{G''}(x,y)+2d_{G''}(z,c).$
Now, using Lemma ~\ref{lm:isom} and inequality $d_{G''}(z,c)\leq
\rho(i+1)\leq \rho(\log_2 n -1)$,  we get $d_{T''}(x,y)\leq
d_{G''}(x,y)+2d_{G''}(z,c)\leq d_{G}(x,y)+2\rho(\log_2 n -1).$ \qed
\end{proof}

This lemma implies two important results. Let $G$ be a graph with
$n$ vertices and $m$ edges having $\tb(G)\leq \rho$. Also, let
$\cH(G)$ be its hierarchical tree and $\mathcal{LT}(G)$ be the
family of all its local subtrees (defined above). Consider a graph
$H$ obtained by taking the union of all local subtrees of $G$ (by
putting all of them together), i.e.,
\[H:=\bigcup\{T_j^i| T_j^i\in \mathcal{LT}(G)\}=(V,\cup\{E(T_j^i)| T_j^i\in \mathcal{LT}(G)\}).\]

 Clearly, $H$ is a spanning subgraph of $G$, constructible in
$O(nm\log^2 n)$  total time, and, for any two vertices $x$ and $y$
of $G$, $d_H(x,y)\leq d_G(x,y)+2\rho \log_2 n - 1$ holds. Also,
since for every level $i$ ($i=0,1,\dots,depth(\cH(G))$) of
hierarchical tree  $\cH(G)$, the corresponding local subtrees
$T^i_1,\dots, T^i_{p_i}$ are pairwise vertex-disjoint, their union
has at most $n-1$ edges. Therefore, $H$ cannot have more than
$(n-1)\log_2 n$ edges in total. Thus, we have proven the following
result.

\begin{theorem} \label{tm:spanner}
Every graph $G$ with $n$ vertices and $\tb(G)\leq \rho$ admits an
additive $(2\rho\log_2 n)$--spanner with at most $n\log_2 n$ edges.
Furthermore, such a sparse additive spanner of $G$ can be
constructed in polynomial time.
\end{theorem}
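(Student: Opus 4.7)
The plan is to obtain the spanner $H$ simply as the union of all local subtrees constructed in the previous subsection, i.e.\ $H := \bigcup\{T^i_j : T^i_j \in \mathcal{LT}(G)\}$, viewed as a spanning subgraph of $G$ (adding, if necessary, isolated vertices from $V(G)$ so that the vertex set is all of $V$, although every original vertex of $G$ already belongs to some local subtree by construction of $\mathcal{H}(G)$). The approximation property then comes essentially for free from Lemma~\ref{lm:distT}, and the only real work is bounding the edge count and the runtime.

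First I would verify the stretch bound. By Lemma~\ref{lm:distT}, for any pair $x,y\in V(G)$ there is a local subtree $T\in \mathcal{LT}(G)$ such that $d_T(x,y)\leq d_G(x,y)+2\rho\log_2 n -1$. Since $T$ is a subgraph of $H$, distances in $H$ can only be shorter, so $d_H(x,y)\leq d_T(x,y)\leq d_G(x,y)+2\rho\log_2 n-1\leq d_G(x,y)+2\rho\log_2 n$. Hence $H$ is an additive $(2\rho\log_2 n)$-spanner of $G$.

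Next I would bound the number of edges. The key structural observation is that for every fixed depth $i$ of $\mathcal{H}(G)$, the nodes $B_1^i,\dots,B_{p_i}^i$ at that depth correspond to graphs $G(\downarrow B_j^i)$ whose original (non-meta) vertex sets partition a subset of $V(G)$ (each original vertex of $G$ belongs to exactly one node of $\mathcal{H}(G)$, and the components $G_j^i$ obtained at depth $i$ are built from disjoint connected components arising after removing the separators on the path from the root). Therefore the local subtrees $T_1^i,\dots,T_{p_i}^i$ at depth $i$ are pairwise vertex-disjoint, so their total number of edges is at most $n-1$. Combining over all at most $\log_2 n$ levels of $\mathcal{H}(G)$ (recall depth$(\mathcal{H}(G))\leq \log_2 n -1$), the total number of edges of $H$ is at most $(n-1)\log_2 n \leq n\log_2 n$.

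Finally I would address the polynomial running time: the hierarchical tree $\mathcal{H}(G)$ together with all local BFS-subtrees can be built in $O(nm\log^2 n)$ time, as already observed in Subsection~\ref{ssec:dec-tree}, and taking the union of these subtrees to form $H$ adds only linear overhead. The main potential obstacle is the edge-counting step, but since the disjointness of local subtrees at a common level follows directly from the hierarchical decomposition (each original vertex of $G$ lives in exactly one leaf-to-root chain, hence in exactly one $G_j^i$ per level $i$), this goes through cleanly, and the theorem follows.
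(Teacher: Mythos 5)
Your proposal is correct and follows exactly the paper's own argument: take $H$ to be the union of all local subtrees, invoke Lemma~\ref{lm:distT} for the additive stretch, and bound the edge count by noting that the local subtrees at each fixed depth of $\mathcal{H}(G)$ are pairwise vertex-disjoint (contributing at most $n-1$ edges per level) and that there are at most $\log_2 n$ levels. The runtime bound $O(nm\log^2 n)$ also matches the paper's construction.
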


Instead of taking the union of all local subtrees of $G$, one can
fix $i$ ($i\in\{0,1,\dots,depth(\cH(G))\}$) and consider separately
the union of only local subtrees $T^i_1,\dots, T^i_{p_i}$,
corresponding to the level $i$ of the hierarchical tree  $\cH(G)$,
and then extend in linear $O(m)$ time that forest to a spanning tree
$T^i$ of $G$ (using, for example, a variant of the Kruskal's
Spanning Tree algorithm for the unweighted graphs). We call this
tree $T^i$ the {\em spanning tree of $G$ corresponding to the level
$i$ of the hierarchical tree $\cH(G)$}. In this way we can obtain at
most $\log_2 n$ spanning trees for $G$, one for each level $i$ of
$\cH(G)$. Denote the collection of those spanning trees by $\cT(G)$.
Thus, we obtain the following theorem.

\begin{theorem} \label{tm:system}
Every graph $G$ with $n$ vertices and $\tb(G)\leq \rho$ admits a
system $\cT(G)$ of at most $\log_2 n$ collective additive tree
$(2\rho\log_2 n)$--spanners.  Furthermore, such a system of
collective additive tree spanners of $G$ can be constructed in
polynomial time.
\end{theorem}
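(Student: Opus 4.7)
The plan is to convert the collection of local subtrees into $\log_2 n$ spanning trees of $G$ in the manner already hinted at in the paragraph preceding the theorem, and to then lift the per-pair guarantee of Lemma~\ref{lm:distT} to these spanning trees. First I would recall the key structural fact about $\cH(G)$: each original vertex of $G$ belongs to \emph{exactly one} node of $\cH(G)$. Consequently, the nodes $B_1^i,\dots,B_{p_i}^i$ at any fixed depth $i$ of $\cH(G)$ have pairwise disjoint vertex sets (restricted to original vertices of $G$), and the local subtrees $T_1^i,\dots,T_{p_i}^i$ defined on them are pairwise vertex-disjoint. Hence their union $F^i:=\bigcup_{j=1}^{p_i} T_j^i$ is a forest of $G$ with at most $n-1$ edges.

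Next I would extend each forest $F^i$ to a spanning tree $T^i$ of $G$ in $O(m)$ time using a Kruskal-style procedure: seed the algorithm with all edges of $F^i$ and then process the remaining edges of $G$ greedily, accepting any edge that does not close a cycle. The crucial property to verify is that each local subtree $T_j^i$ \emph{sits as a subtree inside} $T^i$, meaning that for any $x,y\in V(T_j^i)$ one has $d_{T^i}(x,y)=d_{T_j^i}(x,y)$. This follows because any edge added during the extension connects two different connected components of the current partial spanning forest; hence no added edge can create a shortcut between two vertices already in the same component of $F^i$, and in particular none inside an individual $T_j^i$.

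With this in hand I would combine the above with Lemma~\ref{lm:distT}: for every pair $x,y\in V(G)$ there exists a level $i$ and an index $j$ such that $x,y\in V(T_j^i)$ and $d_{T_j^i}(x,y)\leq d_G(x,y)+2\rho\log_2 n-1$. Using the subtree-preservation property established above, this gives
\[
d_{T^i}(x,y)=d_{T_j^i}(x,y)\leq d_G(x,y)+2\rho\log_2 n-1\leq d_G(x,y)+2\rho\log_2 n.
\]
Thus the collection $\cT(G):=\{T^i : 0\leq i\leq \mathrm{depth}(\cH(G))\}$ is a system of collective additive tree $(2\rho\log_2 n)$-spanners. Since the depth of $\cH(G)$ is at most $\log_2 n - 1$ (as computed in Subsection~\ref{ssec:dec-tree}), the system has cardinality at most $\log_2 n$, as required.

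Finally I would bound the running time. The hierarchical tree $\cH(G)$ together with all local subtrees is constructible in $O(nm\log^2 n)$ time, and extending each $F^i$ to a spanning tree costs $O(m)$, so the total cost is $O(nm\log^2 n)$, which is polynomial. The only real technical point is the subtree-preservation argument in the second paragraph; everything else is a direct bookkeeping assembly of prior lemmas. I expect no serious obstacle — the substantive work has already been done in Lemmas~\ref{lem:closest-center}--\ref{lm:distT}.
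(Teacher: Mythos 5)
Your proposal is correct and is essentially the paper's own construction: for each level $i$ of $\cH(G)$ take the union of the (pairwise vertex-disjoint) local subtrees at that level and extend that forest to a spanning tree $T^i$ by a Kruskal-style completion, then apply Lemma~\ref{lm:distT}. The only thing you add is an explicit verification that the Kruskal extension preserves intra-component distances, which the paper leaves implicit; the approach, the appeal to the depth bound, and the time bound all coincide with the paper's argument.
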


\subsection{Additive spanners for graphs admitting (multiplicative) tree $t$--spanners}
Now we give two implications of the above results for the class of
tree $t$--spanner admissible graphs. In \cite{APPROX2011}, the
following important (``bridging") lemma was proven.

\begin{lemma}[\cite{APPROX2011}] \label{lm:tsp-tb} If a graph $G$ admits a tree $t$-spanner then its tree-breadth is at most $\lceil{t/2}\rceil$.
\end{lemma}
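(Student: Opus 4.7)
The plan is to build a tree-decomposition of $G$ directly from a tree $t$-spanner $T$, using $T$ itself as the decomposition tree, and defining each bag as a $T$-ball of radius $r := \lceil t/2 \rceil$ centered at the corresponding vertex.

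\medskip

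\noindent\emph{Construction.} Let $T$ be a tree $t$-spanner of $G$ and set $r:=\lceil t/2\rceil$. Take the decomposition tree to be $T$ itself (so the index set $I$ is $V(G)=V(T)$), and for every $v\in V(T)$ define the bag
\[
X_v \;=\; D_r(v,T) \;=\; \{\,u\in V(G)\,:\,d_T(u,v)\le r\,\}.
\]
Since $T$ is a spanning subgraph of $G$, $d_G(u,v)\le d_T(u,v)$, so $X_v\subseteq D_r(v,G)$. Hence every bag is covered by a $G$-disk of radius $r$ centered at $v$, which will give the desired breadth bound once we verify that we have a valid tree-decomposition.

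\medskip

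\noindent\emph{Verification of the three tree-decomposition axioms.}
First, $v\in X_v$ for every $v$, so the bags cover $V(G)$.
Second, for the running-intersection property I would note that for each fixed $u\in V(G)$ the set $\{v\in V(T):u\in X_v\}=\{v:d_T(u,v)\le r\}$ is the vertex set of a ball in the tree $T$, which is a subtree of $T$.
Third, for edge coverage I would use Proposition~\ref{prop:zero}: for any $xy\in E(G)$ we have $d_T(x,y)\le t$. Let $P$ be the $x,y$-path in $T$; its length $\ell$ satisfies $\ell\le t\le 2r$. If $\ell$ is even, pick the midpoint $m$ of $P$; then $d_T(m,x)=d_T(m,y)=\ell/2\le r$, so $x,y\in X_m$. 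If $\ell$ is odd, pick the vertex $m$ on $P$ at distance $\lceil \ell/2\rceil$ from $x$ and $\lfloor \ell/2\rfloor$ from $y$; both distances are at most $\lceil t/2\rceil=r$, so again $x,y\in X_m$. This handles all cases simultaneously (in particular, the odd-$t$ case, which is the only place the ceiling really matters).

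\medskip

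\noindent\emph{Conclusion and where the work really lies.} The three axioms together with $X_v\subseteq D_r(v,G)$ show that $\bigl(\{X_v\}_{v\in V(T)},T\bigr)$ is a tree-decomposition of $G$ of breadth at most $r=\lceil t/2\rceil$, giving $\tb(G)\le\lceil t/2\rceil$. There is no real obstacle here; the only subtlety is the choice of the midpoint for odd-length paths in $T$ (handled above) and the observation that using $T$ as the decomposition tree automatically forces the running-intersection property, because $T$-balls are connected in $T$. No auxiliary lemmas beyond Proposition~\ref{prop:zero} are needed.
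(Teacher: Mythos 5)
Your proof is correct and uses the standard construction cited here from \cite{APPROX2011}: take the spanner tree $T$ itself as the decomposition tree, let each bag be the $T$-ball $D_{\lceil t/2\rceil}(v,T)$, note that $T$-balls are subtrees of $T$ (running intersection), that edges of $G$ have $T$-distance at most $t$ and therefore lie in a common bag via a midpoint of the $T$-path, and that $d_G\le d_T$ ensures each bag sits inside a $G$-disk of radius $\lceil t/2\rceil$. This is essentially the same argument the cited source gives; no gap.
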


Note that the tree-breadth bounded by $\lceil{t/2}\rceil$ provides
only a necessary condition for a graph to have a multiplicative tree
$t$-spanner. There are (chordal) graphs which have tree-breadth 1
but any multiplicative tree $t$-spanner of them has $t=\Omega(\log
n)$ \cite{APPROX2011}. Furthermore, a cycle on $3n$ vertices has
tree-breadth $n$ but admits a system of 2 collective additive tree
$0$-spanners.

Combining Lemma ~\ref{lm:tsp-tb} with Theorem \ref{tm:spanner} and
Theorem ~\ref{tm:system}, we deduce the following results.

\begin{theorem}
Let $G$ be a graph with $n$ vertices and $m$ edges having a
(multiplicative) tree $t$--spanner. Then, $G$ admits an additive
$(2\lceil{t/2}\rceil\log_2 n)$--spanner with at most $n\log_2 n$
edges constructible in $O(nm\log^2n)$ time.
\end{theorem}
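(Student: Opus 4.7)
The plan is to combine the two key results already established in the paper: Lemma \ref{lm:tsp-tb}, which converts a structural hypothesis (existence of a tree $t$-spanner) into a bound on tree-breadth, and Theorem \ref{tm:spanner}, which builds a sparse additive spanner from a tree-breadth bound. Neither step needs any new machinery; the entire content of the statement is a direct chaining.

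First, I would invoke Lemma \ref{lm:tsp-tb} on the hypothesis that $G$ admits a (multiplicative) tree $t$-spanner. This yields $\tb(G)\le \lceil t/2\rceil$. Note that the lemma only asserts existence of a tree-decomposition of breadth $\lceil t/2\rceil$; we do not actually need to construct one explicitly, since the algorithm underlying Theorem \ref{tm:spanner} works directly on $G$ via iterated balanced disk-separators (Lemma \ref{lm:gch-sep} and Lemma \ref{prop:gch-sep-time}) without ever constructing the decomposition.

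Second, I would apply Theorem \ref{tm:spanner} with $\rho := \lceil t/2\rceil$. This immediately produces a spanning subgraph $H$ of $G$ with at most $n\log_2 n$ edges such that $d_H(x,y)\le d_G(x,y)+2\lceil t/2\rceil \log_2 n$ for all $x,y\in V(G)$, which is exactly the additive $(2\lceil t/2\rceil \log_2 n)$-spanner claimed.

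For the running time, I would just quote the $O(nm\log^2 n)$ bound established in Subsection \ref{ssec:dec-tree} for constructing the hierarchical tree $\cH(G)$ together with all local subtrees, and note that taking the union of those local subtrees is linear in their total size, hence absorbed by the $O(nm\log^2 n)$ term. There is no genuine obstacle: the statement is essentially a corollary, and the only thing worth emphasizing in the write-up is that the algorithm is oblivious to any witnessing tree $t$-spanner or tree-decomposition --- it only uses the existence of such a spanner, through the tree-breadth bound of Lemma \ref{lm:tsp-tb}, to guarantee that the balanced disk-separators it repeatedly finds have radius at most $\lceil t/2\rceil$.
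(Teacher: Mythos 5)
Your proof is correct and matches the paper's own argument exactly: the paper derives this theorem by chaining Lemma \ref{lm:tsp-tb} (tree $t$-spanner implies $\tb(G)\le\lceil t/2\rceil$) with Theorem \ref{tm:spanner} (tree-breadth $\rho$ yields an additive $(2\rho\log_2 n)$-spanner with at most $n\log_2 n$ edges), and the $O(nm\log^2 n)$ bound comes from the hierarchical-tree construction of Subsection \ref{ssec:dec-tree}, just as you note.
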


\begin{theorem}
Let $G$ be a graph with $n$ vertices and $m$ edges having a
(multiplicative) tree $t$--spanner. Then,  $G$ admits a system
$\cT(G)$ of at most $\log_2 n$ collective additive tree
$(2\lceil{t/2}\rceil\log_2 n)$--spanners constructible in
$O(nm\log^2 n)$ time.
\end{theorem}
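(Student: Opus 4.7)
The plan is to obtain this theorem as an immediate corollary of Lemma~\ref{lm:tsp-tb} and Theorem~\ref{tm:system}. The hypothesis that $G$ admits a (multiplicative) tree $t$-spanner feeds into the bridging lemma to give a quantitative bound on the tree-breadth, and the bound on tree-breadth is exactly the input the algorithm behind Theorem~\ref{tm:system} needs.

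More precisely, first I would invoke Lemma~\ref{lm:tsp-tb} to conclude that $\tb(G) \leq \lceil t/2 \rceil$. Then, setting $\rho := \lceil t/2 \rceil$ in Theorem~\ref{tm:system}, I obtain a system $\cT(G)$ of at most $\log_2 n$ spanning trees of $G$ such that, for every pair $x,y \in V(G)$, some $T \in \cT(G)$ satisfies $d_T(x,y) \leq d_G(x,y) + 2\lceil t/2 \rceil \log_2 n$, which is the advertised stretch.

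For the running time I would stress that the construction is \emph{oblivious} to the actual tree $t$-spanner: the algorithm in Subsection~\ref{ssec:dec-tree} works directly on $G$, using Lemma~\ref{prop:gch-sep-time} to find at each recursive step a balanced disk-separator $D_r(v,G')$ of minimum radius in the current (minor) graph $G'$ in $O(nm)$ time. Lemma~\ref{lem:G+hereditary} guarantees that $\tb(G_i^+) \leq \tb(G) \leq \lceil t/2 \rceil$ is preserved through the recursion, so every disk-separator encountered has radius at most $\lceil t/2 \rceil$. The depth of $\cH(G)$ is $O(\log n)$ and the total edges/vertices per level stay within $O(m)$ and $O(n \log n)$ respectively, so building $\cH(G)$ together with all local BFS-subtrees takes $O(nm \log^2 n)$ time. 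Extending, for each level $i$, the forest of local subtrees to a spanning tree $T^i$ of $G$ costs an additional $O(m)$ per level via Kruskal's algorithm, which does not change the overall bound.

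The only thing to double-check is a clean appeal to Theorem~\ref{tm:system}: the theorem is stated for any graph with $\tb(G) \leq \rho$ without assuming $\rho$ is known, so there is no hidden obstacle — the hypothesis of the present theorem simply provides a concrete $\rho = \lceil t/2 \rceil$ via Lemma~\ref{lm:tsp-tb}, and both the stretch and the time bound follow by substitution.
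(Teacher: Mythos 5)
Your proposal is exactly the paper's argument: the theorem is stated there as a direct consequence of combining Lemma~\ref{lm:tsp-tb} with Theorem~\ref{tm:system} (and the $O(nm\log^2 n)$ bound is inherited from the hierarchical-tree construction of Subsection~\ref{ssec:dec-tree}). Your elaboration of the running-time accounting and of the algorithm's obliviousness to the witnessing tree $t$-spanner is accurate and consistent with the paper.
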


\section{Collective Additive Tree Spanners of Graphs with Bounded $k$-Tree-Breadth, $k\geq 2$} \label{sec:any-k}
In this section, we extend the approach of Section~\ref{sec:k=1} and
show that any $n$-vertex graph $G$ with $\tb_k(G)\leq \rho$ has a
system of at most $k(1+ \log_2 n)$ collective additive tree
$(2\rho(1+\log_2 n))$-spanners constructible in polynomial time for
every fixed $k$. We will assume that $n>k$, since any graph with $n$
vertices has a system of $n-1$ collective additive tree 0-spanners
(consider $n-1$  BFS-trees rooted at different vertices).

\subsection{Balanced separators for graphs with bounded $k$-tree-breadth}

We will need the following balanced clique-separator result for
chordal graphs. Recall that a graph is chordal if every its induced
cycle has length three.


\begin{lemma}[\cite{GRE}]
\label{lm:ch-sep} Every chordal graph $G$ with $n$ vertices and $m$
edges contains a maximal clique $C$ such that if the vertices in $C$
are deleted from $G$, every connected component in the graph induced
by any remaining vertices is of size at most $n/2$. Such a balanced
clique-separator $C$ of a connected chordal graph can be found in
$O(m)$ time.
\end{lemma}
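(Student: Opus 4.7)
The plan is to exploit the classical clique tree representation of chordal graphs. Recall that every chordal graph $G$ admits a \emph{clique tree} $T$: a tree whose nodes are the maximal cliques of $G$, such that for every vertex $v\in V(G)$, the set of maximal cliques containing $v$ induces a connected subtree $T_v$ of $T$. Such a clique tree can be constructed in $O(n+m)$ time via LexBFS / MCS combined with a perfect elimination ordering. The goal is to locate the desired balanced clique-separator as a single node of this tree.

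Root $T$ at an arbitrary maximal clique, and for each node $C$ of $T$ different from the root, set $s(C):=|C\setminus \mathrm{parent}(C)|$ (with $s(\mathrm{root}):=|\mathrm{root}|$). Since each vertex $v$ appears in a connected subtree $T_v$ of $T$, it is ``introduced'' exactly once in this top-down traversal, so $\sum_{C}s(C)=n$. I would then apply the standard centroid theorem to the tree $T$ weighted by $s(\cdot)$: there exists a node $C^{*}$ of $T$ such that every subtree of $T$ hanging off $C^{*}$ has total $s$-weight at most $n/2$. Finding such a $C^{*}$ takes linear time in $|T|=O(n)$ by a single bottom-up pass after re-rooting.

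The key verification step is to show that deleting the vertex set $C^{*}$ from $G$ splits $V(G)\setminus C^{*}$ into groups $V_1,\dots,V_d$ that are exactly the unions of $s(C)$'s over the subtrees $T_1,\dots,T_d$ of $T\setminus\{C^{*}\}$. In one direction, every $s(C)$ for $C\in T_i$ avoids $C^{*}$: if $v\in s(C)\cap C^{*}$ then $T_v$ would contain both $C$ and $C^{*}$, hence the whole path between them including $\mathrm{parent}(C)$, contradicting $v\notin\mathrm{parent}(C)$. In the other direction, any $v\in V(G)\setminus C^{*}$ lying in some clique of $T_i$ has $T_v$ entirely contained in $T_i$ (since $T_v$ is connected and misses $C^{*}$), so $v$ is introduced at a node of $T_i$. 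Finally, if $u\in V_i$ and $w\in V_j$ with $i\neq j$ were adjacent in $G\setminus C^{*}$, some maximal clique would contain both, forcing $T_u$ and $T_w$ to meet, which would then force the connecting path through $C^{*}$, contradicting $u,w\notin C^{*}$. Hence every component of $G\setminus C^{*}$ lies in a single $V_i$, and $|V_i|=\sum_{C\in T_i}s(C)\le n/2$ by the centroid property.

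I expect the main obstacle to be precisely this correspondence between subtrees of $T\setminus\{C^{*}\}$ and components of $G\setminus C^{*}$; the centroid-of-a-tree argument is routine once the correspondence (and the weight accounting $\sum s(C)=n$) is in place. Regarding running time, clique-tree construction and the centroid search each run in $O(n+m)$ time, giving the claimed $O(m)$ bound for a connected chordal graph (where $n=O(m)$).
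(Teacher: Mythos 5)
The paper itself does not supply a proof of this lemma --- it simply cites Gilbert, Rose and Edenbrandt --- so the right comparison is with the argument in that reference, which is precisely the clique-tree--plus--weighted-centroid argument you outline (weight each node of the clique tree by the number of vertices it introduces, take a centroid, and show components of $G\setminus C^*$ live inside single subtrees of $T\setminus\{C^*\}$). Your approach therefore matches the cited source.

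One intermediate claim of yours is false as stated. You assert that for every node $C$ in a component $T_i$ of $T\setminus\{C^*\}$, the set $s(C)=C\setminus\mathrm{parent}(C)$ is disjoint from $C^*$, arguing that if $v\in s(C)\cap C^*$ then the $T$-path from $C$ to $C^*$ passes through $\mathrm{parent}(C)$. That is only the case when $C$ is not an ancestor of $C^*$. If $C^*$ is not the root, the component $T_1$ containing the root contains proper ancestors $A$ of $C^*$, and every vertex $v\in C^*$ that is not introduced at $C^*$ is introduced at exactly such an $A$ (since $T_v$ must contain $C^*$ and its topmost node is an ancestor of $C^*$); the path from $A$ to $C^*$ descends through a child of $A$, not through $\mathrm{parent}(A)$, so there is no contradiction. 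Thus $\bigcup_{C\in T_1}s(C)$ generally does meet $C^*$, and the ``exactly'' in your component description is wrong. The fix is easy and costs you nothing: you never need the equality $V_i=\bigcup_{C\in T_i}s(C)$, only the inclusion $V_i\subseteq\bigcup_{C\in T_i}s(C)$, which your ``other direction'' argument does establish correctly. Combined with the centroid bound $\bigl|\bigcup_{C\in T_i}s(C)\bigr|\le n/2$ and your (correct) observation that adjacent vertices outside $C^*$ share a maximal clique and therefore land in the same $T_i$, this still yields the $n/2$ bound on component sizes. The remaining pieces --- $\sum_C |s(C)|=n$, positivity of the weights (adjacent distinct maximal cliques in a connected chordal graph are never nested), and the $O(m)$ time analysis --- are all fine.
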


We say that a graph $G=(V,E)$ with $|V|\geq k$ has a \emph{balanced}
$\mathbf{D_r^k}$-\emph{separator} if there exists a collection of
$k$ disks $D_r(v_1,G), D_r(v_2,G),\dots, D_r(v_k,G)$ in $G$,
centered at (different) vertices $v_1,v_2,\dots,v_k$ 
and each of radius $r$, such that the union of those disks
$\mathbf{D_r^k}:=\bigcup_{i=1}^k D_r(v_i,G)$ forms a balanced
separator of $G$, i.e., each connected component of $G[V\setminus
\mathbf{D_r^k}]$ has at most $|V|/2$ vertices. The following result
generalizes Lemma \ref{lm:gch-sep}.

\begin{lemma}\label{disks_sep}
Every graph $G$ with at least $k$ vertices and $\tb_k(G) \leq \rho$
has a \emph{ balanced $\mathbf{D_\rho^k}$-separator}.
\end{lemma}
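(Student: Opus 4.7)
The plan is to reduce this to the standard centroid argument for tree-decompositions, then ``inflate'' the balancing bag into its covering disks. Let $T(G)=(\{X_i|i\in I\}, T=(I,F))$ be a tree-decomposition of $G$ achieving $k$-breadth at most $\rho$, which exists by the assumption $\tb_k(G)\leq \rho$.

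First I would find a single bag that already serves as a balanced separator of $G$. For each edge $ij\in F$ of the decomposition tree $T$, removing $ij$ splits $T$ into two subtrees; the two corresponding unions of bags (minus $X_i\cap X_j$) are separated in $G$ by $X_i\cap X_j$, hence also by $X_i$ and by $X_j$. A standard centroid-style argument on $T$ (orient every tree edge toward the heavier side, where ``heavier'' is measured by the number of vertices of $V$ that lie in bags on that side but not in the other endpoint's bag; the sink node exists and gives the desired property) produces a bag $X_{i^*}$ such that every connected component of $G\setminus X_{i^*}$ has at most $|V|/2$ vertices. This is exactly the statement that $X_{i^*}$ is a balanced separator of $G$, and it is the same ingredient used in the proof of Lemma~\ref{lm:gch-sep} for the case $k=1$.

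Next, by the definition of $k$-breadth applied to bag $X_{i^*}$, there exist vertices $v_1,\dots,v_k\in V$ such that
\[
X_{i^*} \;\subseteq\; D_\rho(v_1,G)\cup D_\rho(v_2,G)\cup\cdots\cup D_\rho(v_k,G) \;=:\; \mathbf{D_\rho^k}.
\]
Thus $\mathbf{D_\rho^k}\supseteq X_{i^*}$. Since enlarging a separator can only fragment components further, every connected component $C$ of $G[V\setminus \mathbf{D_\rho^k}]$ is contained in some connected component $C'$ of $G[V\setminus X_{i^*}]$, and therefore $|C|\leq |C'|\leq |V|/2$. This shows $\mathbf{D_\rho^k}$ is a balanced $\mathbf{D_\rho^k}$-separator, as required. (If some of the $v_j$'s coincide, we may pad the collection with additional arbitrary centers, or simply observe that fewer than $k$ disks already suffice.)

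The main (and only substantive) obstacle is the centroid step, i.e., verifying that a decomposition bag which is a balanced separator of $G$ always exists. This is a classical fact about tree-decompositions and is exactly the ingredient already exploited in Lemma~\ref{lm:gch-sep}; the novelty here is merely that we may then replace that single bag by its $k$-disk cover without losing the balancing property. Once the covering disks are available, the passage from the $k=1$ statement of Lemma~\ref{lm:gch-sep} to the general $k$ statement of Lemma~\ref{disks_sep} is immediate.
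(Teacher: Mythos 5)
Your proof is correct, but it follows a genuinely different route from the paper's. You locate a bag of the given tree-decomposition that is itself a balanced separator by a direct centroid argument on the decomposition tree $T$: orient each tree edge toward the ``heavier'' side and take a sink node. That works (and your measure ``vertices in bags on that side but not in the other endpoint's bag'' coincides with $V_i\setminus V_j$ by condition (3) of tree-decompositions, which is the usual quantity), and the subsequent ``inflation'' of the balancing bag into its $\leq k$ covering disks is immediate and correct. The paper instead passes through acyclic hypergraph theory: it forms the hypergraph $H=(V,\{X_i\})$, notes that its 2-section graph $G^*$ is a chordal supergraph of $G$, invokes Lemma~\ref{lm:ch-sep} (Gilbert--Rose--Edenbrandt) to get a balanced clique-separator $C$ of $G^*$, and uses conformality to place $C$ inside a single bag before covering it by disks. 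Your argument is more elementary and self-contained (no hypergraphs, no conformality, no appeal to the chordal clique-separator theorem), whereas the paper's buys brevity by reducing to a classical cited result. One small remark: you describe your centroid step as ``the same ingredient used in the proof of Lemma~\ref{lm:gch-sep}'', but that lemma is only cited from \cite{APPROX2011} in this paper, and the present paper's own proof of Lemma~\ref{disks_sep} does not reuse it -- it reproves the balanced-bag fact via the chordal detour.
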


\begin{proof}
The proof of this lemma follows from {\em acyclic hypergraph}
theory. First we review some necessary definitions and an important
result characterizing acyclic hypergraphs. Recall that a {\em
hypergraph} $H$ is a pair $H = (V,\cal{E})$ where $V$ is a set of
vertices and $\cal{E}$ is a set of non-empty subsets of $V$ called
{\em hyperedges}.  For these and other hypergraph notions see
\cite{Be}.

Let $H=(V, \cal{E})$ be a {\em hypergraph} with the vertex set $V$
and the {\em hyperedge} set $\cal{E}$. 
For every vertex $v\in V$, let $\cal{E}$$(v)=\{e\in$ $\cal{E}$ $|
v\in e\}$.   The {\em 2--section graph} $2SEC(H)$ of a hypergraph
$H$ has $V$ as its vertex set and two distinct vertices are adjacent
in $2SEC(H)$ if and only if they are contained in a common hyperedge
of $H$. A hypergraph $H$
is called {\em conformal} if every 
clique 
of $2SEC(H)$ is contained in a hyperedge $e\in \cE$, and a
hypergraph $H$ is called {\em acyclic} if there is a tree $T$ with
node set $\cal E$ such that for all vertices $v \in V$,
$\cal{E}$$(v)$ induces a
subtree $T_v$ of $T$. %
It is a well-known fact (see, e.g., \cite{AAM,BFMY,Be}) that a
hypergraph $H$ is acyclic if and only if $H$ is conformal and
$2SEC(H)$ of $H$ is a chordal graph.

Let now $G=(V,E)$ be a graph with $\tb_k(G)=\rho$ and
${T}(G)=(\{X_i|i\in I\},{T}=(I,F))$ be its tree-decomposition of
$k$-breadth $\rho$. Evidently, the third condition of
tree-decompositions can be restated as follows: the hypergraph
$H=(V(G),\{X_i|i\in I\})$ is an acyclic hypergraph. Since each edge
of $G$ is contained in at least one bag of $T(G)$, the 2--section
graph $G^{*}:=2SEC(H)$ of $H$ is a chordal {\em supergraph} of the
graph $G$ (each edge of $G$ is an edge of $G^*$, but $G^*$ may have
some extra edges between non-adjacent vertices of $G$ contained in a
common bag of $T(G)$). By Lemma \ref{lm:ch-sep},  the chordal graph
$G^{*}$ contains a balanced clique-separator $C\subseteq V(G)$. By
conformality of $H$, $C$ must be contained in a bag of $T(G)$. From
the definition of $k$-breadth, there must exist $k$ vertices
$v_1,v_2,\dots,v_k$  such that $C \subseteq \mathbf{D_\rho^k}$,
where $\mathbf{D_\rho^{k}}=D_\rho(v_1,G)\cup \dots \cup
D_\rho(v_k,G)$. As the removal of  the vertices of $C$ from $G^*$
leaves no connected component in $G^*[V\setminus C]$ with more than
$|V|/2$ vertices and since $G^*$ is a supergraph of $G$, clearly,
the removal of  the vertices of $\mathbf{D_\rho^{k}}$ from $G$
leaves no connected component in $G[V \setminus
\mathbf{D_\rho^{k}}]$  with more than $|V|/2$ vertices. \qed
\end{proof}

Again, like in Section \ref{sec:k=1}, we do not assume that a
tree-decomposition $T(G)$ of $k$-breadth $\rho$ is given for $G$ as
part of the input. Our method does not need to know $T(G)$. For a
given graph $G$, integers $k\geq 1$ and $\rho\geq 0$, even checking
whether $G$ has a tree-decomposition of $k$-breadth $\rho$ is a hard
problem (as $\tb_{k}(G)=0$ if and only if $\tw(G)\leq k-1$).

Let $G$ be an arbitrary connected $n$-vertex $m$-edge graph. In
\cite{APPROX2011}, an algorithm was described which, given $G$ and
its arbitrary fixed vertex $v$, finds in $O(m)$ time a balanced disk
separator $D_r(v,G)$ of $G$ centered at $v$ and with minimum $r$. We
can use this algorithm as a subroutine to find for $G$ in $O(n^{k}
m)$ time a balanced $\mathbf{D_r^k}$-separator with minimum $r$.
Given arbitrary $k$ vertices $v_1,v_2, \dots , v_k$ of $G$, we can
add a new dummy vertex $x$ to $G$ and make it adjacent to only
$v_1,v_2, \dots , v_k$ in $G$. Denote the resulting graph by $G+x$.
Then, a balanced disk separator $D_{r+1}(x,G+x)$ of $G+x$ with
minimum $r+1$ gives a balanced separator of $G$ of the form
$D_r(v_1,G)\cup \dots \cup D_r(v_k,G)$ (for particular disk centers
$v_1,v_2, \dots , v_k$) with minimum $r$. Iterating over all $k$
vertices of $G$, we can find a balanced $\mathbf{D_{r}^k}$-separator
of $G$ with the smallest (absolute minimum) radius $r$. Thus, we
have the following result.

\begin{proposition}\label{prop:find-sep-tbk}
Let $k$ be a positive integer (assumed to be small). For an
arbitrary graph $G$ with $n\geq k$ vertices and $m$ edges, a
balanced $\mathbf{D_r^k}$-separator with the smallest radius $r$ can
be found in $O(n^{k} m)$ time.
\end{proposition}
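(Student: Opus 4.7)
The plan is to reduce the problem to $O(n^k)$ applications of the $O(m)$-time subroutine of \cite{APPROX2011}, which, given a graph $G'$ and a distinguished vertex $v'$, returns the smallest radius $r'$ such that $D_{r'}(v',G')$ is a balanced disk-separator of $G'$. A trivial modification of that subroutine lets us specify an arbitrary size threshold (so we can ask for the smallest $r'$ such that every connected component of $G'\setminus D_{r'}(v',G')$ has at most $\lfloor n/2\rfloor$ vertices), without changing its $O(m)$ running time.

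For each unordered $k$-tuple $\{v_1,\ldots,v_k\}$ of distinct vertices of $G$, I would form the gadget $G+x$ by adjoining a new vertex $x$ adjacent exactly to $v_1,\ldots,v_k$, and invoke the subroutine on $(G+x,x)$ with threshold $\lfloor n/2\rfloor$. The structural observation driving the reduction is that any shortest $u$-to-$x$ path in $G+x$ can be chosen to use some edge $v_ix$ at its last step and to make no internal use of $x$, which gives
\[
d_{G+x}(u,x)=1+\min_{1\leq i\leq k}d_G(u,v_i) \quad\text{for every } u\in V(G).
\]
Consequently $D_{r+1}(x,G+x)=\{x\}\cup\bigcup_{i=1}^k D_r(v_i,G)=\{x\}\cup\mathbf{D_r^k}$, and removing this set from $G+x$ yields the induced subgraph $G\setminus \mathbf{D_r^k}$ as a graph on $V(G)\setminus\mathbf{D_r^k}$. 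Hence $\mathbf{D_r^k}$ is a balanced separator of $G$ if and only if $D_{r+1}(x,G+x)$ splits $G+x$ into components of size at most $\lfloor n/2\rfloor$, and a single threshold-aware call to the subroutine returns, in $O(m)$ time, the minimum $r$ that works for these particular centers.

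Finally, I would iterate over all $\binom{n}{k}=O(n^k)$ choices of distinct centers $\{v_1,\ldots,v_k\}$ and retain the tuple that achieves the smallest such $r$; this is the absolute minimum over all $k$-tuples, and the corresponding $\mathbf{D_r^k}$ is the desired minimum-radius balanced separator. The total running time is $O(n^k)\cdot O(m)=O(n^km)$, matching the claim. The only point requiring real care is the distance identity $d_{G+x}(u,x)=1+\min_i d_G(u,v_i)$, which ensures that disks in $G+x$ centered at the dummy vertex are \emph{exactly} unions of $G$-disks centered at the $v_i$; once that is in hand, the equivalence of balanced-separator status between $G$ and $G+x$ (modulo the trivial threshold adjustment) is immediate, and the rest of the argument is bookkeeping.
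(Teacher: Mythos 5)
Your proof is correct and follows essentially the same route as the paper: adjoin a dummy vertex $x$ adjacent to a chosen $k$-tuple of centers, observe that disks around $x$ in $G+x$ correspond to unions of $G$-disks around the centers, apply the $O(m)$ single-center subroutine of \cite{APPROX2011}, and minimize over all $O(n^k)$ choices. The only (welcome) addition is your explicit care about the size threshold ($\lfloor n/2\rfloor$ in $G$ rather than $\lfloor (n+1)/2\rfloor$ in $G+x$), a minor point the paper leaves implicit.
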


\subsection{Decomposition of a graph with bounded
$k$-tree-breadth}\label{subsec:decomp-k}

Let $G=(V,E)$ be an arbitrary connected graph with $n$ vertices and
$m$ edges and with a balanced $\mathbf{D_r^k}$-separator, where
$\mathbf{D_r^k}=\bigcup_{j=1}^k D_r(v_j,G)$.
Note that some disks in $\{D_r(v_1,G),\dots,D_r(v_k,G)\}$ may
overlap. In what follows, we will partition
$\mathbf{D_r^k}=\bigcup_{j=1}^k D_r(v_j,G)$ into $k$ sets
$D_1,\dots, D_k$ such that no two of them intersect and each $D_j$,
$j=1,\dots,k$, contains at least one vertex $v_j$ and induces a
connected subgraph of $G[D_r(v_j,G)]$. Create a graph $G+s$ by
adding a new dummy vertex $s$ to $G$ and making it adjacent to only
$v_1,v_2,\dots, v_k$ in $G$. Let $T$ be a BFS-tree of $G+s$ started
at  vertex $s$ and $T'$ be a subtree of $T$ formed by vertices
$\{v\in V(G+s)| d_T(s,v)\leq r+1\}$ and rooted at $s$. Let also
$T(v_1),\dots,T(v_k)$ be the subtrees of $T'\setminus\{s\}$ rooted
at $v_1,\dots,v_k$, respectively. Clearly, each $T(v_j)$, $j=1,
\dots,k$, is a subtree (not necessarily spanning) of $G[D_r(v_j,G)]$
and $\mathbf{D_r^k}=\bigcup_{j=1}^k V(T(v_j))$. Set now
$D_j:=V(T(v_j))$, $j=1, \dots,k$.

Let $G_1,G_2,\dots ,G_q $ be the connected components of
$G[V\setminus \mathbf{D_r^k}]$. Denote by $S_i^j=\{v\in V(G_i)|
d_G(v,D_j)=1\}$, $i=1,\dots,q$, $j=1,\dots,k$, the neighborhood of
$D_j$ in $G_i$. Also, let $G_i^+$ be the graph obtained from
component $G_i$ by adding one {\em meta vertex} $c_i^j$  for each
disk $D_r(v_j,G)$ (a {\em representative of} $D_r(v_j,G)$), $j=1,
\dots k$,  and making it adjacent to all vertices of $S_i^j$, i.e.,
for a vertex $x\in V(G_i),\; c_i^jx \in E(G_i^+)$ if and only if
there is a vertex $x_D \in D_j\subseteq D_r(v_j,G)$ with $xx_D \in
E(G)$. If $S_i^j$ is empty for some $j$, then vertex $c_i^j$ is not
added to $G_i^+$. Also, add an edge between any two representatives
$c_i^j$ and $c_i^l$ if vertices $v_j$ and $v_l$ are connected in
$G[V\setminus V(G_i)]$.
See Fig.~\ref{fig:Drk-sep} for an illustration.

Given an $n$-vertex $m$-edge graph $G$ and its balanced
$\mathbf{D_r^k}$-separator, the graphs $G_1^+,\dots, G_q^+$ can be
constructed in total time $O(k q m)$. Furthermore, the total number
of edges in graphs $G_1^+,\dots, G_q^+$ does not exceed $ m+q k^2$,
and the total number of vertices in those graphs does not exceed the
number of vertices in $G[V\setminus \mathbf{D_r^k} ]$ plus $qk$.

\begin{figure}[htb]
\begin{center} \vspace*{-5mm}
 \begin{minipage}[b]{15cm}
    \begin{center} \hspace*{10mm}
\includegraphics[height=5.cm]{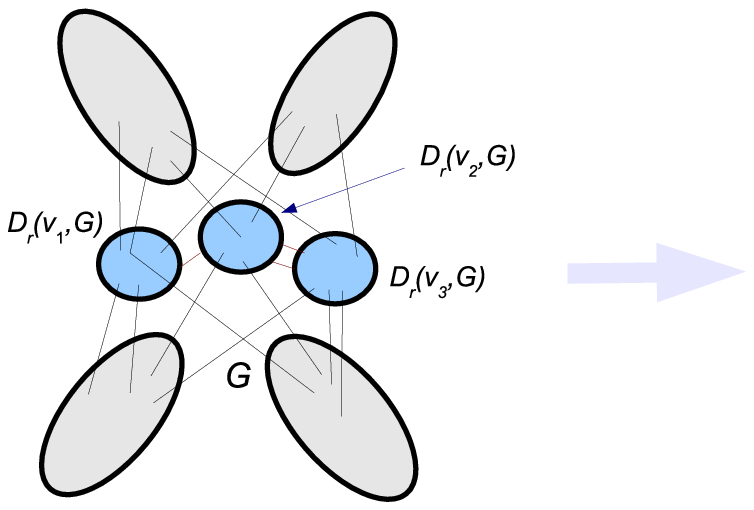}\hspace*{-12mm}
\includegraphics[height=5.cm]{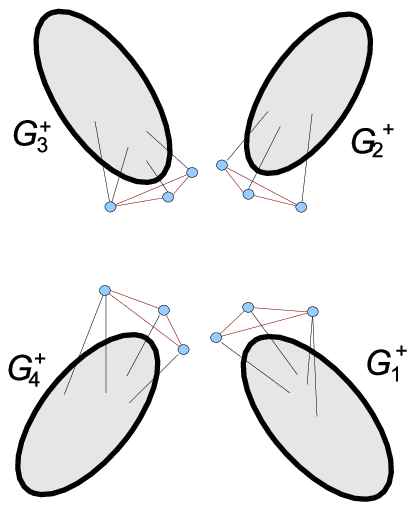}
    \end{center} \vspace*{-8mm}
    \caption{\label{fig:Drk-sep} A graph $G$ with a balanced $\mathbf{D_r^3}$-separator  and the corresponding graphs $G_1^+,\dots, G_4^+$ obtained from $G$.
  Each $G_i^+$ has three  meta vertices representing the three disks. } %
 \end{minipage}
\end{center}
\vspace*{-6mm}
\end{figure}

Note that $G_i^+$ is a minor of $G$ and can be obtained from $G$ by
a sequence of edge contractions in the following way. First contract
all edges (in any order) that are incident to $V(G_{i'})$, for all
$i'=1,\dots,q$, $i'\neq i$. Then, for each $j=1, \dots,k$, contract
(all edges of) connected subgraph $G[D_j]$ of $G$ to get meta vertex
$c_i^j$ representing the disk $D_r(v_j,G)$ in $G_i^+$.

Let again $G_{/e}$ be the graph obtained from $G$ by contracting
edge $e$. We have the following analog of Lemma
\ref{lem:G+hereditary}.

\begin{lemma}\label{lem:hereditary-tbk}
For any graph $G$ and its edge $e$, $\tb_k(G)\leq \rho$ implies
$\tb_k(G_{/e}) \leq \rho$. Consequently, for any graph $G$ with
$\tb_k(G)\leq \rho$, $\tb_k(G_i^+) \leq \rho$ holds for $i=1,
\dots,q$.
\end{lemma}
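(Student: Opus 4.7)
The plan is first to handle a single edge contraction and then to iterate for $G_i^+$.

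For the single contraction, let $e=ab$ be the contracted edge and let $w$ denote the resulting merged vertex in $G_{/e}$; write $\pi:V(G)\to V(G_{/e})$ for the natural quotient map that sends both $a$ and $b$ to $w$ and is the identity elsewhere. Take an optimal tree-decomposition $T(G)=(\{X_i\mid i\in I\}, T=(I,F))$ of $G$ realizing $\tb_k(G)\leq\rho$, with associated disk centers $C_i=\{v^i_1,\dots,v^i_k\}$ satisfying $X_i\subseteq\bigcup_{j}D_\rho(v^i_j,G)$. I will form a tree-decomposition of $G_{/e}$ on the same underlying tree $T$ by setting $X'_i:=\pi(X_i)$. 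Verifying the three tree-decomposition axioms is routine: (1) and (2) are immediate from the same axioms for $T(G)$, and for (3) the new subtree $T_w$ equals $T_a\cup T_b$, which is connected because $T_a$ and $T_b$ share the bag guaranteed by axiom (2) for the edge $ab$.

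For the $k$-breadth bound, set $C'_i:=\pi(C_i)$, so that $|C'_i|\leq k$. The central ingredient is the standard non-expansion of distances under edge contraction: for all $x,y\in V(G)$, $d_{G_{/e}}(\pi(x),\pi(y))\leq d_G(x,y)$, since any $xy$-path in $G$ projects via $\pi$ to a $\pi(x)$-to-$\pi(y)$ walk in $G_{/e}$ of no larger length. Now if $x'\in X'_i$, then $x'=\pi(x)$ for some $x\in X_i$, and by the choice of $C_i$ there exists $j$ with $d_G(v^i_j,x)\leq\rho$; applying the non-expansion fact yields $d_{G_{/e}}(\pi(v^i_j),x')\leq\rho$ with $\pi(v^i_j)\in C'_i$. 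Hence $X'_i\subseteq\bigcup_{z\in C'_i}D_\rho(z,G_{/e})$, and therefore $\tb_k(G_{/e})\leq\rho$.

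The consequence about $G_i^+$ will then follow by iterating this single-contraction bound along the explicit sequence of edge contractions described just before the lemma: contract all edges incident to $V(G_{i'})$ for each $i'\neq i$, and then contract each connected subgraph $G[D_j]$ down to the meta vertex $c_i^j$. Removing duplicates from multi-edges produced along the way alters neither distances nor the vertex partition, so the $k$-breadth bound propagates: $\tb_k(G^+_i)\leq\tb_k(G)\leq\rho$. I expect the single-contraction step to be the only place requiring genuine care, and the main point there is simply the distance non-expansion fact, which makes the case analysis short and uniform.
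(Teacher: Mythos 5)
Your proof is correct and follows essentially the same route as the paper: keep the underlying tree $T$ fixed, replace $x$ and $y$ by their merged image in each bag (your $\pi(X_i)$), observe that the new $T_w$ is the union of $T_x$ and $T_y$ which share the bag witnessing the edge $xy$, and use the fact that contraction cannot increase distances to transfer the disk covers via $\pi(C_i)$; the extension to $G_i^+$ is the same iteration over the contraction sequence. The only difference is presentational---you name the quotient map $\pi$ and state the distance non-expansion as an explicit lemma, whereas the paper phrases the same facts in prose---but the mathematical content is identical.
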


\begin{proof}
Let ${T}(G)=(\{X_i|i\in I\},{T}=(I,F))$ be a tree-decomposition of
$G$ with $k$-breadth $\rho$. Let $e=xy$ be an arbitrary edge of $G$.
We can obtain a tree-decomposition ${T}(G_{/e})$ of the graph
$G_{/e}$ by replacing in each bag $X_i$, $i\in I$, vertices $x$ and
$y$ with a new vertex $x'$ representing them (if some bag $A$
contained both $x$ and $y$, only one copy of $x'$ is kept).
Evidently, the first and the second conditions of
tree-decompositions are fulfilled for $T(G_{/e})$. Furthermore, the
topology  (the tree ${T}=(I,F)$) of the tree-decomposition did not
change. Still, for any vertex $v\neq x'$ of $G_{/e}$, the bags of
$T(G_{/e})$ containing $v$ form a subtree in $T(G_{/e})$. Since
vertices $x$ and $y$ were adjacent in $G$, there was a bag $A$ of
$T(G)$ containing both those vertices. Hence, a subtree of
$T(G_{/e})$ formed by bags of $T(G_{/e})$ containing vertex $x'$ is
nothing else but the union of two subtrees (one for $x$ and one for
$y$) of $T(G)$ sharing at least one common bag $A$. Also,
contracting an edge can only reduce the distances in a graph. Hence,
still, for each bag $B$ of $T(G_{/e})$, there must exist
corresponding vertices $v_1,\dots,v_k$ in $G_{/e}$ with $B\subseteq
D_\rho(v_1,G_{/e})\cup \dots \cup D_\rho(v_k,G_{/e})$. Thus,
$\tb_k(G_{/e})\leq \rho$. Since $G_i^+$ can be obtained from $G$ by
a sequence of edge contractions, we also have $\tb_k(G_i^+) \leq
\rho$. \qed
\end{proof}


\subsection{Construction of a hierarchical tree}\label{subsec:H(G)}
Here we show how a hierarchical tree for a graph with bounded
$k$-tree-breadth is built.

Let $G=(V,E)$ be a connected $n$-vertex, $m$-edges graph  with
$\tb_k(G) \leq \rho$ and $n\geq k$. Lemma~\ref{disks_sep} guaranties
that $G$ has a balanced $\mathbf{D_r^k}$-separator with $r\leq
\rho$. Proposition \ref{prop:find-sep-tbk} says that such a balanced
$\mathbf{D_r^k}$-separator of $G$ can be found in $O(n^k m)$ time by
an algorithm that works directly on the graph $G$ and does not
require construction of a tree-decomposition of $G$ with $k$-breadth
$\leq \rho$. Using these and Lemma \ref{lem:hereditary-tbk}, we can
build a rooted \emph{hierarchical-tree} $\mathcal{H}(G)$ for $G$,
which is constructed as follows. If $G$ is a connected graph with at
most $2k+1$ vertices, then $\mathcal{H}(G)$ is one node tree with
root node $(V(G),G)$. It is known \cite{Domin98a} that any graph
with $p\geq 2$ vertices has a dominating set of size $\lfloor p/2
\rfloor$, i.e., all vertices of it can be covered by $\lfloor p/2
\rfloor$ disks of radius one. Hence, in our case, $G$ with at most
$2k+1$ vertices can be covered by $k$ disks of radius one each,
i.e., there are $k$ vertices $v_1,\dots,v_k$ such that
$V(G)=D_r(v_1,G)\cup \dots \cup D_r(v_k,G)$ for $r=1\leq \rho$. If
$G$ is a connected graph with more than $2k+1$ vertices,
find a balanced $\mathbf{D_r^k}$-separator of minimum radius $r$
in $O(n^k m)$ time and construct the corresponding graphs
$G_1^+,\dots,G_q^+$. For each graph $G_i^+,\; i\in \{1,\dots ,q\}$,
(by Lemma \ref{lem:hereditary-tbk}, $\tb_k(G_i^+) \leq \rho$)
construct a hierarchical tree $\mathcal{H}(G_i^+)$ recursively and
build $\mathcal{H}(G)$ by taking the pair $(\mathbf{D_r^k},G)$ to be
the root and connecting the root of each tree $\mathcal{H}(G_i^+)$
as a child of $(\mathbf{D_r^k},G)$.

The depth of this tree $\cH(G)$ is the smallest integer $p$ such
that
$$\frac{n}{2^p}+k(\frac{1}{2^{p-1}}+\dots +\frac{1}{2}+1)\leq 2k+1,$$
that is, the depth 
is at most $\log_2 n$. It is also not hard to see that, given a
graph $G$ with $n$ vertices and $m$ edges, a hierarchical tree
$\cH(G)$
can be constructed in $O((kn)^{k+2}\log^{k+1}n)$ total time. There
are at most $O(\log n)$ levels in  $\cH(G)$, and one needs to do at
most $O((n+k n \log n)^k(m+k^2 n \log n))\leq
O((kn)^{k+2}\log^{k}n)$ operations per level since the total number
of edges in the graphs of each level is at most $O(m+k^2 n \log n)$
and the total number of vertices in those graphs can not exceed
$O(n+k n \log n)$.

For nodes of $\cH(G)$, we use the same notations as in Section
\ref{sec:k=1}. For a node $Y$ of $\mathcal{H}(G)$, since
it is associated with a pair $(\mathbf{D_{r'}^k},G')$, where $r'\leq
\rho$, $G'$ is a minor of $G$ and $\mathbf{D_{r'}^k}=D_{r'}(v'_1,G')
\cup \dots \cup D_{r'}(v'_1,G')$,
it is convenient to denote $G'$ by $G(\downarrow Y)$,
$\{v'_1,\dots,v'_k\}$ by $c(Y)=\{c_{1}(Y),\dots,c_{k}(Y)\}$, $r'$ by
$r(Y)$, and $\mathbf{D_{r'}^k}$ by $Y$ itself. Thus,
$(\mathbf{D_{r'}^k},G')=(\bigcup_{l=1}^k
D_{r(Y)}(c_{l}(Y),G(\downarrow Y)),G(\downarrow Y))=(Y, G(\downarrow
Y))$ in these notations, and we identify node $Y$ of
$\mathcal{H}(G)$ with the set $\bigcup_{l=1}^k
D_{r(Y)}(c_{l}(Y),G(\downarrow Y))$ and associate with this node
also the graph $G(\downarrow Y)$.
If now $(Y^0,Y^1,\dots, Y^h)$ is the path of $\cH(G)$ connecting the
root $Y^0$ of $\cH(G)$ with a node $Y^h$, then the vertex set of the
graph $G(\downarrow Y^h)$ consists of some (original) vertices of
$G$ plus at most $kh$ meta vertices representing the disks
$D_{r(Y)}(c_{1}(Y^i),G(\downarrow Y^i)),\dots,
D_{r(Y)}(c_{k}(Y^i),G(\downarrow Y^i))$ of $Y^i$, $i=0,1,\dots,h-1$.
Note also that each (original) vertex of $G$ belongs to exactly one
node of $\mathcal{H}(G)$.

\subsection{Construction of collective additive tree spanners}
Let $G=(V,E)$ be a connected  $n$-vertex, $m$-edge graph and assume
that $\tb_k(G)\leq\rho$ and $n\geq k$. Let $\cH(G)$ be a
hierarchical tree of $G$. Consider an arbitrary node  $Y^h$ of
$\cH(G)$, and let $(Y^0,Y^1,\dots, Y^h)$ be the path of $\cH(G)$
connecting the root $Y^0$ of $\cH(G)$ with $Y^h$. Let
$\widehat{G}(\dY^j)$ be the graph obtained from $G(\dY^j)$ by
removing all its meta vertices (note that $\widehat{G}(\dY^j)$ may
be disconnected and that all meta vertices of $G(\dY^j)$ come from
previous levels of $\cH(G)$). We have the following analog of Lemma
\ref{lem:closest-center}.

\begin{lemma}\label{lem:closest-center-anyk}
For any vertex $z$ from $Y^h\cap V(G)$  there exists an index $i\in
\{0,1,\dots,h\}$ such that the vertices $z$ and $c_{l}(Y^i)$, for
some $l \in \{1,\dots,k\}$ can be connected in the graph
$\widehat{G}(\downarrow Y^i)$ by a path of length at most
$\rho(h+1)$. In particular, $d_G(z,c_{l}(Y^i))\leq \rho(h+1)$ holds.
\end{lemma}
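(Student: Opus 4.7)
The plan is to mirror the proof of Lemma \ref{lem:closest-center} almost verbatim, additionally tracking the index $l \in \{1,\dots,k\}$ that selects which of the $k$ disks of the eventual separator $Y^{i^*}$ ends up connected to $z$. First, since $z$ is an original vertex belonging to $Y^h$, some $l \in \{1,\dots,k\}$ satisfies $z \in D_{r(Y^h)}(c_l(Y^h), G(\downarrow Y^h))$, so a shortest $c_l(Y^h)$-to-$z$ path $P$ in $G_h := G(\downarrow Y^h)$ has at most $r(Y^h) \leq \rho$ edges. If $P$ contains no meta vertex, then $P$ already lies in $\widehat{G}(\downarrow Y^h)$ and we finish with $i = h$.

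Otherwise, I would ``unfold'' the meta vertex $\mu'$ closest to $z$ on $P = (c_l(Y^h),\dots,a',\mu',b',\dots,z)$. By the construction of Section~\ref{subsec:decomp-k}, $\mu'$ represents one of the $k$ disks of some earlier separator $Y^{i'}$ with $i' < h$, say the disk centered at $c_{l'}(Y^{i'})$; its neighbor $b'$ on $P$ is adjacent, in $G(\downarrow Y^{i'})$, to a vertex of $D_{l'} \subseteq D_{r(Y^{i'})}(c_{l'}(Y^{i'}), G(\downarrow Y^{i'}))$, giving a $c_{l'}(Y^{i'})$-to-$b'$ path in $G(\downarrow Y^{i'})$ of length at most $\rho + 1$. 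Since all vertices strictly between $\mu'$ and $z$ on $P$ are original, and edges among original vertices survive the contractions that produced $G_h$ from $G(\downarrow Y^{i'})$, the $b'$-to-$z$ tail of length at most $\rho - 1$ transfers into $G(\downarrow Y^{i'})$ as well. Concatenating these two pieces yields a $c_{l'}(Y^{i'})$-to-$z$ path in $G(\downarrow Y^{i'})$ of length at most $2\rho$.

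I would then iterate, always unfolding the meta vertex closest to $z$ on the current path. Each unfolding strictly ascends in $\cH(G)$ and adds at most $\rho$ to the length. Since at most $h$ ancestor levels lie above $Y^h$, after at most $h$ such steps the resulting path is meta-vertex-free and lies in $\widehat{G}(\downarrow Y^{i^*})$ for some $i^* \in \{0,\dots,h\}$ and some $l^* \in \{1,\dots,k\}$, with total length at most $\rho(h+1)$; the bound $d_G(z, c_{l^*}(Y^{i^*})) \leq \rho(h+1)$ is then immediate since the path uses only original vertices of $G$. The main obstacle is the extra bookkeeping forced by the $k$-tree-breadth setting: each separator contributes up to $k$ meta vertices, and some pairs of meta vertices from the same level are joined by the ``inter-meta'' edges added in Section~\ref{subsec:decomp-k}. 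The delicate point is to verify that unfolding only the meta vertex nearest to $z$ still gives the clean $+\rho$ overhead per step, and that the real-vertex tail of the path really lives in the earlier graph $G(\downarrow Y^{i'})$. Once this is justified by the observation that contractions preserve adjacency among unaffected vertices, the argument from the $k = 1$ case carries over unchanged.
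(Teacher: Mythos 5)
Your proof follows essentially the same route as the paper's: choose the index $l$ with $z\in D_l\subseteq D_{r(Y^h)}(c_l(Y^h),G(\downarrow Y^h))$, take the short $c_l(Y^h)$-to-$z$ path, repeatedly unfold the meta vertex closest to $z$, and observe that each unfolding strictly ascends in $\cH(G)$ while adding at most $\rho$ to the length, yielding $\rho(h+1)$ after at most $h$ steps. The only substantive addition is your explicit check that the inter-meta edges and the extra bookkeeping of the $k$ disks per separator do not break the per-step $+\rho$ accounting; the paper treats this implicitly, but your reasoning (any vertex of $D_{l'}$ in $G(\downarrow Y^{i'})$ is either original or a meta vertex from a level strictly before $i'$) is exactly the invariant the paper relies upon, so the arguments coincide.
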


\begin{proof}
The proof is similar to the proof of Lemma \ref{lem:closest-center}
of Section \ref{sec:k=1}. Set $G_h:=G(\downarrow Y^h)$ and
$c:=c_{l}(Y^h)$, where $z \in D_l\subseteq
D_{r(Y^h)}(c_{l}(Y^h),G_h)$ (for the definition of set $D_l$ see the
first paragraph of Subsection \ref{subsec:decomp-k}).   Let
$SP^{G_h}_{c,z}$ be a shortest path of $G_h$ connecting vertices $c$
and $z$. We know that this path has at most $r(Y^h)\leq \rho$ edges.
If $SP^{G_h}_{c,z}$ does not contain any meta vertices, then this
path is a path of $\widehat{G}(\downarrow Y^h)$ and of $G$ and
therefore $d_G(c,z)\leq \rho$ holds.

Assume now that $SP^{G_h}_{c,z}$ does contain meta vertices and let
$\mu'$ be the closest to $z$ meta vertex in $SP_{c,z}^{G_h}$
(consult with Fig.~\ref{fig:cc}). Let
$SP^{G_h}_{c,z}=(c,\dots,a',\mu',b',\dots,z)$. By construction of
$\cH(G)$, meta vertex $\mu'$ was created at some earlier recursive
step to represent one disk of $Y^{i'}$ of graph
$G_{i'}:=G(\downarrow Y^{i'})$ for some $i'\in \{0,\dots,h-1\}$.
Hence, there is a path $P^{G_{i'}}_{c',z}=(c',\dots,b',\dots,z)$ of
length at most $2\rho$ in $G_{i'}$ with $c':=c_{l'}(Y^{i'})$ for
some $l' \in \{1,\dots,k\}$. Again, if $P^{G_{i'}}_{c',z}$ does not
contain any meta vertices, then this path is a path of
$\widehat{G}(\downarrow Y^{i'})$ and of $G$ and therefore
$d_G(c',z)\leq 2\rho$ holds. If $P^{G_{i'}}_{c',z}$ does contain
meta vertices then again, ``unfolding" a meta vertex $\mu''$ of
$P^{G_{i'}}_{c',z}$ closest to $z$, we obtain a path
$P^{G_{i''}}_{c'',z}$ of length at most $3\rho$ in
$G_{i''}:=G(\downarrow Y^{i''})$ with $c'':=c_{l''}(Y^{i''})$ for
some $i''\in \{0,\dots,i'-1\}$ and $l'' \in \{1,\dots,k\}$.

We continue ``unfolding" this way meta vertices closest to $z$.
Eventually, after at most $h$ steps, we will arrive at the situation
when,  for some index $i^*\in \{0,1,\dots,h\}$, a path of length at
most $\rho(h+1)$ will connect vertices $z$ and $c_{l^*}(Y^{i^*})$,
for some $l^* \in \{1,\dots ,k\}$, in the graph
$\widehat{G}(\downarrow Y^{i^*})$.\qed
\commentout{ The proof is similar to the proof of Lemma
\ref{lem:closest-center} of Section \ref{$sec:k=1$}. Each unfolding
operation of the path from the closest meta vertex to $z$ might
introduce only meta vertices of lower levels nodes of $\cH(G)$ and
since we pick the closets one to $z$ for next step of unfolding.
Then, we actually unfold a meta vertex from a lower level each time
and no two meta vertices from the same level (or node of $\cH(G)$)
would be picked up for unfolding. Thus we could have at most $h$
unfolding steps.
}
\end{proof}

Let $B_1^i,\dots, B_{p_i}^i$ be the nodes at depth $i$ of the tree
$\cH(G)$. Assume $B_j^i=\bigcup_{l=1}^{k} D_r(c_j^i(l),G(\downarrow
B_j^i))$, where $r:=r(B_j^i)$. Denote $k$ central vertices of
$B_j^i$ by $C_j^i=\{c_j^i(1),c_j^i(2),\dots,$ $c_j^i(k)\}$. For each
node $B^i_j$, consider its (central) vertex $c_j^i(l)$ $(l
\in\{1,\dots,k\})$. If $c_j^i(l)$ is an original vertex of $G$ (not
a meta vertex created during the construction of $\cH(G)$), then
define a  connected graph $G_j^i(l)$
obtained from $G(\downarrow B_j^i)$ by removing all its meta
vertices.  
If removal of those meta vertices produced few connected components,
choose as $G_j^i(l)$ that component which contains the vertex
$c_j^i(l)$. Denote by $T_j^i(l)$ a BFS--tree of graph $G_j^i(l)$
rooted at vertex $c_j^i(l)$ of
$B_j^i$. 

The trees $T_j^i(l)$ ($i=0,1,\dots,depth(\cH(G))$,
$j=1,2,\dots,p_i$, $l=1,2,\dots,k)$, obtained this way, are called
{\em local subtrees} of $G$.  Clearly, the construction of these
local subtrees can be incorporated into the procedure of
constructing hierarchical tree $\cH(G)$ of $G$ and will not increase
the overall $O((kn)^{k+2}\log^{k+1}n)$ run-time (see Subsection
\ref{subsec:H(G)}).

Since Lemma  \ref{lm:decomp} and Lemma \ref{lm:isom} hold for $G$,
similarly to the proof of Lemma \ref{lm:distT}, one can prove its
analog for graphs with bounded $k$-tree-breadth.

\begin{lemma} \label{lm:distT2}
For any two vertices $x,y \in V(G)$, there exists a local subtree
$T$ such that $d_{T}(x,y)\leq d_{G}(x,y)+2\rho (1+\log_2 n)$.
\end{lemma}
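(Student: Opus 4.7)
The plan is to mimic the proof of Lemma \ref{lm:distT}, replacing the single disk center of a tree-breadth separator with one of the $k$ centers of a $\mathbf{D_r^k}$-separator, and invoking Lemma \ref{lem:closest-center-anyk} in place of Lemma \ref{lem:closest-center}. The depth bound on $\cH(G)$ changes from at most $\log_2 n-1$ to at most $\log_2 n$, which is exactly what produces the extra ``$1+$'' in the surplus.

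First I would set up the ancestor path. Let $SP^G_{x,y}$ be a shortest $x,y$-path in $G$, let $S(x), S(y)$ be the nodes of $\cH(G)$ containing $x$ and $y$, and let $(Y^0,Y^1,\dots,Y^h)$ be the path of $\cH(G)$ from the root $Y^0$ to $NCA_{\cH(G)}(S(x),S(y))=Y^h$. By Lemma \ref{lm:decomp}, $SP^G_{x,y}$ meets $Y^0\cup\dots\cup Y^h$; let $i$ be the smallest index for which $SP^G_{x,y}\cap Y^i\neq\emptyset$, and pick $z\in SP^G_{x,y}\cap Y^i\cap V(G)$. Lemma \ref{lm:isom} (which the paper asserts continues to hold verbatim in the $k$-tree-breadth setting) then gives $d_G(x,y)=d_{\widehat{G}(\downarrow Y^j)}(x,y)$ for every $j=0,1,\dots,i$, so in particular the shortest path, as well as $x,y,z$, live inside $\widehat{G}(\downarrow Y^j)$ for all such $j$.

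Next I would apply Lemma \ref{lem:closest-center-anyk} to the original vertex $z\in Y^i\cap V(G)$: this yields an index $j\in\{0,1,\dots,i\}$ and a center index $l\in\{1,\dots,k\}$ such that $c_l(Y^j)$ is an original vertex of $G$ and $z$ and $c_l(Y^j)$ are joined in $G'':=\widehat{G}(\downarrow Y^j)$ by a path of length at most $\rho(i+1)$. Because $c_l(Y^j)$ is an original vertex of $G$, the local subtree $T$ rooted at $c_l(Y^j)$ (a BFS-tree of the connected component of $G''$ that contains $c_l(Y^j)$) has been constructed by the algorithm. Since $z\in G''$ is connected to $c_l(Y^j)$ in $G''$ and $x,y$ lie in $G''$ on a shortest path through $z$, all of $x,y,z,c_l(Y^j)$ sit in the same connected component of $G''$, so $T$ spans them and preserves distances to its root: $d_T(u,c_l(Y^j))=d_{G''}(u,c_l(Y^j))$ for $u\in\{x,y\}$. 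Applying the triangle inequality through the root gives
\[
d_T(x,y)\leq d_T(x,c_l(Y^j))+d_T(y,c_l(Y^j))\leq d_{G''}(x,y)+2\,d_{G''}(z,c_l(Y^j))\leq d_G(x,y)+2\rho(i+1).
\]
Since $i\leq h\leq\mathrm{depth}(\cH(G))\leq\log_2 n$, we get $2\rho(i+1)\leq 2\rho(1+\log_2 n)$, which is the desired bound.

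I expect the main (though modest) obstacle to be bookkeeping: ensuring that the center returned by Lemma \ref{lem:closest-center-anyk} is truly an original vertex of $G$ (so that the local subtree is well-defined), and that the BFS component chosen for it contains both $x$ and $y$. The former is built into Lemma \ref{lem:closest-center-anyk} because the ``unfolding'' process continues until the witnessing path lies inside $\widehat{G}(\downarrow Y^j)$, which rules out meta vertices at both endpoints; the latter follows from Lemma \ref{lm:isom}, since $x,y,z$ all lie in the same connected component of $\widehat{G}(\downarrow Y^j)$ as the original vertex $c_l(Y^j)$. Unlike the tree-breadth proof of Lemma \ref{lm:distT}, no separate leaf case is needed: at a leaf of $\cH(G)$, the covering radius is $1\leq\rho$, so the same triangle-inequality argument goes through uniformly (when $\rho=0$ we are in the tree-width case and the BFS-trees are exact).
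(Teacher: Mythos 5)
Your proposal is correct and follows exactly the route the paper intends: the paper gives only the one-line outline ``Since Lemma~\ref{lm:decomp} and Lemma~\ref{lm:isom} hold for $G$, similarly to the proof of Lemma~\ref{lm:distT}, one can prove its analog,'' and you have filled in the details faithfully, substituting Lemma~\ref{lem:closest-center-anyk} for Lemma~\ref{lem:closest-center} and the new depth bound $\log_2 n$ for $\log_2 n - 1$. Your observation that the $k\geq 2$ construction has no separate leaf case (the leaf node is itself covered by $k$ radius-$1$ disks, so the same BFS-root triangle-inequality argument applies) matches the paper's construction in Subsection~\ref{subsec:H(G)}.
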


This lemma implies the following two results. Let $G$ be a graph
with $n$ vertices and $m$ edges having $\tb_k(G)\leq \rho$. Let also
$\cH(G)$ be its hierarchical tree and $\mathcal{LT}(G)$ be the
family of all its local subtrees (defined above). Consider a graph
$H$ obtained by taking the union of all local subtrees of $G$ (by
putting all of them together).
Clearly, $H$ is a spanning subgraph of $G$, constructible in
polynomial time for every fixed $k$. %
We have  $d_H(x,y)\leq d_G(x,y)+2\rho(1+\log_2 n)$ for any two
vertices $x$ and $y$ of $G$. Also, since for every level $i$
($i=0,1,\dots,depth(\cH(G))$) of hierarchical tree $\cH(G)$, the
corresponding local subtrees $T^i_1(l),\dots, T^i_{p_i}(l)$ for each
fixed index $l \in\{1,\dots,k\}$ are pairwise vertex-disjoint, their
union has at most $n-1$ edges. Therefore, $H$ cannot have more than
$k(n-1)(1+\log_2 n)$ edges in total. Thus, we have the following
result.

\begin{theorem} \label{tm:spanner-anyk}
Every graph $G$ with $n$ vertices and $\tb_{k}(G)\leq \rho$ admits
an additive $(2\rho(1+\log_2 n))$--spanner with at most $O(k n\log
n)$ edges constructible in polynomial time for every fixed $k$.
\end{theorem}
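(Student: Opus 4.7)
My plan is to mirror the proof of Theorem~\ref{tm:spanner} from Section~\ref{sec:k=1}, using the machinery built up in Section~\ref{sec:any-k} as a drop-in replacement. Concretely, I would first invoke the construction from Subsection~\ref{subsec:H(G)} to obtain the hierarchical tree $\cH(G)$ in $O((kn)^{k+2}\log^{k+1}n)$ time (polynomial for fixed $k$), and then, at each node $B^i_j$ of $\cH(G)$ and for each disk-center index $l\in\{1,\dots,k\}$, construct the local BFS-subtree $T^i_j(l)$ of the connected component of $\widehat{G}(\dY B^i_j)$ containing $c^i_j(l)$, exactly as in the paragraph preceding the theorem. Let $H:=\bigcup\{T^i_j(l)\mid i,j,l\}$.

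Next I would verify the two properties of $H$ separately. For the additive stretch, I would simply invoke Lemma~\ref{lm:distT2}: for any pair $x,y\in V(G)$ some local subtree $T$ of $\mathcal{LT}(G)$ satisfies $d_T(x,y)\leq d_G(x,y)+2\rho(1+\log_2 n)$, and since $T\subseteq H$ we obtain $d_H(x,y)\leq d_G(x,y)+2\rho(1+\log_2 n)$. (Note that $H$ is a spanning subgraph because every original vertex of $G$ belongs to exactly one node $B^i_j$ of $\cH(G)$, and is therefore covered by some tree $T^i_j(l)$ rooted at a center of that node.)

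For the edge bound, I would argue level-by-level. Fix a level $i$ and an index $l\in\{1,\dots,k\}$. The nodes $B^i_1,\dots,B^i_{p_i}$ at level $i$ of $\cH(G)$ arise from disjoint connected components of the decomposition, so the graphs $\widehat{G}(\dY B^i_1),\dots,\widehat{G}(\dY B^i_{p_i})$ are pairwise vertex-disjoint on original vertices of $G$, hence the trees $T^i_1(l),\dots, T^i_{p_i}(l)$ are pairwise vertex-disjoint and their union contains at most $n-1$ edges. Since $\cH(G)$ has depth at most $\log_2 n$ (i.e., at most $1+\log_2 n$ levels) and $l$ ranges over $k$ values, the total number of edges in $H$ is at most $k(n-1)(1+\log_2 n)=O(kn\log n)$. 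Finally, polynomial-time constructibility for fixed $k$ follows because building $\cH(G)$ is polynomial by Subsection~\ref{subsec:H(G)}, and the local BFS-subtrees can be produced during that same recursion without affecting the overall running time.

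The only non-routine step is the vertex-disjointness claim used in the edge count; the main obstacle is convincing oneself that meta vertices do not create hidden overlap between different $T^i_j(l)$'s at the same level. This is handled by the fact that the local subtrees are built on $\widehat{G}(\dY B^i_j)$ after stripping all meta vertices, so every vertex appearing in any $T^i_j(l)$ is an original vertex of $G$ that belongs to the unique subtree of $\cH(G)$ rooted at $B^i_j$, giving the required pairwise disjointness across $j$ at a fixed level $i$ and fixed $l$.
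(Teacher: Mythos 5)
Your proposal matches the paper's proof essentially step for step: the same hierarchical tree $\cH(G)$ from Subsection~\ref{subsec:H(G)}, the same family $\{T^i_j(l)\}$ of local BFS subtrees rooted at original-vertex centers, the stretch bound invoked directly from Lemma~\ref{lm:distT2}, and the same level-by-level, center-index-by-center-index vertex-disjointness argument giving at most $k(n-1)(1+\log_2 n)$ edges. The only cosmetic slip is the notation $\widehat{G}(\dY B^i_j)$ where the paper writes the meta-vertex-stripped version of $G(\downarrow B^i_j)$, but the intended object is clearly the same.
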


For a node $B_j^i$ of $\cH(G)$, let
$\cT_j^i=\{T_j^i(1),\dots,T_j^i(k)\}$ be the set of its local
subtrees. Instead of taking the union of all local subtrees of $G$,
one can fix $i$ ($i\in\{0,1,\dots,depth(\cH(G))\}$) and fix $l \in
\{1,\dots,k\}$ and consider separately the union of only local
subtrees $T^i_1(l),\dots, T^i_{p_i}(l)$, corresponding to the
$l${th} subtrees of level $i$ of the hierarchical tree $\cH(G)$, and
then extend in linear $O(m)$ time that forest to a spanning tree
$T^i(l)$ of $G$ (using, for example, a variant of the Kruskal's
Spanning Tree algorithm for the unweighted graphs). We call this
tree $T^i(l)$ the $l$th {\em spanning tree of $G$ corresponding to
the level $i$ of the hierarchical tree $\cH(G)$}. In this way we can
obtain at most $k(1+\log_2 n)$ spanning trees for $G$, $k$ trees for
each level $i$ of $\cH(G)$. Denote the collection of those spanning
trees by $\cT(G)$. Thus, we deduce the following theorem.

\begin{theorem} \label{tm:system-anyk}
Every graph $G$ with $n$ vertices and $\tb_k(G)\leq \rho$ admits a
system $\cT(G)$ of at most $k(1+\log_2 n)$ collective additive tree
$(2\rho(1+\log_2 n))$-spanners constructible in polynomial time for
every fixed $k$.
\end{theorem}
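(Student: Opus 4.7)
The plan is to follow exactly the construction sketched in the paragraph preceding the theorem, and verify that each of the resulting spanning trees inherits the distance-approximation property from its underlying local subtrees, which are already controlled by Lemma~\ref{lm:distT2}.

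First, fix a level $i \in \{0,1,\dots,depth(\cH(G))\}$ and an index $l \in \{1,\dots,k\}$, and consider the local subtrees $T^i_1(l),\dots,T^i_{p_i}(l)$. I would first argue that these trees are pairwise vertex-disjoint in $G$: each $T^i_j(l)$ lives in $\widehat{G}(\downarrow B^i_j)$, whose vertex set consists of those original vertices of $G$ that belong to nodes in the subtree of $\cH(G)$ rooted at $B^i_j$. Since each original vertex of $G$ belongs to exactly one node of $\cH(G)$, these vertex sets are disjoint across $j=1,\dots,p_i$. Consequently $F^i(l) := T^i_1(l)\cup\dots\cup T^i_{p_i}(l)$ is a forest of $G$, and I would extend it to a spanning tree $T^i(l)$ of $G$ by running Kruskal's algorithm on the unweighted edges of $G$, seeded with all edges of $F^i(l)$ first.

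The central observation is then: for every $j$ and every pair of vertices $x,y\in V(T^i_j(l))$, one has $d_{T^i(l)}(x,y)=d_{T^i_j(l)}(x,y)$. This is because Kruskal's algorithm adds only edges that do not close a cycle with the edges currently in the tree; hence for vertices already connected through $F^i(l)$, no shortcut is introduced, and the unique $x$--$y$ path in $T^i(l)$ is exactly the $x$--$y$ path in $T^i_j(l)$. Combining this identity with Lemma~\ref{lm:distT2}, for any two vertices $x,y\in V(G)$ there exist $i,j,l$ with $x,y\in V(T^i_j(l))$ and
\[
d_{T^i(l)}(x,y)=d_{T^i_j(l)}(x,y)\leq d_G(x,y)+2\rho(1+\log_2 n).
\]

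It remains to count the trees and check the running time. The depth of $\cH(G)$ is at most $\log_2 n$, so there are at most $1+\log_2 n$ values of $i$, and $k$ values of $l$ for each, yielding $|\cT(G)|\leq k(1+\log_2 n)$. The hierarchical tree $\cH(G)$ together with all local subtrees is built in $O((kn)^{k+2}\log^{k+1} n)$ time, as established in Subsection~\ref{subsec:H(G)}, and each of the at most $k(1+\log_2 n)$ extensions to a spanning tree costs $O(m)$ using Kruskal; the total is polynomial for every fixed $k$. The only delicate point is the invariance $d_{T^i(l)}(x,y)=d_{T^i_j(l)}(x,y)$ on pairs lying in a common local subtree, which I expect to be the main (and really the only) nontrivial step, but it follows cleanly from the cycle-avoiding nature of Kruskal's procedure and the vertex-disjointness argument above.
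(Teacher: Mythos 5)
Your proof is correct and follows essentially the same construction as the paper: fix a level $i$ and an index $l$, take the forest $F^i(l)=\bigcup_{j} T^i_j(l)$ (disjointness coming from the fact that each original vertex of $G$ lies in exactly one node of $\cH(G)$), and extend it to a spanning tree with Kruskal. The paper leaves the distance-preservation step implicit; you correctly observe that since each $T^i_j(l)$ is a subtree of the extension $T^i(l)$ and tree paths are unique, $d_{T^i(l)}(x,y)=d_{T^i_j(l)}(x,y)$ for $x,y\in V(T^i_j(l))$, so Lemma~\ref{lm:distT2} transfers to the spanning trees.
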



\section{Additive Spanners for Graphs Admitting (Multiplicative)
$t$--Spanners of Bounded Tree-width.}\label{sec:conseq}
In this section, we show that if a graph $G$ admits a
(multiplicative) $t$-spanner $H$ with $\tw(H)=k-1$ then its
$k$-tree-breadth is at most $\lceil{t/2}\rceil$. As a consequence,
we obtain that, for every fixed $k$, there is a polynomial time
algorithm that, given an $n$-vertex graph $G$ admitting a
(multiplicative) $t$-spanner with tree-width at most $k-1$,
constructs a system of at most $k (1+\log_2 n)$ collective additive
tree $O(t\log n)$-spanners of $G$.

\subsection{$k$-Tree-breadth of a graph admitting  a $t$-spanner of bounded tree-width}

Let $H$ be a graph with tree-width $k-1$, and let $T(H)=(\{X_i|i\in
I\},T=(I,F))$ be its tree-decomposition of width $k-1$. For an
integer $r\geq 0$, denote by $X^{(r)}_i$, $i\in I$, the set
$D_r(X_i,H):=\bigcup_{x\in X_i} D_r(x,H)$. Clearly, $X^{(0)}_i=X_i$
for every $i\in I$. The following important lemma holds.

\begin{lemma}\label{lem:decomp-exp} For every integer $r\geq 0$,
 $T^{(r)}(H):=(\{X^{(r)}_i|i\in I\},T=(I,F))$ is a tree-decompo\-sition of $H$ with $k$-breadth $\leq
 r$.
\end{lemma}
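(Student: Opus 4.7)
The plan is to verify the three tree-decomposition axioms for $T^{(r)}(H)$ directly and then read off the $k$-breadth bound from the definition of the new bags. Axioms (1) and (2) come essentially for free: since $X_i \subseteq X^{(r)}_i$ for every $i \in I$, the vertex-coverage $\bigcup_{i\in I} X^{(r)}_i = V(H)$ is inherited from $T(H)$, and for every edge $uv \in E(H)$ the bag $X_i$ witnessing $u,v \in X_i$ in $T(H)$ still works, since $X_i \subseteq X^{(r)}_i$.

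The only real work is axiom (3): for each $v \in V(H)$, the set $I_v := \{i \in I : v \in X^{(r)}_i\} = \{i \in I : d_H(v, X_i) \leq r\}$ should induce a subtree of $T$. I would pick arbitrary $i_1, i_2 \in I_v$ and an arbitrary $i_3$ on the $i_1$--$i_2$ path in $T$, and show $i_3 \in I_v$. By definition of $I_v$, there exist $x_1 \in X_{i_1}$ and $x_2 \in X_{i_2}$ with $d_H(v, x_1) \leq r$ and $d_H(v, x_2) \leq r$. Concatenating a shortest $x_1$-to-$v$ path with a shortest $v$-to-$x_2$ path produces a walk $W$ in $H$ from $x_1$ to $x_2$ whose every vertex lies in $D_r(v, H)$, since any sub-path of a shortest path starting at $v$ stays within distance $r$ from $v$. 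The crucial ingredient is the standard separator property of tree-decompositions applied to $T(H)$: because $i_3$ separates $i_1$ from $i_2$ in $T$, the bag $X_{i_3}$ separates every $X_{i_1}$-to-$X_{i_2}$ walk in $H$, so $W$ meets $X_{i_3}$ in some vertex $w$. Then $w \in X_{i_3}$ and $d_H(v, w) \leq r$ give $v \in D_r(w, H) \subseteq X^{(r)}_{i_3}$, i.e., $i_3 \in I_v$.

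Finally, for the $k$-breadth bound: each bag $X^{(r)}_i = \bigcup_{x \in X_i} D_r(x, H)$ is a union of $|X_i| \leq k$ disks of radius $r$ in $H$ (using that $T(H)$ has width $k-1$), which by the definition of $k$-breadth means the $k$-breadth of $T^{(r)}(H)$ is at most $r$. I expect axiom (3) to be the sole nontrivial step; the separator property it rests on is a direct consequence of the subtree axiom of the original decomposition, so the walk-concatenation trick above should suffice to close the argument.
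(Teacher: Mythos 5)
Your proof is correct and rests on the same key ingredient as the paper's, namely the standard separation property of tree-decompositions (that $X_{i_3}$ separates the $i_1$-side from the $i_2$-side in $H$ when $i_3$ lies on the $i_1$--$i_2$ path in $T$). The paper argues by contradiction: assuming $d_H(v,X_{i_3})>r$, it places $v$ in a bag of $T(i_1)$ and also in a bag of $T(i_2)$ (neither shortest path to $X_{i_1}$ or $X_{i_2}$ can cross $X_{i_3}$) and then invokes the subtree axiom of the original decomposition to force $v\in X_{i_3}$, a contradiction. You instead give a direct argument: concatenate the two shortest $v$--$x_1$ and $v$--$x_2$ paths into a walk lying entirely in $D_r(v,H)$ and apply the separator once to produce a witness $w\in X_{i_3}$ with $d_H(v,w)\le r$. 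Both are essentially the same approach; your direct version avoids the final appeal to the subtree axiom and is, if anything, a touch cleaner.
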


\begin{proof}
It is enough to show that the third condition of tree-decompositions
(see Subsection \ref{our-res}) is fulfilled for $T^{(r)}(H)$. That
is, for all $i,j,k \in I$, if $j$ is on the path from $i$ to $k$ in
$T$, then $X^{(r)}_i \bigcap X^{(r)}_k\subseteq X^{(r)}_j$.
We know that $X_i \bigcap X_k\subseteq X_j$ holds and
need to show that for every vertex $v$ of $H$, $d_H(v,X_i)\leq r$
and $d_H(v,X_k)\leq r$ imply $d_H(v,X_j)\leq r$. Assume, by way of
contradiction,  that for some integer $r>0$ and for some vertex $v$
of $H$, $d_H(v,X_j)> r$ while $d_H(v,X_i)\leq r$  and
$d_H(v,X_k)\leq r$.

Consider the original tree-decomposition $T(H)$. It is known
\cite{Diestel00} that if $ab$ ($a,b\in I$) is an edge of the tree
$T=(I,F)$ of tree-decomposition $T(H)$, and $T_a$, $T_b$ are the
subtrees of $T$ obtained after removing edge $ab$ from $T$, then
$S=X_a\cap X_b$ separates in $H$ vertices belonging to bags of $T_a$
but not to $S$ from vertices belonging to bags of $T_b$ but not to
$S$. We will use
this nice separation property. 

Let $T\setminus \{j\}$ be the forest obtained from $T$ by removing
node $j$, and let $T(i)$ and $T(k)$ be the trees from this forest
containing nodes $i$ and $k$, respectively. Clearly, $T(i)$ and
$T(k)$ are disjoint. The above separation property and inequalities
$d_H(v,X_i)\leq r < d_H(v,X_j)$ ensure that the vertex $v$ belongs
to a node (a bag) of $T(i)$ ($X_j$ cannot separate in $H$ vertex $v$
from a vertex $x_i$ of $X_i$ with $d_H(v,X_i)=d_H(v,x_i)$ since
otherwise $d_H(v,X_i)> d_H(v,X_j)$ will hold). Similarly,
inequalities $d_H(v,X_k)\leq r < d_H(v,X_j)$ and the above
separation property guaranty that the vertex $v$ belongs to a node
of $T(k)$. But then, the third condition of tree-decompositions says
that $v$ must also belong to the bag $X_j$ of $T(H)$. The latter,
however, is in a contradiction with the assumption that $d_H(v,X_j)>
r\geq 0$. \qed
\end{proof}


Now we can prove the main lemma of this section.

\begin{lemma}\label{lem:spanner-ktb}
 If a graph $G$ admits a $t$-spanner with tree-width $k-1$, then $\tb_{k}(G) \leq \lceil t/2 \rceil$.
\end{lemma}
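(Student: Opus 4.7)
\medskip
\noindent\textbf{Proof proposal.} The plan is to start from a width-$(k-1)$ tree-decomposition of the spanner $H$, expand each bag into its $r$-neighborhood in $H$ for $r := \lceil t/2 \rceil$ using Lemma~\ref{lem:decomp-exp}, and then verify that the resulting object is in fact a tree-decomposition of $G$ whose $k$-breadth (measured in $G$) is at most $r$.

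\medskip
\noindent\emph{Step 1: set up.} Fix a tree-decomposition $T(H)=(\{X_i\mid i\in I\},T=(I,F))$ of the spanner $H$ with $\max_i|X_i|\leq k$ (it exists since $\tw(H)=k-1$). For each $i\in I$ let $X_i^{(r)}:=D_r(X_i,H)=\bigcup_{x\in X_i}D_r(x,H)$. By Lemma~\ref{lem:decomp-exp}, $T^{(r)}(H):=(\{X_i^{(r)}\mid i\in I\},T=(I,F))$ is a tree-decomposition of $H$ and, since $|X_i|\leq k$, each $X_i^{(r)}$ is covered by at most $k$ disks of $H$ of radius $r$ centered at the vertices of $X_i$.

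\medskip
\noindent\emph{Step 2: $T^{(r)}(H)$ is a tree-decomposition of $G$.} Condition (1) is immediate since $V(H)=V(G)$ and $X_i\subseteq X_i^{(r)}$. Condition (3) is purely a statement about the bags and the tree $T$, so it is inherited from the fact that $T^{(r)}(H)$ is a tree-decomposition of $H$. The only nontrivial point is condition (2): for each edge $uv\in E(G)$ we must exhibit a bag of $T^{(r)}(H)$ containing both $u$ and $v$. This is where the spanner hypothesis is used. Since $uv\in E(G)$ we have $d_H(u,v)\leq t\cdot d_G(u,v)=t$, so there is a $u$--$v$ path in $H$ of length $\ell\leq t$; let $w$ be its midpoint, so that $d_H(u,w)\leq\lceil t/2\rceil=r$ and $d_H(v,w)\leq\lceil t/2\rceil=r$. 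Pick any bag $X_i$ of $T(H)$ containing $w$. Then $u,v\in D_r(w,H)\subseteq X_i^{(r)}$, as required.

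\medskip
\noindent\emph{Step 3: bounding the $k$-breadth in $G$.} For each bag $X_i^{(r)}$, let $C_i:=X_i=\{x_1,\dots,x_{|X_i|}\}$ (padded arbitrarily to size $k$ if $|X_i|<k$). Since $H$ is a spanning subgraph of $G$, $d_G(u,v)\leq d_H(u,v)$ for all $u,v\in V$, and hence $D_r(x,H)\subseteq D_r(x,G)$ for every $x\in V$. Therefore
\[
X_i^{(r)}\;=\;\bigcup_{x\in X_i}D_r(x,H)\;\subseteq\;\bigcup_{j=1}^{k}D_r(x_j,G),
\]
which shows that the $k$-breadth of $T^{(r)}(H)$ with respect to $G$ is at most $r=\lceil t/2\rceil$. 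Combining Steps 2 and 3 yields $\tb_k(G)\leq \lceil t/2\rceil$.

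\medskip
\noindent\emph{Where the work sits.} The only delicate step is Step~2, condition~(2): edges of $G$ that are \emph{not} edges of $H$ need to be reabsorbed into some expanded bag, and the spanner inequality $d_H(u,v)\leq t$ together with the choice $r=\lceil t/2\rceil$ is exactly what makes a midpoint argument work. The third condition for $G$ does not need to be reproved (it is the same statement about the bag family as for $H$), and the translation from $H$-radii to $G$-radii in Step~3 is free because $H\subseteq G$.
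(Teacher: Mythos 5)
Your proposal is correct and follows essentially the same route as the paper: expand the width-$(k-1)$ bags of a tree-decomposition of $H$ by $r=\lceil t/2\rceil$ via Lemma~\ref{lem:decomp-exp}, use the midpoint of a short $H$-path to place each edge of $G$ in an expanded bag, and pass from $H$-disks to $G$-disks using $d_G\leq d_H$. The only cosmetic difference is that you present the three tree-decomposition conditions in a slightly different order than the paper, and you note explicitly that condition (3) is purely combinatorial and hence inherited; the substance is the same.
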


\begin{proof}
Let $H$ be a $t$-spanner of $G$ with $\tw(G)=k-1$ and
$T(H)=(\{X_i|i\in I\},T=(I,F))$ be a tree-decomposition of $H$ of
width $k-1$. We claim that $T(G):=T^{(\lceil t/2
\rceil)}(H):=(\{X^{(\lceil t/2 \rceil)}_i|i\in I\},T=(I,F))$ is a
tree-decomposition of $G$ with $k$-breadth $\leq \lceil t/2 \rceil$.

By Lemma \ref{lem:decomp-exp}, $T^{(\lceil t/2
\rceil)}(H):=(\{X^{(\lceil t/2 \rceil)}_i|i\in I\},T=(I,F))$ is a
tree-decomposition of $H$ with $k$-breadth $\leq \lceil t/2 \rceil$.
Hence, the first and the third conditions of tree-decompositions
hold for $T(G)$.  For every pair $u,v$ of vertices of $G$,
$d_G(u,v)\leq d_H(u,v)$. Therefore, every disk $D_{\lceil t/2
\rceil}(x,H)$ of $H$ is contained in a disk $D_{\lceil t/2
\rceil}(x,G)$ of $G$.  This implies that  every  bag of $T(G)$ is
covered by at most $k$ disks of $G$ of radius at most $\lceil t/2
\rceil$ each, i.e.,
$$X^{(\lceil t/2 \rceil)}_i=D_{\lceil t/2
\rceil}(X_i,H)=\bigcup_{x\in X_i} D_{\lceil t/2
\rceil}(x,H)\subseteq \bigcup_{x\in X_i} D_{\lceil t/2
\rceil}(x,G).$$

We need only to show additionally that each edge $uv$ of $G$ belongs
to some bag of $T(G)$. Since $H$ is a $t$-spanner of $G$,
$d_H(u,v)\leq t$ holds. Let $x$ be a middle vertex of a shortest
path connecting $u$ and $v$ in $H$. Then, both $u$ and $v$ belong to
the disk $D_{\lceil t/2 \rceil}(x,H)$. Let $X_i$ be a bag of $T(H)$
containing vertex $x$. Then, both $u$ and $v$ are contained in
$X^{(\lceil t/2 \rceil)}_i$, a bag of $T(G)$.  \qed
 \end{proof}

\subsection{Consequences}
Now we give two implications of the above results for the class of
graphs admitting (multiplicative) $t$--spanners  with tree-width
$k-1$. They are direct consequences of Lemma ~\ref{lem:spanner-ktb},
Theorem \ref{tm:spanner-anyk} and Theorem ~\ref{tm:system-anyk}.

\begin{theorem}
Let $G$ be a graph with $n$ vertices and $m$ edges having a
(multiplicative) $t$--spanner with tree-width $k-1$. Then, $G$
admits an additive $(2\lceil{t/2}\rceil(1+\log_2 n))$--spanner with
at most $O(k n\log n)$ edges constructible in polynomial time for
every fixed $k$. 
\end{theorem}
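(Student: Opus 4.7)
The plan is essentially to observe that the theorem is a direct composition of two results already established in the paper, namely Lemma~\ref{lem:spanner-ktb} (which converts the hypothesis on $G$ into a bound on its $k$-tree-breadth) and Theorem~\ref{tm:spanner-anyk} (which produces a sparse additive spanner from a bound on $k$-tree-breadth). So the proof will be short and primarily a bookkeeping argument, with the only substantive point being a check that the promised polynomial-time algorithm does not require knowing the $t$-spanner $H$ or its tree-decomposition as input.

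First, I would instantiate Lemma~\ref{lem:spanner-ktb} with the given $t$-spanner $H$ of tree-width $k-1$, concluding $\tb_k(G) \leq \lceil t/2\rceil$. Then I would feed the bound $\rho := \lceil t/2 \rceil$ into Theorem~\ref{tm:spanner-anyk}, which immediately yields an additive $(2\lceil t/2\rceil(1+\log_2 n))$-spanner of $G$ with at most $O(kn\log n)$ edges. The stretch and edge bounds in the statement then match term-for-term.

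The step worth double-checking is the constructibility claim, since the hypothesis only asserts existence of some $t$-spanner $H$ of tree-width $k-1$ without giving us $H$ or a corresponding tree-decomposition. Here I would invoke the fact (established in the construction underlying Theorem~\ref{tm:spanner-anyk}) that the algorithm operates directly on $G$: it repeatedly computes balanced $\mathbf{D_r^k}$-separators of minimum radius using Proposition~\ref{prop:find-sep-tbk} in $O(n^k m)$ time per call, and builds the hierarchical tree $\cH(G)$ together with its local subtrees in total time $O((kn)^{k+2}\log^{k+1} n)$. The only role played by the hypothesis $\tb_k(G)\leq \lceil t/2\rceil$ is to guarantee (via Lemma~\ref{disks_sep} and Lemma~\ref{lem:hereditary-tbk}) that each balanced separator produced along the recursion has radius at most $\lceil t/2 \rceil$, which is exactly the analysis used to bound the stretch in Lemma~\ref{lm:distT2}. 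Thus no explicit knowledge of $H$ or of a tree-decomposition of $G$ with $k$-breadth $\leq \lceil t/2\rceil$ is needed at runtime; the certificate $H$ only enters through Lemma~\ref{lem:spanner-ktb} to justify the existence of such a decomposition.

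The main (and essentially only) obstacle, if there is one, is the usual caveat that finding balanced $\mathbf{D_r^k}$-separators is done by brute-force enumeration over $k$-tuples of centers, so the running time is polynomial only for fixed $k$; this matches the statement of the theorem. Combining these pieces gives the claimed additive spanner with $O(kn\log n)$ edges in polynomial time for every fixed $k$.
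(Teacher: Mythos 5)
Your proof is correct and matches the paper's intended argument: the theorem is stated there as a direct consequence of Lemma~\ref{lem:spanner-ktb} combined with Theorem~\ref{tm:spanner-anyk}, exactly as you have composed them. Your additional remark that the algorithm runs directly on $G$ via balanced $\mathbf{D_r^k}$-separator computations, without needing $H$ or any tree-decomposition as part of the input, is a valid and worthwhile clarification that the paper makes only implicitly in Subsections 4.1--4.3.
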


\begin{theorem}
Let $G$ be a graph with $n$ vertices and $m$ edges having a
(multiplicative) $t$--spanner with tree-width $k-1$. Then,  $G$
admits a system $\cT(G)$ of at most $k(1+\log_2 n)$ collective
additive tree $(2\lceil{t/2}\rceil(1+\log_2 n))$--spanners
constructible in polynomial time for
every fixed $k$. 
\end{theorem}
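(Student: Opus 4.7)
The plan is to obtain the theorem as a direct corollary by feeding Lemma \ref{lem:spanner-ktb} into Theorem \ref{tm:system-anyk}. Concretely, I would first invoke Lemma \ref{lem:spanner-ktb} with the hypothesis that $G$ admits a (multiplicative) $t$-spanner $H$ of tree-width $k-1$; this yields $\tb_k(G)\leq \lceil t/2\rceil$. I would then set $\rho:=\lceil t/2\rceil$ and apply Theorem \ref{tm:system-anyk} to $G$ to obtain a system $\cT(G)$ of at most $k(1+\log_2 n)$ spanning trees with the property that for every pair $x,y\in V(G)$ some $T\in\cT(G)$ satisfies $d_T(x,y)\leq d_G(x,y)+2\rho(1+\log_2 n)=d_G(x,y)+2\lceil t/2\rceil(1+\log_2 n)$.

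For the runtime claim, I would emphasize that the proof does not need to have the $t$-spanner $H$ or a tree-decomposition of $H$ on hand: the entire hierarchical construction of Section~\ref{sec:any-k} acts directly on $G$ via balanced $\mathbf{D_r^k}$-separators. By Proposition \ref{prop:find-sep-tbk}, such a separator of minimum radius can be produced in $O(n^k m)$ time, and the proof of Lemma \ref{disks_sep} together with Lemma \ref{lem:spanner-ktb} guarantees that the radius found will not exceed $\lceil t/2\rceil$. Hence the hierarchical tree $\cH(G)$ and the associated $k$ local BFS-subtrees per node are built in the $O((kn)^{k+2}\log^{k+1} n)$ time from Subsection~\ref{subsec:H(G)}, and the per-level Kruskal-style extensions take an additional $O(m)$ each, keeping the total time polynomial for every fixed $k$.

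The only subtlety worth flagging is that Lemma \ref{lem:spanner-ktb} is an existence statement about $\tb_k(G)$ and does not hand us a witnessing tree-decomposition; the point is that none is needed. All of Section~\ref{sec:any-k}'s machinery (balanced $\mathbf{D_r^k}$-separator search, contraction to form $G_i^+$, Lemma \ref{lem:hereditary-tbk}, the unfolding argument of Lemma \ref{lem:closest-center-anyk}, Lemma \ref{lm:distT2}) is phrased in terms of an upper bound on $\tb_k$ and runs without ever constructing a tree-decomposition. So the bridging Lemma \ref{lem:spanner-ktb} supplies the numerical bound $\rho\leq \lceil t/2\rceil$ that we plug into Theorem \ref{tm:system-anyk}, and no further work is required.

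The main (and essentially only) obstacle is therefore a bookkeeping one: verifying that the decomposition algorithm really does terminate with radii bounded by $\lceil t/2\rceil$ without a priori access to a certifying tree-decomposition. This follows because Proposition \ref{prop:find-sep-tbk} always returns the \emph{minimum} radius over all balanced $\mathbf{D_r^k}$-separators, Lemma \ref{lem:hereditary-tbk} transfers the $k$-tree-breadth bound to each child graph $G_i^+$, and Lemma \ref{lem:spanner-ktb} ensures the bound $\lceil t/2\rceil$ propagates through the recursion. With this observation the theorem follows immediately.
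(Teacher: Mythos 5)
Your proposal is correct and matches the paper's own argument exactly: the paper states that this theorem is a direct consequence of Lemma~\ref{lem:spanner-ktb} and Theorem~\ref{tm:system-anyk}, which is precisely the composition you carry out (with $\rho=\lceil t/2\rceil$). Your additional remarks about the algorithm running directly on $G$ without a certifying tree-decomposition are also consistent with the paper's setup in Section~\ref{sec:any-k}.
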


\section{Concluding Remarks and Open Problems} \label{sec:concl}
Using Robertson-Seymour's tree-decomposition of graphs, we described
a necessary condition for a graph to have a multiplicative
$t$-spanner of tree-width $k$ (in particular, to have a
multiplicative tree $t$-spanner, when $k=1$). As we have mentioned
earlier, this necessary condition is far from being sufficient. The
following interesting problem remains open.
\begin{itemize}\vspace*{-2mm}
  \item Does there exist a clean ``if and only if" condition under which a graph admits
  a multiplicative (or, additive) $t$-spanner of tree-width $k$ (in particular, admits
  a multiplicative (or, additive) tree $t$-spanner ($k=1$ case))?
\end{itemize}\vspace*{-2mm}
That necessary condition was very useful in demonstrating that, for
every fixed $k$, there is a polynomial time algorithm that, given an
$n$-vertex graph $G$ admitting a multiplicative $t$-spanner with
tree-width $k$, constructs a system of at most $(k+1)(1+ \log_2 n)$
collective additive tree $O(t\log n)$-spanners of $G$. In
particular, when $k=1$, we showed that there is a polynomial time
algorithm that, given an $n$-vertex graph $G$ admitting a
multiplicative tree $t$-spanner, constructs a system of at most
$\log_2 n$ collective additive tree $O(t\log n)$-spanners of $G$.
Can these results be improved?
\begin{itemize}\vspace*{-2mm}
  \item Does a polynomial time algorithm exist that, given an $n$-vertex
graph $G$ admitting a multiplicative tree $t$-spanner, constructs a
system of $O(1)$ collective additive tree $O(t)$-spanners of $G$?
  \item Does a polynomial time algorithm exist that, given an $n$-vertex
graph $G$ admitting a multiplicative $t$-spanner with tree-width
$k$, constructs a system of $O(k)$ collective additive tree
$O(t)$-spanners of $G$?
\end{itemize}\vspace*{-2mm}
As we have mentioned earlier, an interesting particular question
whether a multiplicative tree spanner can be turned into an (one)
additive tree spanner with a slight increase in the stretch is
(negatively) settled already in \cite{EmekP04}.


Two more interesting challenging questions we leave for future
investigation.
\begin{itemize}\vspace*{-2mm}
  \item Is there any polynomial time algorithm which, given a graph
admitting a system of at most $\mu$ collective tree $t$-spanners,
constructs a system of at most $\alpha(\mu,n)$ collective tree
$\beta(t,n)$-spanners, where $\alpha(\mu,n)$ is $O(\mu)$ $($or
$O(\mu\log n))$ and $\beta(t,n)$ is $O(t)$ $($or $O(t\log n))$?
   \item Is there a polynomial time algorithm that, for every unweighted
graph $G$ admitting a $t$-spanner of tree-width $k$, constructs a
$(O(k\log n)t)$-spanner with tree-width at most $k$?
\end{itemize}\vspace*{-2mm}

\commentout{

\definef{CHECK: Find disk separator  with minimum $r$. If $r$ is big
then $\rho$ is big, too ($r\leq \rho$). Hence answer is NO. If $r$
is small then continue decomposing and building.   ... }

}

\end{document}